\newenvironment{tremark}[1][]{\vspace{-\lastskip}\par \addvspace{\thmvspace}\begin{rem}[#1] \rm
  }{\end{rem}\par\addvspace{\thmvspace}}
\newtheorem{lemdefn}[thm]{Lemma and Definition}
\newenvironment{lemmadefn}{\vspace{-\lastskip}\par\addvspace{\thmvspace}\begin{lemdefn}}{\end{lemdefn}\par\addvspace{\thmvspace}}
\newcommand{\Land}{\bigwedge}
\newcommand{\zeroone}{{0\hspace{1pt}\text{-}1}}
\newcommand{\into}{\hookrightarrow}
\newcommand{\by}[1]{\text{(#1)}}
\newcommand{\Sem}[1]{\llbracket #1\rrbracket}
\newcommand{\Rat}{\mathbb{Q}}
\newcommand{\Set}{\mathsf{Set}}
\newcommand{\Op}{\mathsf{op}}
\newcommand{\id}{\mathsf{id}}
\newcommand{\Struct}{\mathcal{M}}
\newcommand{\Lang}{\mathcal{L}}
\newcommand{\form}{\mathcal{F}}
\newcommand{\Axioms}{\mathcal{A}}
\newcommand{\Prop}{\mathsf{Prop}}
\newcommand{\nondash}[1][\Lang]{\vdash_{#1}^{\zeroone}}
\newcommand{\ldash}{\vdash_\Lang}
\newcommand{\pow}{\mathcal{P}}
\newcommand{\contrapow}{\mathcal{Q}}
\newcommand{\neighb}{\mathcal{N}}
\newcommand{\ultra}{\mathcal{U}}
\newcommand{\Dist}{\mathcal{D}}
\DeclareMathAlphabet{\pazocal}{OMS}{zplm}{m}{n}
\DeclareMathAlphabet{\mathcal}{OMS}{cmsy}{m}{n}
\begin{document}

\begin{frontmatter}
  \title{Non-iterative Modal Logics are Coalgebraic} \author{Jonas Forster \and Lutz Schröder}%\footnote{You can put your email address or grant
  % acknowledgement as a footnote, if you wish.}
  \address{Friedrich-Alexander-Universität Erlangen-Nürnberg}
  % \\ Address \\ Address} \author{Second Author} \address{Affiliation
  % \\ Address \\ Address}
  
  \begin{abstract} A modal logic is \emph{non-iterative} if it can be
    defined by axioms that do not nest modal operators, and
    \emph{rank-1} if additionally all propositional variables in
    axioms are in scope of a modal operator. It is known that every
    syntactically defined rank-1 modal logic can be equipped with a
    canonical coalgebraic semantics, ensuring soundness and strong
    completeness. In the present work, we extend this result to
    non-iterative modal logics, showing that every non-iterative modal
    logic can be equipped with a canonical coalgebraic semantics
    defined in terms of a copointed functor, again ensuring soundness
    and strong completeness via a canonical model construction. Like
    in the rank-1 case, the canonical coalgebraic semantics is
    equivalent to a neighbourhood semantics with suitable frame
    conditions, so the known strong completeness of non-iterative
    modal logics over neighbourhood semantics is implied. As an
    illustration of these results, we discuss deontic logics with
    factual detachment, which is captured by axioms that are
    non-iterative but not rank~1.
  \end{abstract}
  
  \begin{keyword}
    Coalgebraic logic, neighbourhood semantics, strong completeness,
    canonical models, deontic logic
  \end{keyword}
\end{frontmatter}

\section{Introduction}

Modal frame axioms are called \emph{non-iterative} if they do not nest
modal operators, and \emph{rank-1} if additionally all occurrences of
propositional variables are under modal operators; logics are
non-iterative or rank-1, respectively, if they can be axiomatized by
axioms of the correspondingly restricted shape. Prominent examples
include the $\text{K}$-axiom $\Box (a\to b)\to \Box a\to\Box b$, which
is rank-1, and the $\text{T}$-axiom $\Box a \rightarrow a$, which is
non-iterative. Previous work in coalgebraic
logic~\cite{DBLP:journals/logcom/SchroderP10} shows that every rank-1
modal logic is strongly complete over a canonical coalgebraic
semantics that can be seen to coincide with neighbourhood
semantics. In the present paper, we extend this result to
non-iterative logics: We show that every (syntactically given)
non-iterative modal logic is strongly complete over a canonical
coalgebraic semantics, which again turns out to coincide with
neighbourhood semantics, so that the known result that non-iterative
logics are complete over their neighbourhood
semantics~\cite{Surendonk97} is implied.

Generally, the semantic framework of \emph{coalgebraic
  logic}~\cite{DBLP:journals/cj/CirsteaKPSV11} supports general
proof-theoretic, algorithmic, and meta-theoretic results that can be
instantiated to the logic of interest, cutting out much of the
repetitive labour associated with the iterative process of designing
an application-specific logic. The framework is based on casting
state-based models of various types (e.g.\ relational, probabilistic,
neighbourhood-based, or game-based) as coalgebras for a functor, the
latter to be thought of as encapsulating the structure of the
successors of a state.

It has been shown that the modal logic of the class of \emph{all}
coalgebras for a given functor can always be axiomatized in
rank~1~\cite{DBLP:journals/jlp/Schroder07}. Conversely, as indicated
above, every rank-1 logic has a coalgebraic
semantics~\cite{DBLP:journals/logcom/SchroderP10}; roughly speaking,
rank-1 axioms can be absorbed into a functor. The coalgebraic
treatment of \emph{non-iterative} axioms thus requires a
generalization to copointed functors, to be thought of as
incorporating the present state as well as its successors. Indeed it
turns out that to obtain strong completeness, it is useful to
generalize further to \emph{weakly} copointed functors, in which the
present state is virtualized as an ultrafilter, and subsequently
restrict to \emph{proper} coalgebras of such weakly copointed
functors, in which all these virtual points actually materialize. Our
main result thus states more precisely that every non-iterative logic
is sound and strongly complete for the class of proper coalgebras of a
canonical weakly copointed functor we construct; strong completeness
w.r.t.\ a canonical copointed subfunctor then follows. As indicated
above, this result translates back to imply strong completeness
w.r.t.\ neighbourhood semantics as originally proved by
Surendonk~\cite{Surendonk97}. Our proof differs quite markedly from
Surendonk's; while the latter makes central use of first-order model
theory (specifically, compactness), we avoid compactness and instead
work with solution theorems in Boolean algebra. We complement strong
completeness of the canonical coalgebraic semantics with an (easier)
result showing that the modal logic of a copointed functor can always
be equipped with a weakly complete non-iterative axiomatization,
justifying the slogan that non-iterative logics are precisely the
logics of copointed functors.

We illustrate the use of this result on certain deontic logics that on
the one hand avoid the deontic explosion problem (ruling out
normality, and hence Kripke semantics) and on the other hand allow for
\emph{factual detachment}, embodied in properly non-iterative
axioms~\cite{DBLP:journals/japll/Strasser11}. The only known semantics
for such logics is neighbourhood semantics. Weak completeness and the
finite model property follow from the previous results by
Lewis~\cite{DBLP:journals/jphil/Lewis74}, alternatively by a concrete
proof given in the online appendix
of~\cite{DBLP:journals/japll/Strasser11}. Moreover, the cited result
by Surendonk~\cite{Surendonk97} implies strong completeness. Our
present results reprove strong completeness, and provide a
coalgebraization of the semantics in terms of a copointed functor.

This paper is a full version of a conference
abstract~\cite{ForsterSchroder20}; we note that the conference
abstract misses reference~\cite{Surendonk97}.

\paragraph{Organization} We recall the syntactic notion of
non-iterative modal logic~\cite{DBLP:journals/jphil/Lewis74} in
Section~\ref{sec:non-iterative}. In Section~\ref{sec:semantics}, we
recall the semantic framework of coalgebraic logic, and discuss
copointed and weakly copointed functors. Our main technical tool is
the 0-1-step logic of a non-iterative coalgebraic logic, introduced in
Section~\ref{sec:zeroone-logic}. We establish the easier direction of
the relationship between non-iterative modal logics and coalgebraic
modal logic in Section~\ref{sec:fmp}, where we show that the modal
logic of coalgebras for a copointed functor is always non-iterative
(and has the finite model property). Our main result, which states
that conversely, every non-iterative modal logic is strongly complete
over a canonical coalgebraic semantics, is shown in
Sections~\ref{sec:can-struct} and~\ref{sec:str-compl}. In
Section~\ref{sec:deontics}, we present applications to deontic logics.
\iffull Some proofs are deferred to Appendix~\ref{app:proofs}. \else
Proofs that are omitted or only sketched can be found in the full
version~\cite{ForsterSchroder20arXiv}.\fi

\section{Non-iterative Modal Logics}\label{sec:non-iterative}

A \textit{(modal) similarity type} $\Lambda$ is a set of modal
operators with associated finite arity. The set $\form(\Lambda)$ of
\emph{$\Lambda$-formulae} is given by the grammar
\begin{equation*}
\phi_1 , \dots , \phi_n ::= \bot \mid \neg \phi_1 \mid \phi_1 \wedge
\phi_2 \mid L(\phi_1, \dots , \phi_n)
\end{equation*}
where $ L \in \Lambda $ has arity $n$.  Additional Boolean operators
$\rightarrow$, $\leftrightarrow$, $ \vee $ and $ \top $ can then be
defined as usual. We denote by~$|\phi|$ the \emph{size} of a
formula~$\phi$, measured as the number of subformulae of~$\phi$. The
grammar does not include propositional atoms as a separate syntactic
category; however, these can be cast as nullary modalities. We thus
distinguish propositional atoms from propositional variables, which
are used to formulate axioms and rules.

\begin{definition} Let $\Prop(V)$ denote the set of propositional
  formulae~$\phi$ over a given set~$V$ (i.e.\
  $\phi::=\bot\mid a\mid\neg\phi\mid\phi_1\land\phi_2$, with
  $a$ ranging over~$V$), and put
  \begin{equation*}
    \Lambda(V)=\{ L(a_1, \ldots, a_n) \mid L \in \Lambda \text{ n-ary}, a_1,
    \ldots, a_n \in V\}.
  \end{equation*}
  The elements of~$V$ are typically thought of as propositional
  variables.  A \textit{one-step formula} or \emph{rank-1 formula}
  over~$V$ is a formula in $\Prop(\Lambda(\Prop(V)))$, and a
  \textit{0-1-step formula} or a \textit{non-iterative formula}
  over~$V$ is a formula in $\phi \in \Prop(\Lambda(\Prop(V)) \cup
  V)$. In words, a formula $\phi$ over~$V$ is non-iterative if it does
  not contain nested modal operators, and a non-iterative
  formula~$\phi$ is rank-1 if additionally every variable in~$\phi$
  lies under a modal operator. We generally refer to maps of the form
  $\sigma\colon V\to Z$ that we use to replace entities of type~$V$
  with entities of type~$Z$ in formulae as \emph{$Z$-substitutions
    on~$V$}, and write $\phi\sigma$ for the result of
  applying~$\sigma$ to a formula~$\phi$ over~$V$.
\end{definition}
\noindent As indicated previously, the axiom
$\Box(a\to b)\to\Box a\to\Box b$ (with $a,b$ propositional variables)
is rank-1, and $\Box a\to a$ is non-iterative. We define modal logics
$\Lang = (\Lambda, \Axioms)$ syntactically by a similarity
type~$\Lambda$ and a set~$\Axioms$ of \emph{axioms} (in the given
similarity type), determining the set of derivable formulae via the
usual proof system as recalled below.  A logic is \emph{non-iterative}
(\emph{rank-1}) if all its axioms are non-iterative (rank-1). Given a
logic $\Lang = (\Lambda, \Axioms)$, we say that a
$\Lambda$-formula~$\psi$ is \emph{derivable}, and write $\ldash \psi$,
if~$\psi$ can be derived in finitely many steps via the following
rules:

\begin{gather*}(Ax)\;\frac{}{\psi\sigma}(\psi \in \Axioms,
  \sigma \text{ an }\form(\Lambda)\text{-substitution})\\
  (P)\;\frac{\phi_1\quad \ldots\quad \phi_n}{\psi}(\{\phi_1, \ldots, \phi_n\}
  \vdash_{PL} \psi)\quad\; (C)\;\frac{\phi_1 \leftrightarrow \psi_1\quad
    \ldots\quad \phi_n \leftrightarrow \psi_n}{L(\phi_1, \ldots, \phi_n)
    \leftrightarrow L(\psi_1, \ldots, \psi_n)}
\end{gather*}
where by $\{\phi_1, \ldots, \phi_n\} \vdash_{PL} \psi$ we indicate
that~$\psi$ is derivable from assumptions $\phi_1, \ldots, \phi_n$ by
propositional reasoning (e.g.\ propositional tautologies and modus
ponens). The last rule is known as the \emph{congruence rule} or
\emph{replacement of equivalents}. For a set~$\Phi$ of
$\Lambda$-formulae, we write
$\Phi \vdash_{\Lang} \psi$ if
$\ldash (\phi_1 \wedge \ldots \wedge \phi_n)\rightarrow \psi$ for some
$\phi_1, \ldots, \phi_n \in \Phi$. We say that~$\Phi$ is
\emph{$\Lang$-consistent}, or just \emph{consistent}, if
$\Phi \not \ldash \bot$. A formula~$\phi$ is \emph{consistent} if
$\{\phi\}$ is consistent.

\begin{remark}\label{rem:rules}
  Non-iterative logics can alternatively be presented in terms of
  proof rules: A \emph{non-iterative rule} $\phi/\psi$ over~$V$
  consists of a \emph{premiss} $\phi\in\Prop(V)$ and a
  \emph{conclusion} $\psi\in\Prop(\Lambda(V)\cup V)$. There are mutual
  conversions between the two formats, the conversion from axioms to
  rules being straightforward, and the conversion from rules to axioms
  being based on Boolean unification: Given a non-iterative rule
  $\phi/\psi$, pick a \emph{projective
    unifier}~\cite{DBLP:journals/logcom/Ghilardi97} of~$\phi$, i.e.\ a
  substitution~$\sigma$ such that $\phi\sigma$ and
  $\phi\to (a\leftrightarrow\sigma(a))$, for all variables~$a$
  in~$\psi$, are tautologies, and replace $\phi/\psi$ with the axiom
  $\psi\sigma$; further details are as in the rank-1
  case~\cite{DBLP:journals/jlp/Schroder07}.
\end{remark}

\begin{remark}
  As the syntax of the logic itself does not include propositional
  variables, the above system also does not derive formulae with
  variables. If desired, propositional variables in formulae can be
  emulated by introducing fresh propositional atoms (treated as
  nullary modal operators as indicated above). In particular, if
  substitution is made to apply to these fresh propositional atoms, then
  the standard substitution rule $\phi/\phi\sigma$ becomes admissible.
\end{remark}

\section{Coalgebraic Semantics}\label{sec:semantics}

We next recall basic definitions in universal
coalgebra~\cite{DBLP:journals/tcs/Rutten00} and coalgebraic
logic~\cite{DBLP:journals/cj/CirsteaKPSV11}, which will form the
underlying semantic framework for our main result. We briefly recall
requisite categorical definitions; some familiarity with basic
category theory will nevertheless be helpful (e.g.~\cite{Awodey10}).

The underlying principle of (set-based) universal coalgebra is to
encapsulate a type of state-based systems as an endofunctor
$T: \Set \rightarrow \Set$ (briefly called a \emph{set functor}) where
$\Set$ is the category of sets and functions. Thus,~$T$ assigns to
each set~$X$ a set~$TX$, and to each map $f\colon X\to Y$ a map
$Tf\colon TX\to TY$, preserving identities and composition. We think
of $TX$ as a type of structured collections over~$X$. A basic example
is the \emph{(covariant) powerset functor}~$\pow$, which assigns to
each set~$X$ its powerset $\pow X$, and to each map~$f\colon X\to Y$
the map $\pow f\colon\pow X\to\pow Y$ that takes direct images, i.e.\
$\pow f(A)=f[A]$ for $A\in\pow X$. The most relevant example for our
present purposes is the \emph{neighbourhood functor}~$\neighb$,
defined as follows. The \emph{contravariant powerset
  functor}~$\contrapow$ is a functor of type $\Set^\Op\to\Set$, i.e.\
reverses the direction of maps; it maps a set~$X$ to its powerset
$\contrapow X=\pow X$, and a map $f\colon X\to Y$ to the preimage map
$\contrapow f\colon\contrapow Y\to\contrapow X$, i.e.\
$\contrapow f(B)=f^{-1}[B]$ for $B\in\contrapow Y$. For any
functor~$F$, we indicate by $F^\Op$ the functor that acts like~$F$ but
on the opposite categories, i.e.\ with arrows reversed in both domain
and codomain. Then, we define $\neighb$ as the composite
\begin{equation*}
  \neighb=\contrapow\circ\contrapow^\Op\colon\Set\to\Set.
\end{equation*}
We think of elements of~$\neighb X$ as neighbourhood systems over~$X$.

Given a functor~$T$, systems are then abstracted as
\emph{$T$-coalgebras} $C = (X, \xi)$ consisting of a set~$X$ of
\emph{states} and a \emph{transition function}
$\xi : X \rightarrow TX$. We think of~$\xi$ as assigning to each
state~$x$ a structured collection~$\xi(x)$ of successors. E.g.\
$\pow$-coalgebras are just Kripke frames, assigning as they do to each
state a set of successors, and $\neighb$-coalgebras are neighbourhood
frames, where each state receives a collection of neighbourhoods.

Modal operators are semantically interpreted by \emph{predicate
  liftings}~\cite{DBLP:journals/ndjfl/Pattinson04,DBLP:journals/tcs/Schroder08}:

\begin{definition} An $n$-ary \emph{predicate lifting} for a set
  functor $T$ is a natural transformation
  $\lambda : \contrapow^n \rightarrow \contrapow \circ T^{op}$,
  with~$\contrapow$ being the contravariant powerset functor recalled
  above. So $\lambda$ is a family of functions~$\lambda_X$, indexed
  over all sets~$X$, such that for all $f: X \rightarrow Y$
  and $B_i \subseteq Y$, $i = 1, \ldots, n$,
  $$ \lambda_X(f^{-1}[B_1], \ldots , f^{-1}[B_n]) =
  (Tf)^{-1}[\lambda_Y(B_1, \ldots, B_n)]. $$ A
  \emph{$\Lambda$-structure}
  $\Struct=(T, \llbracket L \rrbracket_{L \in \Lambda})$ for a
  signature $\Lambda$ consists of a functor $T$ and an $n$-ary
  predicate lifting $\llbracket L \rrbracket$ for every $n$-ary modal
  operator $L \in \Lambda$; we say that~$\Struct$ is \emph{based
    on~$T$}. When there is no danger of confusion, we will
  occasionally refer to the entire $\Lambda$-structure just as~$T$.
\end{definition}

\noindent Given a $\Lambda$-structure~$\Struct$ based on $T$, we define the
satisfaction relation $x\models_C\phi$ between states~$x$ in 
$T$-coalgebras $C = (X, \xi)$ and $\Lambda$-formulae~$\phi$ inductively
by
\begin{alignat*}{2}
  &x \not \models_C \bot &\\
  & x \models_C \neg\phi &&\text{ iff } x \not \models_C
                            \phi\\
  & x \models_C \phi \wedge \psi &&\text{ iff } x
                                    \models_C \phi
                                    \text{ and
                                    } x \models_C \psi\\
  & x \models_C L(\phi_1, \ldots \phi_n) &&\text{ iff } \xi(x) \in
                                            \llbracket L
                                            \rrbracket(\llbracket \phi_1
                                            \rrbracket_C, \ldots ,
                                            \llbracket \phi_n
                                            \rrbracket_C)
\end{alignat*}
where we write $\llbracket \phi \rrbracket_C$ (or just $\Sem{\phi}$)
for the \emph{extension} $ \{ x \in X \mid x \models_C \phi \}$
of~$\phi$.
\begin{example}\label{expl:logics}
  \begin{enumerate}[wide]
  \item\label{item:Kripke} As indicated above, Kripke frames are
    coalgebras for the powerset functor~$\pow$. The standard $\Box$
    modality is interpreted over~$\pow$ via the predicate lifting
  \begin{equation*}
    \llbracket \Box \rrbracket _X (A) = \{ B \in \pow X \mid B
    \subseteq A\},
  \end{equation*}
  which in combination with the above definition of the satisfaction
  relation induces precisely the usual semantics
  of~$\Box$.
\item\label{item:prob} \emph{Probabilistic modal
    logic}~\cite{DBLP:journals/iandc/LarsenS91,DBLP:journals/geb/HeifetzM01}
  has unary modal operators $L_p$ indexed over $p \in [0, 1]\cap\Rat$,
  with $L_p\phi$ read `$\phi$ holds with probability at least~$p$
  after the next transition step'. It is interpreted over
  probabilistic transition systems (or Markov chains), which are
  coalgebras for the \emph{discrete distribution functor}~$\Dist$,
  given on sets~$X$ by taking~$\Dist X$ to be the set of discrete
  probability distributions on~$X$. The modal operators are then
  interpreted using the predicate liftings
$$\llbracket L_p \rrbracket_X(A) = \{ \mu \in \Dist X \mid \mu(A) \geq p \}.$$
\item \label{nft} As seen above, \emph{neighbourhood frames} are
  coalgebras for the neighbourhood functor~$\neighb$. We capture the
  usual neighbourhood semantics of the~$\Box$ modality by the
  predicate lifting
  \begin{equation*}
    \Sem{\Box}_X(A)=\{N\in\neighb X\mid A\in N\},
  \end{equation*}
  that is, a state satisfies $\Box\phi$ iff the extension~$\Sem{\phi}$
  is a neighbourhood of~$x$.  More generally, a
  \emph{$\Lambda$-neighbourhood frame} for a similarity type~$\Lambda$
  is a pair $(X, (\nu_L)_{L\in \Lambda})$ consisting of a set $X$ of
  states and a family of functions
  $\nu_L \colon X \to \pow((\pow X)^n)$ for $L\in \Lambda$ $n$-ary. We
  refer to subsets of $(\pow X)^n$ as \emph{$n$-ary neighbourhood
    systems}, and to their elements as \emph{$n$-ary neighbourhoods};
  if $(A_1,\dots,A_n)\in\nu_L(x)$ for $n$-ary $L\in\Lambda$, then
  $(A_1,\dots,A_n)$ is an \emph{($n$-ary) $L$-neighbourhood
    of~$x$}. Satisfaction of modalized formulae by states $x\in X$ is
  then defined by
  \begin{equation*}
    x \models L(\phi_1, \ldots, \phi_n) \text{ iff } (\llbracket
    \phi_1 \rrbracket, \ldots, \llbracket \phi_n \rrbracket) \in
    \nu_L(x);
  \end{equation*}
  in words, $x\models L(\phi_1, \ldots, \phi_n)$ iff
  $(\llbracket \phi_1 \rrbracket, \ldots, \llbracket \phi_n
  \rrbracket)$ is an $L$-neighbourhood of~$x$.
  $\Lambda$-neighbourhood frames are coalgebras for the functor
  $\neighb_\Lambda$ defined by
  \begin{equation*}
    \neighb_\Lambda=\textstyle\prod_{L\in \Lambda\text{ $n$-ary}} \contrapow \circ ((\contrapow^\Op)^n)
\end{equation*}
  where product and $n$-th power $(-)^n$ are pointwise, i.e.\
  $\neighb_\Lambda X=\textstyle\prod_{L\in \Lambda\text{ $n$-ary}} \contrapow ((\contrapow X)^n)$. The
  corresponding predicate liftings are
  \begin{equation*}
  \llbracket L \rrbracket_X(A_1, \ldots, A_n) = \{(N_L)_{L\in\Lambda} \in\neighb_\Lambda X\mid (A_1, \ldots, A_n) \in N_L\}.
\end{equation*}
\end{enumerate}
\end{example}

\iffalse
Coalgebraic modal logic has the great advantage of being modular, more complex
 functors can be composed from basic concepts to yield a suitable
semantic for the logic at hand.
\begin{example} \emph{Hennessy-Milner logic} extends the transition systems
	that are Kripke frames to \emph{labelled transition systems} $(W, L, R)$,
	where elements of the Relation $R \subseteq W \times L \times W$ are
	additionally labelled with an element from the set of actions $L$. The
	corresponding  functor is $(\pow(-))^L$. For every one of these
	actions $a \in L$ Hennessy-Milner logic has a modal operator $\Box_a$ where
	$\Box_a \phi$ is satisfied by a state $w \in W$ if for all states $y$ with
	$(w, a, y) \in R$ the formula $\phi$ is satisfied by $y$. The corresponding
	predicate lifting is $$\llbracket \Box_a \rrbracket _X (A) = \{ f: L
\rightarrow \pow X \mid f(a) \subseteq A \}$$ \end{example}

\fi
\bigskip

\noindent Since we work with classical negation, we can reduce all
reasoning problems to satisfiability in the usual manner. Given a
$\Lambda$-structure based on~$T$, a formula~$\phi$ is \emph{valid} if
$x\models_C\phi$ for all states~$x$ in $T$-coalgebras $C$, and a
set~$\Phi$ of formulae is \emph{satisfiable} if there exists a
state~$x$ in a $T$-coalgebra $C$ such that $x\models_C\phi$ for all
$\phi\in\Phi$. A formula~$\phi$ is satisfiable if $\{\phi\}$ is
satisfiable. A logic~$\Lang=(\Lambda,\Axioms)$, or just~$\Axioms$, is
\emph{sound} for~$\Struct$ if all $L$-derivable formulae are valid
over~$\Struct$, \emph{weakly complete} if all consistent formulae are
satisfiable (equivalently all valid formulae are derivable), and
\emph{strongly complete} if all consistent sets of formulae are
satisfiable (which is equivalent to completeness w.r.t.\ local
consequence from possibly infinite sets of assumptions.)

It has been shown that coalgebraic modal logics coincide with rank-1
logics. More precisely, for every $\Lambda$-structure~$\Struct$ there
exists a rank-1 logic that is weakly complete
for~$\Struct$~\cite{DBLP:journals/jlp/Schroder07} (strong completeness
cannot be expected as coalgebraic modal logics often fail to be
compact, e.g.\ probabilistic modal logic as described in
Example~\ref{expl:logics}.\ref{item:prob} is not
compact~\cite{DBLP:journals/jlp/Schroder07}). Conversely, given a
rank-1 logic $\Lang=(\Lambda,\Axioms)$, there is a
$\Lambda$-structure~$\Struct$ such that~$\Lang$ is sound and strongly
complete for~$\Struct$~\cite{DBLP:journals/logcom/SchroderP10}; this
$\Lambda$-structure is isomorphic to neighbourhood
semantics. Roughly speaking, rank-1 axioms can be absorbed into the
functor; as a very simple example, the seriality axiom for Kripke
frames, $\neg\Box\bot$, can be captured by replacing the powerset
functor~$\pow$ with the non-empty powerset functor $\pow^\star$, where
$\pow^\star X=\{A\in\pow X\mid A\neq\emptyset\}$.

To cover non-iterative logics, we therefore need additional structure
on the functor that additionally caters for base points: A
\emph{copointed functor} $(T,\varepsilon)$, or just~$T$ when
$\varepsilon$ is clear from the context, consists of a functor~$T$ and
a \emph{copoint}~$\varepsilon$, i.e.\ a natural transformation
$\varepsilon\colon T\to\id$ where $\id$ denotes the identity
functor. Coalgebras $C=(X,\xi)$ for a copointed functor are by default
required to be \emph{proper}, i.e.\
$\varepsilon_X\circ\xi=id_X$. Intuitively, a plain functor encapsulates
only the possible (structured collections of) successors that can be
assigned to a given present state, while a copointed functor
additionally retains the information about the present state itself,
accessed via the copoint; the properness condition
$\epsilon_X\circ\xi$ on coalgebras $(X,\xi)$ of a copointed functor
effectively demands that this information is accurate, i.e.\ applying
the copoint to $\xi(x)$ actually returns the present state~$x$.

The main purpose of the information about the present state included
in~$T$ is to allow imposing relationships between the present point
and its collection of successors. Indeed, every functor~$T$ can be
made copointed by passing to the functor $T\times\id$ (given on
sets~$X$ by $(T\times\id)X=TX\times X$), with $\varepsilon(t,x)=x$; we
refer to copointed functors of this shape as \emph{trivially
  copointed}, as they impose no relationship between the present state
and its collection of successors. Copointed functors can absorb
non-iterative axioms; e.g.\ the modal logic~$\textbf{T}$ is captured
by the copointed functor $T$ given by
$TX=\{(A,x)\in\pow X\times X\mid x\in A\}$\label{page:t-functor} (more
details are given in Section~\ref{sec:zeroone-logic}), which imposes
that the present state is among its own successors; that is, proper
$T$-coalgebras are precisely reflexive Kripke frames. This functor~$T$
is our first example of a non-trivially copointed functor; note that
it is a subfunctor of the trivially copointed functor $\pow\times\id$.

\noindent For purposes of our strong completeness result, we make use
of a relaxed notion of copointed functor:

\begin{definition} A \emph{weakly copointed functor} $(T,\varepsilon)$
  (or just~$T$ when~$\varepsilon$ is clear from the context) consists
  of a functor~$T$ and a \emph{weak copoint}~$\varepsilon$, i.e.\ a
  natural transformation $\varepsilon: T \rightarrow \ultra$, where
  $\ultra$ denotes the (functor part of) the ultrafilter monad. That
  is, $\ultra X$ is the set of ultrafilters on~$X$, and
  $\ultra f(\alpha) = \{ B \subseteq Y \mid f^{-1}[B] \in \alpha\}$
  for $f\colon X\to Y$, $\alpha\in\ultra X$ (so $\ultra$ is a
  subfunctor of the neighbourhood functor~$\neighb$ as in
  Example~\ref{expl:logics}.\ref{nft}). Then, a $T$-coalgebra
  structure $\xi : X \rightarrow TX$ is \emph{proper} if
  $\varepsilon_X \circ \xi = \eta_X $ where
  $\eta: Id \rightarrow \ultra$ is the unit of the ultrafilter monad,
  given by $\eta_X(x) = \dot x=\{A \in \pow X \mid x \in A\}$. Every
  functor~$T$ induces a \emph{trivially weakly copointed functor}
  $T\times\ultra$, with second projection as the weak copoint.
\end{definition}
\noindent Instead of the identity of the present state, a weakly
copointed functor contains only a description of the present state,
which in general may fail to be realized as an actual state. However,
weakly copointed functors relate tightly to copointed functors in the
standard sense:

\begin{lemmadefn}\label{lem:copointed}
  Let $(T,\varepsilon)$ be a weakly copointed functor. Then
  \begin{equation*}
    T_cX=\{t\in TX\mid \varepsilon(t)\text{ principal}\}
  \end{equation*}
  defines a copointed subfunctor of~$T_c$, the \emph{copointed part}
  of~$T$, with copoint~$\varepsilon_c$ defined by
  $\varepsilon_c(t)\in\bigcap\varepsilon(t)$. Moreover, every proper
  $T$-coalgebra $C=(X,\xi)$ factors through the inclusion
  $T_cX\into TX$, inducing a coalgebra $C_c$ for the copointed
  functor~$T_c$. Given a similarity type~$\Lambda$ with assigned
  predicate liftings for~$T$, we obtain predicate liftings for~$T_c$
  by restriction; then, a state~$x\in X$ satisfies the same
  $\Lambda$-formulae in~$C$ as in~$C_c$.
\end{lemmadefn}
\noindent (Recall that an ultrafilter~$\alpha$ is \emph{principal} if
$\bigcap\alpha\neq\emptyset$, and then necessarily
$|\bigcap\alpha|=1$.)

\begin{remark}\label{rem:weak-copoints}
  Indeed, the above lemma implies that weakly copointed functors are
  not strictly required for our current target results, which are all
  formulated over proper coalgebras. We nevertheless do involve them
  in the technical development because of their natural role within
  coalgebraic logic: The 0-1-step logic, in the sense introduced in
  the next section, of the canonical $\Lambda$-structure, which will
  be based on a weakly copointed functor, is strongly complete; this
  would be impossible for any $\Lambda$-structure based on a copointed
  functor. For details, see
  Remark~\ref{rem:strong-01-step-completeness}.
\end{remark}

\begin{remark}\label{rem:one-to-zeroone}
  The standard coalgebraic semantics of rank-1 modal logics as
  recalled above embeds into the copointed setting by converting plain
  functors~$T$ into trivially copointed functors $T\times\id$ or
  trivially weakly copointed functors $T\times\ultra$, with modalities
  interpreted via first projections: Given a $\Lambda$-structure based
  on the functor~$T$, we obtain a $\Lambda$-structure based on the
  trivially copointed functor $T\times\id$ by putting
  \begin{equation*}
    (t,x)\models L(A_1,\dots,A_n)\quad\text{iff}\quad
    t\models L(A_1,\dots,A_n)
  \end{equation*}
  for $A_1,\dots,A_n\subseteq X$ and $(t,x)\in TX\times X$, similarly
  for $T\times\ultra$. Proper coalgebras for $T\times\id$ and proper
  coalgebras for $T\times\ultra$ are both essentially the same as
  (plain) coalgebras for~$T$, and it is easy to see that the
  respective modal semantics over $T$-coalgebras and over proper
  $T\times\id$- or $T\times\ultra$-coalgebras are equivalent.
\end{remark}

\begin{remark}
  The categorical concept of a \emph{comonad} extends the notion of
  copointed functor by additionally assuming an unfolding operation
  $\delta:T\to T\circ T$ (the \emph{comultiplication}) satisfying
  certain equational laws. This amounts to letting $T$ contain
  information about the entire finite-time future development of the
  present state: Iterating~$\delta$, we can extract evolutions of any
  depth~$n$, i.e.\ elements of $T^nX$, from a given element
  of~$TX$. Comonads can thus be employed to capture iterative frame
  conditions such as $\Box a\to\Box\Box a$, with the technical caveat
  that this requires restricting the branching degree of models to
  avoid set-theoretic existence problems. Since the meta-theory of
  iterative frame conditions is in general much less well-behaved than
  that of non-iterative ones (e.g.\ there are modal logics that are
  weakly complete but not strongly complete over neighbourhood
  semantics~\cite{shehtman1999strong}), one should manage expectations
  regarding the perspective of results in comparable generality as the
  present one.
\end{remark}
\section{The 0-1-Step Logic}\label{sec:zeroone-logic}
\noindent An important driving principle of coalgebraic logic is to
reduce metatheoretic properties of a full-blown modal logic with
nested modalities, interpreted over coalgebras, to similar properties
of a much simpler \emph{one-step logic} where formulae feature
precisely one layer of modalities, and are interpreted over structures
that essentially model just one transition step (hence the name). To
cover non-iterative logics, we need to extend this principle to cover
also the current state (besides its successors), arriving at the
\emph{0-1-step logic} of the given modal logic. For readability, we
\emph{restrict the technical development to unary modalities} from now
on; covering higher arities requires no more than additional indexing,
and we continue to use higher arities in the examples.

\paragraph{Syntax and derivations} In formulae of the 0-1-step logic,
we intentionally mix syntax and semantics, replacing propositional
variables by their values in a powerset Boolean algebra. That is,
given a non-iterative logic $\Lang = (\Lambda, \Axioms)$ and a
set~$X$, we take $\Prop(\Lambda(\pow X) \cup \pow X )$ to be the set
of \emph{0-1-step formulae over $\pow X$}, referring to elements
of~$\pow X$ as \emph{(interpreted) propositional atoms}. We denote the
evaluation of a $\Prop(\pow X)$-formula $\phi$ in the Boolean algebra
$\pow X$ by $\llbracket \phi \rrbracket$, and say that~$\phi$ is
\emph{propositionally valid over~$\pow X$} if $\Sem{\phi}=X$. We will
identify occurrences of subformulae~$\phi\in\Prop(\pow X)$ with
$\Sem{\phi}$ when they lie in scope of a modal operator but not
otherwise, i.e.\ on the uppermost level. This evaluation of inner
propositional formulae allows us to omit the modal congruence rule. We
thus define \emph{0-1-step derivability} $\nondash \psi$ of 0-1-step
formulae~$\psi$ inductively by the rules
\begin{gather*}
  \frac{\qquad}{\psi\sigma}\;(\psi \in \Axioms, \sigma \text{ a } \Prop(\pow  X)\text{-substitution})\\
  \frac{\phi_1, \ldots, \phi_n}{\psi}\;(\{\phi_1, \ldots, \phi_n\} \vdash_{PL} \psi)
  \qquad
  \frac{\qquad}{\phi}\;(\phi\in\Prop(\pow X), \Sem{\phi}=X).
\end{gather*}
(Non-iterative rules $\phi/\psi$ as in Remark~\ref{rem:rules}, if
present, are also applied in substituted form: if $\Sem{\phi\sigma}=X$
for a $\Prop(\pow X)$-substitution~$\sigma$, then derive
$\psi\sigma$.) That is, $\nondash \psi$ iff $\psi$ is
propositionally entailed by
\begin{equation*}
  \{\psi\sigma \mid \psi \in \Axioms, \sigma \text{ a }
  \Prop(\pow X)\text{-substitution}\}\cup\{\phi\mid\phi\in\Prop(\pow X),\Sem{\phi}=X\}.
\end{equation*}
We write $\Phi \nondash \psi$ if
$\nondash (\phi_1 \wedge \ldots \wedge \phi_n )\rightarrow \psi$ for
some $\phi_1, \ldots, \phi_n \in \Phi$.  A set~$\Phi$ of 0-1-step
formulae over $\pow X$ is \emph{0-1-step consistent} if
$\Phi \not \nondash \bot$.

\paragraph{Semantics} Fix a weakly copointed functor $(T,\varepsilon)$
and a $\Lambda$-structure~$\Struct$ based on~$T$. Define the unary predicate
lifting $\iota$ by
$\iota_X(A) = \{ t \in TX \mid A \in \varepsilon(t)\}$. The
\emph{0-1-step satisfaction} relation $t \models^{\zeroone}_X \psi$ between
functor elements $t \in TX$ and 0-1-step formulae~$\psi$ over~$\pow X$
is inductively defined by
  \begin{alignat*}{2}
    &t \not \models^{\zeroone}_X \bot &\\
    &t \models^{\zeroone}_X \neg\phi &&\text{ iff } t \not \models^{\zeroone}_X \phi \\
    &t \models^{\zeroone}_X \phi \wedge \psi &&\text{ iff } t
                                           \models^{\zeroone}_X \phi
                                           \text{ and } t \models^{\zeroone}_X \psi\\
    &t \models^{\zeroone}_X L\phi &&\text{ iff } t \in \llbracket L \rrbracket_X(\phi)\\
    &t \models^{\zeroone}_X B &&\text{ iff } t \in
                            \iota_X(B) 
  \end{alignat*}
  where $B\in\pow X$ in the last clause, and
  $\llbracket \psi \rrbracket^{\zeroone}_X = \{t \in TX \mid t
  \models^{\zeroone}_X \psi \}$.  The last clause thus deals with
  top-level interpreted propositional atoms.  Note that in accordance
  with the above convention, the second to last clause omits
  interpretation of modal arguments, which are already identified with
  their interpretation. We say that $\psi$ is \emph{satisfiable} if
  $\Sem{\psi}^{\zeroone}_X\neq\emptyset$, and we write
  $TX \models ^{\zeroone}_X \psi $ if
  $\llbracket \psi \rrbracket^{\zeroone}_X = TX$. We generally refer
  to maps $\tau\colon V\to\pow X$ as $\pow X$\emph{-valuations}. Given
  a non-iterative axiom $\psi$, we write $\psi\tau$ for the 0-1-step
  formula obtained from~$\psi$ by substituting according
  to~$\tau$. Then,~$\psi$ is \emph{0-1-step sound} for~$\Struct$ if
  $TX \models ^{\zeroone}_X \psi \tau$ for every set $X$ and every
  $\pow X$-valuation $\tau$.  Conversely, the
  logic~$\Lang=(\Lambda,\Axioms)$, or just~$\Axioms$, is
  \emph{0-1-step complete} for~$\Struct$ if every 0-1-step
  formula~$\psi$ over $\pow X$ such that
  $TX \models^{\zeroone}_X \psi$ is 0-1-step derivable
  ($\nondash \psi$), equivalently if every 0-1-step consistent formula
  is satisfiable. The same terminology applies to non-iterative rules
  (Remark~\ref{rem:rules}) (specifically, a non-iterative rule
  $\phi/\psi$ is 0-1-step sound if $TX\models ^{\zeroone}_X \psi \tau$
  whenever $\Sem{\phi\tau}=X$).

  To enable an appropriate statement of soundness, we extend the
  semantics of the logic to allow for \emph{frame conditions}: We
  refer to a pair $(C,\pi)$ consisting of a $T$-coalgebra~$C=(X,\xi)$
  and a valuation $\pi\colon V\to\pow X$ of the propositional
  variables as a \emph{$T$-model}. We define satisfaction
  $x\models_{(C,\pi)}\psi$ of 0-1-step formulae
  $\psi\in\Prop(\Lambda(\Prop(V)\cup V))$ in states~$x$ of $T$-models
  $(C,\pi)$ by the same clauses as for $\models_C$
  (Section~\ref{sec:semantics}), and additionally
  \begin{equation*}
    x\models_{(C,\pi)}a\quad\text{iff}\quad x\in\pi(a)
  \end{equation*}
  for $a\in V$. We say that $C$ satisfies the \emph{frame condition}
  $\psi$ if $x\models_{(C,\pi)}\psi$ for all $T$-models $(C,\pi)$. Of
  course, if~$C$ satisfies the frame condition $\psi$ then~$\psi$ is
  \emph{sound} for~$C$, i.e.\ every state in~$C$ satisfies all
  substitution instances of~$\psi$.

  \begin{lemma}[Soundness] \label{soundnessImplication} If a
    non-iterative axiom~$\psi$ over~$V$ is 0-1-step sound over a
    $\Lambda$-structure $\Struct$ based on a weakly copointed
    functor~$T$, then every proper $T$-coalgebra satisfies the frame
    condition~$\psi$; hence,~$\psi$ is sound for the class of all
    proper $T$-coalgebras.
\end{lemma}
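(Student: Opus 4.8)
The plan is to reduce $x\models_{(C,\pi)}\psi$, for a state $x$ of a proper $T$-coalgebra $C=(X,\xi)$ under a valuation $\pi\colon V\to\pow X$, to $0$-$1$-step satisfaction of $\psi\pi$ by $\xi(x)$ in $TX$, and then invoke $0$-$1$-step soundness. Since the passage from ``$C$ satisfies the frame condition $\psi$'' to ``$\psi$ is sound for $C$'' is already recorded in the text immediately before the lemma, it suffices to fix such $C$, $\pi$, $x$ and establish $x\models_{(C,\pi)}\psi$. The central step is a bridge equivalence
\[
  x\models_{(C,\pi)}\chi \quad\Longleftrightarrow\quad \xi(x)\models^{\zeroone}_X\chi\pi
\]
for every non-iterative formula $\chi$ over $V$, where $\chi\pi$ is the $0$-$1$-step formula over $\pow X$ obtained by substituting $\pi$ (with inner propositional subformulae identified with their $\pow X$-evaluations, following the conventions of Section~\ref{sec:zeroone-logic}); this I would prove by structural induction on $\chi$.

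The cases $\chi=\bot$, $\chi=\neg\chi'$ and $\chi=\chi_1\wedge\chi_2$ are immediate because $\models_{(C,\pi)}$ and $\models^{\zeroone}_X$ interpret the Boolean connectives by identical clauses. For a modalized $\chi=L\phi_0$ with $\phi_0\in\Prop(V)$, a routine sub-induction on propositional formulae shows that the extension $\Sem{\phi_0}_{(C,\pi)}=\{y\in X\mid y\models_{(C,\pi)}\phi_0\}$ equals the $\pow X$-evaluation $\Sem{\phi_0\pi}$ --- the base case $\phi_0=a$ being $\Sem{a}_{(C,\pi)}=\pi(a)=\Sem{\pi(a)}$ --- and then both sides of the equivalence unfold, via the respective satisfaction clauses for modalities, to $\xi(x)\in\Sem{L}_X(\Sem{\phi_0\pi})$. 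The one genuinely load-bearing case is the top-level atom $\chi=a$: here $x\models_{(C,\pi)}a$ iff $x\in\pi(a)$, while $\xi(x)\models^{\zeroone}_X\pi(a)$ iff $\xi(x)\in\iota_X(\pi(a))$, i.e.\ iff $\pi(a)\in\varepsilon(\xi(x))$; and by properness $\varepsilon(\xi(x))=(\varepsilon_X\circ\xi)(x)=\eta_X(x)=\dot x=\{A\in\pow X\mid x\in A\}$, so $\pi(a)\in\varepsilon(\xi(x))$ iff $x\in\pi(a)$, as required.

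Given the bridge equivalence, the lemma follows at once: $0$-$1$-step soundness of $\psi$ gives $TX\models^{\zeroone}_X\psi\pi$, hence $\xi(x)\models^{\zeroone}_X\psi\pi$, hence $x\models_{(C,\pi)}\psi$; as $C$, $\pi$, $x$ were arbitrary, every proper $T$-coalgebra satisfies the frame condition $\psi$, and the ``hence'' clause then follows from the remark preceding the statement. I do not anticipate a real obstacle; the only subtlety is the atom case, where one must carefully unwind the definition of $\iota$ together with the properness condition $\varepsilon_X\circ\xi=\eta_X$ and the fact that $\eta_X(x)$ is the principal ultrafilter $\dot x$ --- and it is precisely here, plus the observation that non-iterativity confines modalities to propositional arguments (so that no recursion through $\xi$ is concealed inside a modality), that the hypotheses are really used.
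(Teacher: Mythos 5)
Your proposal is correct and matches the paper's proof in all essentials: both establish, by induction over the non-iterative formula, a bridge between satisfaction at a state $x$ and $0$-$1$-step satisfaction of the substituted formula at $\xi(x)$ (the paper states this as $\{x \in X \mid \xi(x) \in \llbracket \phi\hat{\sigma}\rrbracket^{\zeroone}_X\} = \llbracket \phi\sigma \rrbracket_C$), with trivial Boolean cases, a definition-unfolding modal case, and the atom case resolved exactly as you do via $\xi(x)\in\iota_X(A)$ iff $A\in\varepsilon(\xi(x))=\dot x$ iff $x\in A$, using properness. The only cosmetic difference is that you quantify over arbitrary valuations $\pi$ (proving the frame-condition statement directly), while the paper phrases the induction for valuations $\hat\sigma$ induced by formula substitutions; the argument is the same.
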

\noindent We proceed to discuss in more detail how non-iterative
axioms are absorbed into (weakly) copointed functors. Given a (weakly)
copointed functor~$T$ and a set~$\Axioms'$ of additional non-iterative
axioms, we can pass to the (weakly) copointed
subfunctor~$T_{\Axioms'}$ of~$T$ given by
\begin{equation*}
  T_{\Axioms'} X=\{ t\in T\mid t\models^{\zeroone}_X\phi\sigma\text{ for all $\phi\in\Axioms'$ and all $\pow X$-substitutions $\sigma$}\}
\end{equation*}
and restrict the $\Lambda$-structure to $T_{\Axioms'}$ in the evident
way. By construction, the axioms in~$\Axioms'$ are 0-1-step sound
over~$T_{\Axioms'}$, and the proper $T_{\Axioms'}$-coalgebras are
precisely those proper $T$-coalgebras that satisfy the axioms
in~$\Axioms'$ as frame conditions. Moreover, we have
\begin{lemma}\label{lem:axiom-completeness}
  In the notation introduced above, suppose that the set~$\Axioms$ of
  non-iterative axioms is 0-1 step sound and 0-1 step complete
  over~$T$. If $\Axioms'$ mentions only finitely many modalities, then
  $\Axioms\cup\Axioms'$ is 0-1-step complete over~$T_{\Axioms'}$.
\end{lemma}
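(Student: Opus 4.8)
The plan is to establish 0-1-step completeness in its equivalent form ``every 0-1-step consistent formula is satisfiable''. So let $\psi$ be a 0-1-step formula over $\pow X$ that is 0-1-step consistent w.r.t.\ $\Axioms\cup\Axioms'$; after identifying inner propositional subformulae with their values, $\psi$ is a Boolean combination of interpreted atoms $B\in\pow X$ and modal atoms $LB$ ($B\in\pow X$), and only finitely many subsets of~$X$ occur in it. I must produce $t\in T_{\Axioms'}X$ with $t\models^{\zeroone}_X\psi$. The point to note is that membership in $T_{\Axioms'}X$ asks that $t$ satisfy \emph{all} $\pow X$-substitution instances of the (possibly infinitely many) axioms in~$\Axioms'$, whereas the only available tool — 0-1-step completeness of~$\Axioms$ over~$T$ — lets us satisfy a \emph{single} 0-1-step formula. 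I would bridge this gap by moving to a finite base set.

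First, collapse~$X$ to the finite quotient $Y=X/{\sim}$, where $x\sim x'$ iff $x$ and $x'$ lie in exactly the same members of the finite set~$\Sigma$ of subsets of~$X$ occurring in~$\psi$; let $q\colon X\to Y$ be the quotient map and fix a section $j\colon Y\to X$ (so $q\circ j=\id_Y$). Since each $B\in\Sigma$ is $\sim$-saturated, $\psi$ is the image of a 0-1-step formula~$\psi'$ over~$\pow Y$ under the atom-translation $B'\mapsto q^{-1}[B']$ (extended to modal atoms by $LB'\mapsto L(q^{-1}[B'])$). This translation carries axiom instances to axiom instances and propositionally valid formulae over~$\pow Y$ to propositionally valid ones over~$\pow X$ — it is just a uniform substitution of interpreted atoms — and preserves propositional entailment; hence it transports 0-1-step derivability over~$\pow Y$ to 0-1-step derivability over~$\pow X$, and therefore~$\psi'$ is again 0-1-step consistent w.r.t.\ $\Axioms\cup\Axioms'$, now over the \emph{finite} set~$Y$.

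Over~$\pow Y$ everything becomes finitary: since~$\Axioms'$ mentions only finitely many modalities and~$\pow Y$ is finite, there are, up to propositional equivalence, only finitely many $\Axioms'$-instances over~$\pow Y$; let $\chi_1,\dots,\chi_r$ be representatives, so that $\bigwedge_{i}\chi_i$ propositionally entails every $\Axioms'$-instance over~$\pow Y$. Then $\psi'\wedge\bigwedge_i\chi_i$ is 0-1-step consistent w.r.t.\ $\Axioms$ alone: an inconsistency would, since the~$\chi_i$ are $\Axioms'$-instances, contradict the 0-1-step consistency of~$\psi'$ w.r.t.\ $\Axioms\cup\Axioms'$. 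By 0-1-step completeness of~$\Axioms$ over~$T$ there is $s\in TY$ with $s\models^{\zeroone}_Y\psi'\wedge\bigwedge_i\chi_i$; in particular~$s$ satisfies every $\Axioms'$-instance over~$\pow Y$, i.e.\ $s\in T_{\Axioms'}Y$.

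Finally, transport~$s$ back along the section by putting $t:=Tj(s)\in TX$. Naturality of the weak copoint $\varepsilon\colon T\to\ultra$ and of the predicate liftings gives, for any 0-1-step formula~$\theta$ over~$\pow X$, that $t\models^{\zeroone}_X\theta$ iff $s\models^{\zeroone}_Y\theta[j^{-1}]$, where $\theta[j^{-1}]$ replaces each interpreted atom~$B$ by~$j^{-1}[B]$. Taking $\theta=\psi$ yields $t\models^{\zeroone}_X\psi$, because $j^{-1}[B]=q[B]$ for $B\in\Sigma$ and hence $\psi[j^{-1}]=\psi'$. Taking for~$\theta$ an arbitrary instance $\phi\sigma$ of an axiom $\phi\in\Axioms'$ with $\sigma\colon V_\phi\to\pow X$ yields $t\models^{\zeroone}_X\phi\sigma$ iff $s\models^{\zeroone}_Y\phi\sigma'$ with $\sigma'=j^{-1}\circ\sigma\colon V_\phi\to\pow Y$, and the right-hand side holds since $\phi\sigma'$ is an $\Axioms'$-instance over~$\pow Y$ and $s\in T_{\Axioms'}Y$; as $\phi$ and~$\sigma$ were arbitrary, $t\in T_{\Axioms'}X$. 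Thus~$t$ witnesses satisfiability of~$\psi$ over~$T_{\Axioms'}X$. The step I expect to be the main obstacle is exactly this last passage — getting a single functor element to witness \emph{all} $\pow X$-instances of~$\Axioms'$ simultaneously — and it is precisely the finiteness of~$Y$, together with the finiteness of the modality set of~$\Axioms'$, that makes it go through; note that only the completeness half of the hypothesis on~$\Axioms$ is used.
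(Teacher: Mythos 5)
Your proof is correct and follows essentially the same route as the paper's: reduce to a finite base set, observe that $\Axioms'$ then has only finitely many instances over it up to propositional equivalence, bundle them into a single conjunction, and invoke 0-1-step completeness of~$\Axioms$ over~$T$. The only difference is presentational --- you argue on the consistency/satisfiability side and spell out the finite-model-property reduction (quotient, section, naturality of the liftings including~$\iota$) that the paper delegates to a cited analogue.
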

\begin{proof*}{Proof (sketch).}
  Observe that if $\psi$ is a 0-1-step formula over~$\pow X$ such that
  $T_{\Axioms'}X\models^{\zeroone}_X\psi$, with~$X$ assumed to be
  finite w.l.o.g., then $T X\models^{\zeroone}_X(\Land\Phi)\to\psi$
  where~$\Phi$ contains representatives up to propositional
  equivalence of all instances of axioms in~$\Axioms'$ under
  $\pow X$-substitutions; the assumptions guarantee that we can
  take~$\Phi$ to be finite.
\end{proof*}
\begin{example}
  \begin{enumerate}[wide]
  \item We have recalled the coalgebraic view on standard Kripke
    semantics in Example~\ref{expl:logics}.\ref{item:Kripke}. The
    usual axioms of the modal logic~$K$ ($\Box\top$ and
    $\Box(a\to b)\to\Box a\to\Box b)$ are 0-1-step complete over the
    trivially copointed functor $\pow\times\id$ induced by the
    functor~$\pow$; this is implied by translating the known
    \emph{one-step} completeness of these axioms
    over~$\pow$~\cite{DBLP:journals/tcs/Pattinson03} into the
    copointed setting as indicated in
    Remark~\ref{rem:one-to-zeroone}. It follows by
    Lemma~\ref{lem:axiom-completeness} that these axioms, together
    with the $T$-axiom $\Box a\to a$, are 0-1-step complete for the
    copointed functor~$T$ given by
    \begin{equation*}
      TX=\{(B,x)\in\pow X\times X\mid (B,x)\models\Box A\to A\text{ for all $A\in\pow X$}\}.
    \end{equation*}
    It is easy to see that $TX=\{B,x)\in\pow X\times X\mid x\in B\}$,
    i.e.~$T$ coincides with the copointed functor recalled on
    p.~\pageref{page:t-functor}, whose proper coalgebras are the
    reflexive Kripke frames.
  \item The assumption that the additional axioms only mention
    finitely many modalities is really needed; without it, the claim
    fails even in the rank-1 case. For instance, let~$\mathcal{S}$ be
    the \emph{subdistribution functor}, which assigns to a set~$X$ the
    set $\mathcal{S}X$ of discrete subdistributions on~$X$, where a
    subdistribution is defined like a distribution except that the
    weight of the whole set is required to be at most~$1$ rather than
    equal to~$1$. We use modalities $L_p$ `with weight at least~$p$'
    with the same semantics as in the probabilistic case
    (Example~\ref{expl:logics}.\ref{item:prob}). Take the set 
    \begin{equation*}
      \Axioms'=\{\neg L_1\top\}\cup\{ L_{1-1/n}\top\mid n\ge 1\}
    \end{equation*}
    of rank-1 axioms. Then
    $(\mathcal{S}\times\id)_{\Axioms'}X=\emptyset$ for all~$X$, so
    that $(\mathcal{S}\times\id)_{\Axioms'}X\models^\zeroone_X\bot$,
    but $\bot$ is not derivable under the given axioms (together with
    any sound axiomatization of~$\mathcal{S}$), as any derivation
    of~$\bot$ could only use a finite subset of~$\Axioms'$, and all
    such finite subsets are clearly consistent.
  \end{enumerate}
\end{example}
\noindent A key role in the completeness proof will be played by the
following subformula property of the 0-1-step logic, which extends
\cite[Proposition~24]{DBLP:journals/logcom/SchroderP10} from rank-1 to
non-iterative logics.

\begin{proposition} \label{sfp} Let $\psi$ be a 0-1-step formula
  over~$\pow X$ such that $\nondash \psi $. Then~$\psi$ is 0-1-step
  derivable using only $\Prop(\mathfrak{A})$-instances of axioms and
  $\Prop(\mathfrak A)$-formulae valid over~$\pow X$, where
  $\mathfrak{A} \subseteq \pow X$ are the sets occurring in~$\psi$.
\end{proposition}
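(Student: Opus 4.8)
The plan is to collapse an arbitrary 0-1-step derivation of $\psi$ onto the finitely many sets in $\mathfrak A$ by relabelling along a suitable map on the state set. Let $B\subseteq\pow X$ be the (finite) Boolean subalgebra generated by $\mathfrak A$, and let $\sim$ be the equivalence relation on $X$ identifying two points iff they belong to the same members of $\mathfrak A$; its classes are precisely the atoms of $B$, and the elements of $B$ are exactly the $\sim$-saturated subsets of $X$. Fix a map $p\colon X\to X$ that sends each point to a chosen representative of its $\sim$-class, so that $p$ is constant on $\sim$-classes and $p(x)\sim x$ for all $x$, and consider the Boolean endomorphism $\rho=\contrapow p$ of $\pow X$, i.e.\ $\rho(A)=p^{-1}[A]$. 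Since $p(x)\sim x$ we have $\rho(A)=A$ for every $A\in\mathfrak A$, hence for every $A\in B$; since $p$ is constant on classes, $\rho(A)$ is $\sim$-saturated, i.e.\ $\rho(A)\in B$, for \emph{every} $A\in\pow X$. I then extend $\rho$ to an operation $\hat\rho$ on 0-1-step formulae over $\pow X$ that replaces every occurring set $A$ --- whether a top-level propositional atom or an argument of a modality --- by $\rho(A)$, and commutes with Boolean connectives and with modalities. As $B=\langle\mathfrak A\rangle$, each value $\rho(A)$ is the extension of some $\Prop(\mathfrak A)$-formula, so after re-expressing these values $\hat\rho$ may be viewed as producing 0-1-step formulae over $\mathfrak A$; and as $\rho$ fixes $\mathfrak A$, we get $\hat\rho(\psi)=\psi$.

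It then remains to push a derivation witnessing $\nondash\psi$ through $\hat\rho$. By definition (and compactness of propositional entailment), $\nondash\psi$ means that $\psi$ is propositionally entailed by a finite set $\Gamma$ consisting of instances $\chi\tau$ of axioms $\chi\in\Axioms$ under $\Prop(\pow X)$-substitutions $\tau$ together with $\Prop(\pow X)$-formulae valid over $\pow X$. Applying $\hat\rho$ throughout the underlying propositional derivation --- at the propositional level this is just a (possibly non-injective) renaming of the letters, namely the top-level set-atoms and the modal atoms, and hence preserves $\vdash_{PL}$ --- yields $\hat\rho[\Gamma]\vdash_{PL}\hat\rho(\psi)=\psi$. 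Now for $\eta\in\Prop(\pow X)$ with $\Sem{\eta}=X$ we get $\Sem{\hat\rho(\eta)}=\rho(\Sem{\eta})=\rho(X)=X$ because $\rho$ is a Boolean homomorphism, so $\hat\rho(\eta)$ is again a valid $\Prop(\mathfrak A)$-formula; and for an axiom instance $\chi\tau$ a short induction, again using that $\rho$ is a Boolean homomorphism, shows $\hat\rho(\chi\tau)=\chi\tau'$ where $\tau'=\hat\rho\circ\tau$ is a $\Prop(\mathfrak A)$-substitution --- the one nontrivial point being that for a propositional $\phi$ occurring under a modality in $\chi$ one has $\Sem{\phi\tau'}=\rho(\Sem{\phi\tau})$, which is exactly the value used when forming the corresponding modal argument of $\chi\tau'$. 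Hence $\hat\rho[\Gamma]$ consists of $\Prop(\mathfrak A)$-instances of axioms of $\Axioms$ and of valid $\Prop(\mathfrak A)$-formulae, and $\hat\rho[\Gamma]\vdash_{PL}\psi$ is the required restricted derivation. (If the logic is instead given by non-iterative rules as in Remark~\ref{rem:rules}, rule instances are handled like axiom instances: $\Sem{\phi\tau}=X$ forces $\Sem{\phi\tau'}=\rho(X)=X$.)

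Compared with the rank-1 case of \cite[Proposition~24]{DBLP:journals/logcom/SchroderP10}, the only genuinely new ingredients are that $\hat\rho$ must also be shown to respect top-level propositional atoms and the derivation rule for $\Prop(\pow X)$-formulae valid over $\pow X$ --- both immediate once $\rho$ is a Boolean endomorphism fixing $\mathfrak A$ and landing in $\langle\mathfrak A\rangle$. I expect the main (though purely technical) obstacle to be the verification that $\hat\rho$ carries an axiom instance to an axiom instance, because of the convention that identifies a propositional subformula with its extension precisely inside the scope of a modality: this forces one to argue separately on the outer Boolean skeleton of $\chi$ and on its modal arguments, rather than by a single syntactic substitution lemma.
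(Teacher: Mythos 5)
Your proof is correct, but it takes a genuinely different route from the paper's. The paper reduces the claim to a system of Boolean equations recording which subformulae of the instantiated axioms (and which sets in~$\psi$) are identified by the instantiating valuation~$\sigma$, and then invokes the Boolean-unification statement Lemma~\ref{boolEq}.(\ref{item:ba-solve}) together with the substitution Lemma~\ref{substLemma} to replace~$\sigma$ by a $\Prop(\mathfrak A)$-substitution solving the same equations. You instead build an explicit Boolean retraction $\rho=\contrapow p$ of $\pow X$ onto the finite subalgebra generated by~$\mathfrak A$ (choosing representatives of the cells of the partition induced by~$\mathfrak A$) and push the entire derivation through the induced atom-renaming~$\hat\rho$; since $\rho$ fixes~$\mathfrak A$, the conclusion~$\psi$ is unchanged, while premises are carried to $\Prop(\mathfrak A)$-instances of axioms and to valid $\Prop(\mathfrak A)$-formulae because $\rho$ is a homomorphism commuting with evaluation. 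The delicate points you flag — that $\hat\rho$ of an axiom instance $\chi\tau$ equals $\chi\tau'$ with $\tau'=\hat\rho\circ\tau$ (using $\Sem{\phi\tau'}=\rho(\Sem{\phi\tau})$ under modalities), and that re-expressing elements of $\langle\mathfrak A\rangle$ as $\Prop(\mathfrak A)$-formulae is itself a $\vdash_{PL}$-preserving substitution fixing~$\psi$ — are exactly the ones that need checking, and they do go through. Your argument is more self-contained for this proposition and arguably more conceptual, at the price of using the concrete (atomic, powerset) structure of $\pow X$ to obtain the retraction; the paper's equational route is reused later, since Lemma~\ref{boolEq} — in particular its monotone-solution clause~(\ref{item:monot-solve}) — is the engine of the step-wise atom elimination in Lemma~\ref{consistency2}, where a single global retraction would not obviously suffice.
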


\noindent The proof requires some facts about propositional logic.

\begin{lemma} \label{boolEq} Let $V$ and $W$ be disjoint finite
  sets. For an $A$-valuation $\tau$ on $V$ with $A \subseteq \pow X$
  and a system of Boolean equations $\phi_i\tau = \psi_i\tau$ for
  $i = 1, \ldots, n$ where $\phi_i, \psi_i \in \Prop(V \cup W)$, if
  there exists an $A$-valuation $\kappa$ for $W$ such that
  $\phi_i\tau\kappa = \psi_i\tau\kappa$ for $i = 1, \ldots, n$, then
  there exists a $\Prop(V)$-substitution $\sigma$ on $W$ such
  that \begin{enumerate}
  \item \label{item:ba-solve} $\phi_i\sigma\tau = \psi_i\sigma\tau$ for
    $i = 1, \ldots, n$
  \item \label{item:monot-solve}$x\kappa \subseteq \llbracket x\sigma\tau \rrbracket$ for
    $x \in W$ if $\vert W \vert = 1$.
  \end{enumerate}
\end{lemma}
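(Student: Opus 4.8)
\emph{Proof plan.} The idea is to treat the system as a Boolean-unification problem over the finite subalgebra of~$\pow X$ generated by the data, and to ``round'' the given concrete solution~$\kappa$ to a symbolic one. Let $\mathbb{B}\subseteq\pow X$ be the Boolean subalgebra generated by the finite set $\tau[V]$; as $V$ is finite, $\mathbb{B}$ is finite. Let $b_1,\dots,b_m$ be its atoms, so that the~$b_j$ are nonempty and partition~$X$. Each atom is the value of a $V$-minterm: there is $\beta_j\in\Prop(V)$ (a conjunction of literals over~$V$) with $\Sem{\beta_j\tau}=b_j$. Write $\alpha_j$ for the truth assignment on~$V$ with $\alpha_j(v)=1$ iff $b_j\subseteq\tau(v)$; since $\tau(v)\in\mathbb{B}$ and $b_j$ is an atom of~$\mathbb{B}$, we have $p\in\tau(v)$ iff $\alpha_j(v)=1$ for every $p\in b_j$.

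The central observation is that for $p\in b_j$, membership of~$p$ in $\Sem{\phi_i\tau\kappa}$ is decided by evaluating the propositional formula~$\phi_i$ under the assignment sending $v\in V$ to $\alpha_j(v)$ and $w\in W$ to the truth value of $p\in\kappa(w)$; here the $V$-part depends only on~$j$. As $b_j\neq\emptyset$, fix $p_j\in b_j$ and let $\theta_j$ be the truth assignment on~$W$ with $\theta_j(w)=1$ iff $p_j\in\kappa(w)$. Evaluating the hypothesis $\Sem{\phi_i\tau\kappa}=\Sem{\psi_i\tau\kappa}$ at~$p_j$ then shows that, under the combined assignment $(\alpha_j,\theta_j)$, the formulae $\phi_i$ and $\psi_i$ receive the same value for every~$i$; that is, $\theta_j$ solves, over the two-element algebra, the system obtained from $\phi_i=\psi_i$ by instantiating the $V$-variables according to~$\alpha_j$. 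Now define the $\Prop(V)$-substitution~$\sigma$ on~$W$ by $\sigma(w)=\bigvee\{\beta_j\mid\theta_j(w)=1\}$ (an empty disjunction being~$\bot$); then $\Sem{\sigma(w)\tau}=\bigcup\{b_j\mid\theta_j(w)=1\}$, so for $p\in b_j$ we get $p\in\Sem{\sigma(w)\tau}$ iff $\theta_j(w)=1$. Hence $\Sem{\phi_i\sigma\tau}$ and $\Sem{\psi_i\sigma\tau}$, when computed atom by atom, agree on each~$b_j$ because $(\alpha_j,\theta_j)$ equalizes $\phi_i$ and $\psi_i$ there; as the~$b_j$ cover~$X$, this is~\ref{item:ba-solve}.

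For~\ref{item:monot-solve}, assume $W=\{w_0\}$ and keep the same construction, but choose the sample points more deliberately: for each atom~$b_j$ with $\kappa(w_0)\cap b_j\neq\emptyset$ take $p_j\in\kappa(w_0)\cap b_j$, and for the remaining atoms take $p_j\in b_j$ arbitrarily. Any choice of $p_j\in b_j$ still yields a solution of the localized system, so~\ref{item:ba-solve} is not affected; but now $\theta_j(w_0)=1$ whenever $\kappa(w_0)$ meets~$b_j$, whence $\Sem{\sigma(w_0)\tau}=\bigcup\{b_j\mid\kappa(w_0)\cap b_j\neq\emptyset\}\supseteq\kappa(w_0)$, which is~\ref{item:monot-solve}.

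I expect the only genuine difficulty to be the conceptual step: seeing that a $\Prop(V)$-valued solution amounts to a family of two-valued solutions of the systems obtained by collapsing~$\mathbb{B}$ onto its individual atoms, and that the nonemptiness of those atoms is exactly what allows these two-valued solutions to be read off the given solution~$\kappa$ by sampling one point per atom. Granting this, \ref{item:monot-solve} costs only a slightly more careful sampling, which is possible precisely because $|W|=1$: with several unknowns the sets $\kappa(w)$ could force mutually incompatible choices of the points~$p_j$.
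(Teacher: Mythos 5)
Your proof is correct, and it takes a genuinely different route from the one in the paper. The paper establishes claim~(\ref{item:ba-solve}) by successive variable elimination (Boolean unification in the Löwenheim style): it collapses the system to a single equation $\phi\tau=\top$, uses the expansion $\phi\equiv(x\to\phi[\top/x])\land(\neg\phi[\bot/x]\to x)$ to eliminate the unknowns in $W$ one at a time by induction on $|W|$, and returns the explicit unifier $\sigma(x)=\phi[\top/x]$ (composed with the substitution obtained inductively for the remaining unknowns); claim~(\ref{item:monot-solve}) is then verified by a CNF analysis of that particular formula. You instead argue semantically through the atoms of the finite subalgebra $\mathbb{B}\subseteq\pow X$ generated by $\tau[V]$: a $\Prop(V)$-valued solution is exactly a family of two-valued solutions of the systems obtained by instantiating the $V$-variables along the individual atoms, and nonemptiness of the atoms lets you read such two-valued solutions off the given solution~$\kappa$ by sampling one point per atom; claim~(\ref{item:monot-solve}) comes from sampling inside $\kappa(w_0)$ wherever possible. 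Both arguments are sound. The paper's elimination step for~(\ref{item:ba-solve}) is purely equational and would go through in an arbitrary Boolean algebra, while yours leans on atomicity of $\pow X$ (which the lemma does assume); in exchange, your construction makes it transparent why~(\ref{item:monot-solve}) must be restricted to $|W|=1$ --- several unknowns could force incompatible choices of sample points --- whereas in the paper this restriction is visible only because the explicit unifier is analysed for a single variable. Since the later development (in particular the proof of Lemma~\ref{consistency2}) uses only the statement of the lemma, namely the existence of a $\Prop(V)$-substitution satisfying~(\ref{item:ba-solve}) and~(\ref{item:monot-solve}), and not the specific shape of the unifier, your proof could be substituted without disturbing the rest of the paper.
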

\noindent (Claim~(\ref{item:ba-solve}) says effectively that if
Boolean equations with coefficients in~$A$ are solvable in~$A$, then
they are solvable by Boolean combinations of the coefficients that
actually occur. Claim~(\ref{item:monot-solve}) is only needed later.)
\begin{proof} \emph{(\ref{item:ba-solve}):} This is well-known but we
  need the construction for Claim~(\ref{item:monot-solve}). We
  immediately reduce to a single equation $\phi\tau = \top$ where
  \begin{math}
    \phi =\textstyle \bigwedge_{i=1}^n (\phi_i \leftrightarrow \psi_i).
  \end{math}
  We construct $\sigma$ by induction over $\vert W\vert$, with trivial
  base $\vert W \vert = 0$.  In the inductive step, we pick $x \in W$
  and obtain, by Boolean expansion,
  \begin{align*} \phi &\equiv (x \rightarrow \phi[\top / x]) \wedge
    (\neg x \rightarrow \phi[\bot / x])\\ &\equiv (x \rightarrow
    \phi[\top / x]) \wedge (\neg \phi[\bot / x] \rightarrow x),
  \end{align*}
  which in turn entails
  $\neg \phi[\bot / x] \rightarrow \phi[\top / x]$, so by assumption
  the equation
  $(\neg \phi[\bot / x] \rightarrow \phi[\top / x])\tau = \top$ over
  $W \setminus \{x\}$ is solved by $\kappa$, and hence by induction
  solvable by some $\Prop(V)$-substitution $\sigma'$.  Thus, the
  substitution
  \begin{equation*}
    \sigma = [\phi[\top / x] / x ]\sigma'
  \end{equation*}
  for $W$ satisfies $\phi\sigma\tau = \top$.
  
  \emph{(\ref{item:monot-solve}):} Let $W = \{x\}$; we then have
  constructed $\sigma = [\phi[\top / x] / x ]$
  in~(\ref{item:ba-solve}). We have to show
  $\kappa(x) \subseteq \llbracket \phi[\top/x] \tau \rrbracket$. Let
  $y \in \kappa(x)$ and assume w.l.o.g.\ that $\phi$ is in CNF, and
  that~$x$ appears in at most one literal in every clause $\psi$ in
  $\phi$. We have to show that
  $y \in \llbracket \psi[ \top / x ]\tau \rrbracket $. If the
  literal~$x$ appears in $\psi$, then this holds trivially. Otherwise,
  $\psi$ must contain some literal not mentioning~$x$ whose
  interpretation contains~$y$, since $\psi \tau \kappa = \top$ by
  assumption and
  $y\notin\Sem{(\neg x)\kappa}=\Sem{(\neg x)\tau\kappa}$.  Therefore
  $y \in \llbracket \phi[\top/x] \tau \rrbracket$ as
  required.  \end{proof}

\begin{lemma} \label{weakSubstLemma} Let $\Phi \subseteq \Prop(V)$,
  let $\psi \in \Prop(V)$, and let $\sigma$ be a $W$-substitution on
  $V$ and $\tau$ a $U$-substitution on $V$ such that
  $\tau(a) = \tau(b)$ whenever $\sigma(a) = \sigma(b)$ for all
  $a, b \in V$, and moreover $\Phi\sigma \vdash_{PL} \psi\sigma$. Then
  $\Phi\tau \vdash_{PL} \psi\tau$.
\end{lemma}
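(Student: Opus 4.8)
The plan is to read the hypothesis linking $\sigma$ and $\tau$ as a \emph{factorization} condition. Since $\sigma(a)=\sigma(b)$ implies $\tau(a)=\tau(b)$, the fibres of $\sigma\colon V\to W$ refine those of $\tau\colon V\to U$, so $\tau$ factors through $\sigma$. Concretely, I would define a map $\rho\colon W\to U$ by setting $\rho(w)=\tau(a)$ whenever $w=\sigma(a)$ for some $a\in V$ --- well defined precisely by the hypothesis --- and making an arbitrary fixed choice for the elements of $W$ outside the image of $\sigma$. Then $\rho\circ\sigma=\tau$ as $U$-substitutions on $V$.

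The second step is compositionality of substitution: for every $\chi\in\Prop(V)$ we have $\chi\tau=(\chi\sigma)\rho$, because $\chi\sigma$ mentions only variables in the image of $\sigma$, on which $\rho\circ\sigma$ agrees with $\tau$, so the arbitrary extension of $\rho$ is irrelevant. The third step is closure of propositional consequence under substitution: applying $\rho$ to the given derivation, $\Phi\sigma\vdash_{PL}\psi\sigma$ yields $(\Phi\sigma)\rho\vdash_{PL}(\psi\sigma)\rho$, which by the second step is exactly $\Phi\tau\vdash_{PL}\psi\tau$. If $\Phi$ is infinite, one first passes, by compactness of propositional logic, to a finite subset of $\Phi$ witnessing the entailment, and then argues as above.

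There is essentially no hard step here: the argument is a routine instance of the substitution lemma for propositional logic together with functoriality of substitution. The one genuine point --- and the only thing worth spelling out in the written proof --- is the observation that the fibre condition on $\sigma$ and $\tau$ is precisely what is needed to realize $\tau$ as $\rho\circ\sigma$ for a suitable substitution $\rho$, after which everything is bookkeeping.
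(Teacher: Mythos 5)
Your proof is correct and takes essentially the same route as the paper: the paper likewise factors $\tau$ as $\sigma\kappa$ for a $U$-substitution $\kappa$ on $W$ and then invokes the substitution lemma of propositional logic. You merely spell out in more detail why the fibre condition makes the factorization well defined, which the paper leaves implicit.
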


\begin{lemma} \label{crr} Let $\Phi\subseteq\Prop(V)$, and
  let~$\psi\in\Prop(V)$. Given a $U$-substitution~$\sigma$ and a
  $W$-substitution~$\tau$ on~$V$, if
  $\Phi\sigma \vdash_{PL} \psi\sigma$ then
  $\Phi\tau \cup \Psi \vdash_{PL} \psi\tau$, where
  $\Psi = \{ \tau(a) \leftrightarrow \tau(b) \mid a, b \in V,
  \sigma(a) = \sigma(b) \}$.
\end{lemma}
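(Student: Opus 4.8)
The statement is a relativized version of the substitution lemma (Lemma \ref{weakSubstLemma}): there, the hypothesis $\tau(a)=\tau(b)$ whenever $\sigma(a)=\sigma(b)$ forced the $\sigma$-derivation to survive the passage to $\tau$; here we drop that hypothesis but pay for it by adjoining the set $\Psi$ of biconditionals $\tau(a)\leftrightarrow\tau(b)$ over exactly those pairs that $\sigma$ identifies. So the natural approach is to factor $\tau$ through the quotient induced by $\sigma$. Concretely, let $\equiv_\sigma$ be the equivalence relation on $V$ given by $a\equiv_\sigma b \iff \sigma(a)=\sigma(b)$, pick a set $V'$ of representatives (or just take $V' = V/{\equiv_\sigma}$), let $q\colon V\to V'$ be the quotient map, and let $\tau'\colon V'\to\Prop(W)$ be \emph{any} substitution with $\tau = \tau'\circ q$ when we pick representatives — i.e.\ define $\tau'([a]) = \tau(a_0)$ for a chosen representative $a_0$ of the class $[a]$. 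I would also define $\sigma'\colon V'\to\Prop(U)$ analogously, so that $\sigma = \sigma'\circ q$ by construction of $\equiv_\sigma$.

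First I would observe that $q\colon V\to V'$ trivially satisfies the hypothesis of Lemma \ref{weakSubstLemma} relative to itself, and more to the point: applying Lemma \ref{weakSubstLemma} to the substitution $q$ (as the ``$\sigma$'' of that lemma) and $\sigma'\circ q = \sigma$ (as the ``$\tau$'', noting $q(a)=q(b)$ iff $\sigma(a)=\sigma(b)$, so the side condition holds vacuously in the needed direction) is not quite the right move. Instead, the cleaner route is: from $\Phi\sigma\vdash_{PL}\psi\sigma$, i.e.\ $\Phi q\,\sigma'\vdash_{PL}\psi q\,\sigma'$, I apply Lemma \ref{weakSubstLemma} with premiss set $\Phi q$, conclusion $\psi q$, the $U$-substitution $\sigma'$ on $V'$, and the target substitution being the \emph{identity} substitution $\id_{V'}$ on $V'$ — but that needs $\sigma'$ to be injective on the relevant variables, which it is by construction of $V'$ as the quotient. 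This yields $\Phi q \vdash_{PL} \psi q$. Then I substitute via $\tau'$: since $\vdash_{PL}$ is closed under substitution, $\Phi q\,\tau' \vdash_{PL} \psi q\,\tau'$. Now $q\,\tau' = \tau'\circ q$ agrees with $\tau$ on each chosen representative but differs from $\tau$ on the other class members; the gap is exactly bridged by $\Psi$. So the final step is to observe that for every $a\in V$, from $\Psi$ we have $\tau(a)\leftrightarrow (\tau'\circ q)(a)$ (both equal $\tau$ of the representative of $[a]$, and the biconditional linking them lies in $\Psi$ since $a$ and that representative are $\sigma$-identified), hence $\Psi$ propositionally rewrites $\Phi(\tau'\circ q)$ into $\Phi\tau$ and $\psi(\tau'\circ q)$ into $\psi\tau$. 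Combining, $\Phi\tau\cup\Psi\vdash_{PL}\psi\tau$.

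The main subtlety — and the step I expect to need the most care — is the reduction ``$\Phi\sigma\vdash_{PL}\psi\sigma$ implies $\Phi q\vdash_{PL}\psi q$'', i.e.\ removing $\sigma'$. One must be careful that $\sigma'$ is genuinely injective on $V'$ (true by the very definition of $\equiv_\sigma$ via the chosen representatives) and that renaming variables along an injection is a sound and invertible operation on $\vdash_{PL}$ — this is where I would invoke Lemma \ref{weakSubstLemma} in the direction that turns $\sigma'$-instances back into bare $V'$-formulae: apply it with the roles ``$\sigma$'' $:= \sigma'$ and ``$\tau$'' $:= \id_{V'}$, whose side condition $\id(a)=\id(b)$ whenever $\sigma'(a)=\sigma'(b)$ holds precisely because $\sigma'$ is injective, giving $\Phi q\,\id \vdash_{PL}\psi q\,\id$, i.e.\ $\Phi q\vdash_{PL}\psi q$. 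A secondary bookkeeping point is finiteness: $\vdash_{PL}$ derivations use finitely many premisses, and $\Psi$ may be infinite if $V$ is, but this is harmless since each individual derivation only touches finitely many biconditionals from $\Psi$, and the statement as written does not require finiteness of $\Psi$. With these two points handled, the rest is routine substitution-closure of propositional entailment.
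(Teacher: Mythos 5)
Your proof is correct and follows essentially the same route as the paper's: collapse $\tau$ along the kernel of $\sigma$ to the substitution $\tau'$ with $\tau'(a)=\tau(v([a]_\sigma))$, get $\Phi\tau'\vdash_{PL}\psi\tau'$ via Lemma~\ref{weakSubstLemma}, and use the biconditionals in $\Psi$ to pass between $\tau'$ and $\tau$. The paper reaches $\Phi\tau'\vdash_{PL}\psi\tau'$ by a single application of Lemma~\ref{weakSubstLemma} to the pair $(\sigma,\tau')$, whose side condition holds by construction, rather than via your detour through the quotient variables $V'$ and an extra substitution step; the difference is purely cosmetic.
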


\begin{lemma} \label{substLemma} Let $V$ and $W$ be disjoint sets, let
  $W_0\subseteq W$, let $\Phi \subseteq \Prop(V)$, let
  $\psi \in \Prop(W_0)$, and let $\sigma$ and $\tau$ be
  $W$-substitutions on $V$ such that $\tau(a) = \tau(b)$ whenever
  $\sigma(a) = \sigma(b)$ and $\tau(a) = c$ whenever $\sigma(a) = c$
  for all $a, b \in V$ and $c \in W_0$, and moreover
  $\Phi\sigma \vdash_{PL} \psi$. Then $\Phi\tau \vdash_{PL}
  \psi$.
\end{lemma}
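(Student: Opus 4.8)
The plan is to reduce \ref{substLemma} to Lemma~\ref{weakSubstLemma} (equivalently, to Lemma~\ref{crr}) by absorbing the variables of~$\psi$ into the source of the substitutions. First I would set $V'=V\cup W_0$, which is a disjoint union since $V\cap W=\emptyset$ and $W_0\subseteq W$, and extend $\sigma,\tau$ to $W$-substitutions $\sigma',\tau'$ on~$V'$ by letting both act as the identity on $W_0$: $\sigma'|_V=\sigma$, $\tau'|_V=\tau$, and $\sigma'(c)=\tau'(c)=c$ for $c\in W_0$. Reading substitution as simultaneous replacement, the facts that make this work are immediate: since $\psi\in\Prop(W_0)\subseteq\Prop(V')$ and both $\sigma',\tau'$ fix $W_0$ pointwise, we have $\psi\sigma'=\psi=\psi\tau'$; and since $\Phi\subseteq\Prop(V)\subseteq\Prop(V')$ and $\sigma',\tau'$ extend $\sigma,\tau$, we have $\Phi\sigma'=\Phi\sigma$ and $\Phi\tau'=\Phi\tau$.

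Next I would verify that $\sigma',\tau'$ satisfy the hypothesis of Lemma~\ref{weakSubstLemma} on~$V'$, i.e.\ $\tau'(a)=\tau'(b)$ whenever $\sigma'(a)=\sigma'(b)$ for $a,b\in V'$, by a three-way case distinction: for $a,b\in V$ this is precisely the first hypothesis of \ref{substLemma}; for $a\in V$ and $b=c\in W_0$ (and the symmetric case), $\sigma'(a)=\sigma'(c)$ forces $\sigma(a)=c$, whence the second hypothesis gives $\tau(a)=c=\tau'(c)$; and for $a,b\in W_0$ it is trivial. Since our assumption $\Phi\sigma\vdash_{PL}\psi$ reads as $\Phi\sigma'\vdash_{PL}\psi\sigma'$, Lemma~\ref{weakSubstLemma} yields $\Phi\tau'\vdash_{PL}\psi\tau'$, that is, $\Phi\tau\vdash_{PL}\psi$, as desired. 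Alternatively, Lemma~\ref{crr} applied to $\sigma',\tau'$ gives $\Phi\tau'\cup\Psi'\vdash_{PL}\psi$ with $\Psi'=\{\tau'(a)\leftrightarrow\tau'(b)\mid a,b\in V',\ \sigma'(a)=\sigma'(b)\}$; by the same case distinction every element of $\Psi'$ has the form $\chi\leftrightarrow\chi$, hence is a tautology, and the claim follows.

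I do not expect a serious obstacle; the only point requiring care is that the extension by the identity on $W_0$ really leaves $\psi$ and $\Phi$ (applied via $\sigma',\tau'$) unchanged in the claimed way, which is exactly why the two-part hypothesis and the restriction $\psi\in\Prop(W_0)$ are present. The minor subtlety that the source $V'$ and target $W$ of $\sigma',\tau'$ overlap in $W_0$ is harmless under simultaneous substitution (each $c\in W_0$ is replaced by itself); if one wishes to avoid the overlap altogether, one can replace $W_0$ inside $V'$ by a fresh isomorphic copy and postcompose with the evident renaming, which changes nothing of substance.
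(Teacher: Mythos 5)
Your proposal is correct and follows essentially the same route as the paper: the paper's proof likewise extends $\sigma,\tau$ to $W$-substitutions on $V\cup W_0$ that fix $W_0$ pointwise and then invokes Lemma~\ref{weakSubstLemma}. You merely spell out the case distinction verifying the hypothesis of that lemma, which the paper leaves implicit.
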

\begin{proof} Let $\sigma'$ and $\tau'$ be the $W$-substitutions on
  $V \cup W_0$ such that $\sigma'(w) = \tau'(w) = w$ for $w \in W_0$ and
  $\sigma'(v) = \sigma(v)$, $\tau'(v) = \tau(v)$ for $v\in V$.  The
  claim then follows by Lemma \ref{weakSubstLemma}.
\end{proof}

\begin{proof*}{Proof of Proposition~\ref{sfp}.}
  \iffalse By the definition of 0-1-step derivation, there is a finite
  set $\Phi$ of $\pow X$-instances of axioms such that
  $\Phi \vdash _{\textup{PL}} \psi$. By renaming of variables we
  assume without loss of generality that all formulae in $\Phi$ have
  been instantiated by a single $\pow X$-valuation $\sigma$. By Lemma
  \ref{substLemma} we have to show that there is a
  $\Prop(\mathfrak{A})$-valuation $\tau$ that solves the following
  system of equations: \fi Let $V$ be a sufficiently large set of
  propositional variables. Then there are finite sets $\Phi_1$ of
  $\Prop(V)$-instances of axioms and~$\Phi_2\subseteq\Prop(V)$ that we
  can assume to be instantiated by a single $\pow X$-valuation
  $\sigma$ such that the formulae in~$\Phi_2\sigma$ are
  propositionally valid over~$\pow X$ and
  $(\Phi_1\cup\Phi_2)\sigma \vdash _{\textup{PL}} \psi$. By Lemma
  \ref{substLemma}, it suffices to show that there is a
  $\Prop(\mathfrak{A})$-substitution $\tau$ that solves the following
  system of equations:

  \begin{itemize} \item For all subformulae $L\rho, L\rho'$ in
    $\Phi_1$ such that $(L \rho)\sigma=(L\rho')\sigma$ in
    $\Lambda(\pow X)$, we have $(L \rho)\tau=(L\rho')\tau$ in
    $\Lambda(\pow X)$.  This amounts to an equation $\rho = \rho'$.
    
  \item For all subformula $LA$ in $\psi$ and $L \rho$ in $\Phi_1$
    such that $LA=(L \rho)\sigma$ in $\Lambda(\pow X)$, we have
    $LA=(L \rho)\tau$ in $\Lambda(\pow X)$. This amounts to an
    equation $A = \rho$.
    
  \item For all subformulae $\rho,\rho'$ in $\Phi_1\cup\Phi_2$ that do
    not lie beneath a modal operator and are such that
    $\rho\sigma=\rho'\sigma$, we have $\rho\tau=\rho'\tau$ in
    $\pow X$. This amounts to an equation $\rho = \rho'$.
    
  \item For all subformulae $A$ in $\psi$ and $\rho$ in
    $\Phi_1\cup\Phi_2$ that do not lie beneath a modal operator and
    are such that $\rho\sigma=A$ in $\pow X$, we have $\rho\tau=A$ in
    $\pow X$.  This amounts to an equation $A = \rho$.

  \end{itemize}
  By construction, this system of Boolean equations is solvable by
  $\sigma$, and since only sets from $\mathfrak{A}$ appear in the
  equations, by Lemma \ref{boolEq}.(\ref{item:ba-solve}) it is also
  solvable by a $\Prop(\mathfrak{A})$-substitution with the required
  properties.
\end{proof*}

\section{Copointed Coalgebraic Logics are
  Non-Iterative}\label{sec:fmp}

We next establish that weakly copointed functors are indeed
characterized by non-iterative axioms; that is, we \emph{fix for this
  section a $\Lambda$-structure~$\Struct$ based on a weakly copointed
  functor~$T$} and show that there is a set of non-iterative axioms
that is sound and weakly complete over the class of all proper
$T$-coalgebras. (We necessarily restrict to weak completeness, since
coalgebraic modal logics in general fail to be
compact~\cite{DBLP:journals/jlp/Schroder07}). In more detail, we show
that 0-1-step completeness of a non-iterative axiomatization implies
its weak completeness over finite models, and we show that the set of
all 0-1-step sound non-iterative axioms is 0-1-step complete. The
proofs are fairly straightforward generalizations of the rank-1
case~\cite{DBLP:journals/jlp/Schroder07}. We begin with the latter
step:

\begin{theorem}
	\label{converse}
	The set of all 0-1-step sound 0-1-step axioms is 0-1-step complete.
\end{theorem}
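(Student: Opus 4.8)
The plan is to show that any 0-1-step formula $\psi$ over $\pow X$ with $TX\models^{\zeroone}_X\psi$ is derivable using only 0-1-step sound axioms. The key idea is that the axiom schema format is expressive enough to encode exactly the "validity template" witnessed by $\psi$: we abstract the interpreted propositional atoms occurring in $\psi$ back into genuine propositional variables, and prove that the resulting abstract formula, when it is valid over \emph{all} $\pow X$-valuations (hence in particular 0-1-step sound), can serve as the required axiom instance. Concretely, let $\mathfrak A\subseteq\pow X$ be the sets occurring in $\psi$, and let $a_B$ ($B\in\mathfrak A$) be fresh propositional variables; write $\hat\psi\in\Prop(\Lambda(\Prop(\{a_B\}))\cup\{a_B\})$ for the formula obtained from $\psi$ by replacing each occurrence of $B$ (whether under a modality or at top level) with $a_B$, so that $\hat\psi\tau_0=\psi$ for the valuation $\tau_0(a_B)=B$. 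The delicate point is that $\hat\psi$ by itself need not be 0-1-step sound, because $\psi$'s validity may rely on Boolean relations among the $B$'s (e.g.\ $B_1\subseteq B_2$, or $B_3=B_1\cap B_2$) that are accidents of the particular $\pow X$ at hand.

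To handle this, first I would invoke Proposition~\ref{sfp} to assume $\psi$ is 0-1-step derivable using only $\Prop(\mathfrak A)$-instances of axioms and $\Prop(\mathfrak A)$-formulae valid over $\pow X$. The top-level propositional tautologies over $\pow X$ contribute exactly the Boolean laws satisfied by the $B\in\mathfrak A$; collect these into a single conjunction $\chi\in\Prop(\{a_B\})$ with $\Sem{\chi\tau_0}=X$, chosen (by finiteness of $\mathfrak A$) as a representative of all such laws up to propositional equivalence. Then the formula $\chi\to\hat\psi$ \emph{is} 0-1-step sound over~$T$: given any set~$Y$ and any $\pow Y$-valuation $\tau$, if $\Sem{\chi\tau}\neq Y$ we are done, and if $\Sem{\chi\tau}=Y$ then the images $\tau(a_B)$ satisfy in $\pow Y$ all the Boolean relations that held among the $B$ in $\pow X$, so the derivation of $\psi$ transports along the substitution $a_B\mapsto\tau(a_B)$ — this is exactly the content of Lemma~\ref{substLemma} / Lemma~\ref{weakSubstLemma}, since the derivation used only those relations — yielding $TY\models^{\zeroone}_Y\hat\psi\tau$, i.e.\ $TY\models^{\zeroone}_Y(\chi\to\hat\psi)\tau$.

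Hence $\chi\to\hat\psi$ is a 0-1-step sound axiom and may be used in 0-1-step derivations. Instantiating it by $\tau_0$ gives $\nondash\chi\tau_0\to\psi$; since $\chi\tau_0\in\Prop(\pow X)$ is propositionally valid over $\pow X$, it is 0-1-step derivable by the third rule, and modus ponens (a propositional inference, covered by the $(P)$-style rule) yields $\nondash\psi$. This establishes 0-1-step completeness of the set of all 0-1-step sound axioms.

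The main obstacle is the soundness-transport step in the second paragraph: one must be careful that the derivation witnessing $\psi$ really does factor through \emph{only} the Boolean relations recorded in $\chi$, so that it survives any substitution preserving those relations (including substitutions that identify or further relate the variables). This is precisely why Proposition~\ref{sfp} and the substitution lemmas (Lemmas~\ref{weakSubstLemma}--\ref{substLemma}) are set up as they are; the bookkeeping — ensuring $\chi$ captures all and only the needed identities, and that the instances of the \emph{original} axioms appearing in the derivation also transport — is routine given those tools but is where all the work sits. A minor additional point is to confirm that $\chi\to\hat\psi$ indeed has the syntactic shape of a non-iterative axiom, which is immediate since $\hat\psi$ is a non-iterative formula over $\{a_B\}$ and prepending the propositional premiss $\chi$ keeps it in $\Prop(\Lambda(\Prop(V))\cup V)$.
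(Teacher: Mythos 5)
Your overall decomposition matches the paper's: abstract the interpreted atoms of $\psi$ into fresh variables, guard the abstracted formula $\hat\psi$ by the conjunction $\chi$ of all Boolean laws valid under the original valuation $\tau_0$, argue that the guarded formula is 0-1-step sound, and then recover $\nondash\psi$ by instantiating with $\tau_0$ and discharging the propositionally valid guard. However, your justification of the crucial claim --- that $\chi\to\hat\psi$ is 0-1-step sound --- is circular. You invoke Proposition~\ref{sfp}, whose hypothesis is $\nondash\psi$, i.e.\ that $\psi$ is already 0-1-step \emph{derivable}; but all you are given is $TX\models^{\zeroone}_X\psi$, and the passage from validity to derivability is exactly the completeness statement being proved. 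Consequently there is no ``derivation of $\psi$'' to transport along the substitution $a_B\mapsto\tau(a_B)$, and Lemmas~\ref{weakSubstLemma}--\ref{substLemma} (which transport derivations, not validity) cannot do the work you assign to them; even if a derivation existed, transporting it would yield derivability over $\pow Y$, not the semantic statement $TY\models^{\zeroone}_Y\hat\psi\tau$ that soundness requires. What is needed is a purely \emph{semantic} transport, and this is how the paper argues: from $\Sem{\chi\tau}=Y$ one gets that every atom-profile realized in $Y$ under $\tau$ is already realized in $X$ under $\tau_0$, hence a map $f\colon Y\to X$ with $\tau(a_B)=f^{-1}[\tau_0(a_B)]$; naturality of the predicate liftings (including $\iota$) and commutation of preimages with Boolean operations then give $\Sem{\hat\psi\tau}^{\zeroone}_Y=(Tf)^{-1}[\Sem{\hat\psi\tau_0}^{\zeroone}_X]=(Tf)^{-1}[TX]=TY$.

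A second, independent problem is your case split ``if $\Sem{\chi\tau}\neq Y$ we are done.'' For 0-1-step soundness of the \emph{axiom} $\chi\to\hat\psi$ you must show $t\models^{\zeroone}_Y(\chi\to\hat\psi)\tau$ for \emph{every} $t\in TY$; since $t\models^{\zeroone}_Y\chi\tau$ holds iff $\Sem{\chi\tau}\in\varepsilon(t)$ and $\varepsilon(t)$ is an ultrafilter, there can be $t$ satisfying $\chi\tau$ even when $\Sem{\chi\tau}\neq Y$, and for those $t$ you still owe $t\models^{\zeroone}_Y\hat\psi\tau$. This is precisely why the paper proves soundness for the \emph{rule} $\chi/\hat\psi$ (whose soundness only requires the conclusion when $\Sem{\chi\tau}=Y$) and then appeals to Remark~\ref{rem:rules}, where the rule-to-axiom conversion goes through a projective unifier of $\chi$ rather than the naive implication $\chi\to\hat\psi$.
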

\begin{proof}
  By Remark~\ref{rem:rules}, it suffices to show that the set of all
  0-1-step sound non-iterative rules is 0-1-step complete. Let
  $TX\models^{\zeroone}_X \psi$ for a 0-1-step formula~$\psi$
  over~$\pow X$. Then~$\psi$ has the form $\psi=\psi_0\tau$ for
  $\psi_0\in\Prop(\Lambda(V_0)\cup V_0)$, with $V_0\subseteq V$
  finite, and a $\pow X$-valuation $\tau$. Let $\phi$ be the
  conjunction of all clauses~$\chi$ over~$V_0$ such that
  $\Sem{\chi \tau}=X$; then $\Sem{\phi\tau}=X$. We are thus done once
  we show that $\phi / \psi_0$ is 0-1-step sound.  So assume
  $\Sem{\phi \sigma}=Y$ for a $\pow Y$-valuation $\sigma$. We have to
  show $TY \models^{\zeroone}_Y \psi_0 \sigma$. For each $y \in Y$
  there is $x \in X$ such that for all $a \in V_0$ we have
  $x \in \tau(a)$ iff $y \in \sigma(a)$ (otherwise there is a
  clause~$\chi$ over~$V_0$ such that $X \models \chi\tau$ but
  $Y \not \models \chi \sigma$, contradicting
  $Y \models \phi \sigma$). Therefore there is $f: Y \rightarrow X$
  such that $\sigma(a) = f^{-1}[\tau(a)]$ for all $a \in V_0$. By
  naturality of predicate liftings (including~$\iota$) and commutation
  of preimage with all Boolean operations, we have
  $\llbracket \psi_0 \sigma \rrbracket_Y^{\zeroone} =
  Tf^{-1}[\llbracket \psi_0 \tau \rrbracket_X^{\zeroone}]$, and
  therefore $TY \models^{\zeroone}_Y \psi_0 \sigma$ as required.
\end{proof}
\noindent We will base all our model constructions on the following
central notions:
\begin{definition}
  A set~$\Sigma$ of formulae is \emph{closed} if it is closed under
  subformulae and negations of formulae that are not themselves
  negations. We write $C_\Sigma$ for the set of maximally consistent
  subsets of~$\Sigma$.  For a $\Lambda$-formula~$\phi$, we write
$\hat{\phi}=\{\Phi\in C_\Sigma \mid \phi\in\Phi\}$.
\end{definition}
\begin{lemma}\cite[Lemma~27]{DBLP:journals/jlp/Schroder07} \label{allDerivable}
  Let $\phi$ be a propositional formula over $V$, $\sigma$ a
  $\Sigma$-substitution and $\hat{\sigma}$ a
  $\pow(C_\Sigma)$-valuation with $\hat{\sigma}(a) = \hat{\psi}$
  when $\sigma(a) = \psi$. Then
  $ \llbracket \phi\hat{\sigma} \rrbracket = C_\Sigma$ iff
  $\ldash \phi\sigma$.
\end{lemma}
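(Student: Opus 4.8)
The statement to prove is Lemma~\ref{allDerivable}: for a propositional formula $\phi$ over $V$, a $\Sigma$-substitution $\sigma$, and a $\pow(C_\Sigma)$-valuation $\hat\sigma$ with $\hat\sigma(a)=\hat\psi$ whenever $\sigma(a)=\psi$, we have $\llbracket\phi\hat\sigma\rrbracket=C_\Sigma$ iff $\ldash\phi\sigma$. This is cited as Lemma~27 of an existing paper, so the task is really to explain why it holds in the present setting. The core is the Lindenbaum-style fact that, for a closed set $\Sigma$, a consistent subset extends to a maximally consistent subset of $\Sigma$, together with the observation that $\hat\psi$ tracks exactly the set of maximally consistent $\Phi\in C_\Sigma$ containing $\psi$.

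First I would record the key bridge: for any $\psi\in\Sigma$, the extension $\hat\psi=\{\Phi\in C_\Sigma\mid\psi\in\Phi\}$ behaves like a Boolean-algebra homomorphism image, i.e. $\widehat{\neg\psi}=C_\Sigma\setminus\hat\psi$ and $\widehat{\psi_1\wedge\psi_2}=\hat{\psi_1}\cap\hat{\psi_2}$ (using maximal consistency and closure of $\Sigma$ under the relevant subformulae), so that $\llbracket\phi\hat\sigma\rrbracket=\{\Phi\in C_\Sigma\mid\phi\sigma\in\Phi\}$ by an easy induction on the propositional structure of $\phi$ — here one uses that each $\Phi$, being maximally consistent in the closed set $\Sigma$, decides every subformula of $\phi\sigma$. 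For the forward direction, suppose $\not\ldash\phi\sigma$, i.e.\ $\neg\phi\sigma$ is consistent; then, since $\Sigma$ is closed and contains the subformulae of $\phi\sigma$, the set $\{\neg\phi\sigma\}$ (intersected with $\Sigma$) extends by a Lindenbaum argument to some $\Phi\in C_\Sigma$ with $\phi\sigma\notin\Phi$, so $\Phi\notin\llbracket\phi\hat\sigma\rrbracket$ and $\llbracket\phi\hat\sigma\rrbracket\neq C_\Sigma$. For the converse, if $\ldash\phi\sigma$ then $\phi\sigma\in\Phi$ for every maximally consistent $\Phi$ (a consistent set cannot omit a theorem), so $\llbracket\phi\hat\sigma\rrbracket=C_\Sigma$.

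The one point requiring slight care is the Lindenbaum step relative to a \emph{closed} set $\Sigma$ rather than the full language: one must check that a consistent $\Gamma\subseteq\Sigma$ can be extended within $\Sigma$ to a maximal consistent subset, which works precisely because closure under subformulae and under negation-of-non-negations ensures that at each stage adding either $\chi$ or $\neg\chi$ (for $\chi\in\Sigma$) keeps us inside $\Sigma$ and at least one choice preserves consistency. The other mild subtlety is the compatibility condition on $\hat\sigma$: since $\hat\sigma(a)$ is only required to agree with $\hat\psi$ when $\sigma(a)=\psi$, the induction is well-posed only because the clause value $\hat\psi$ depends on $\psi=\sigma(a)$ alone, which is exactly what the hypothesis guarantees.

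I do not expect a genuine obstacle here, since this is a known lemma transported verbatim; the only thing to be careful about is not to conflate "maximally consistent subset of $\Sigma$" with "maximally consistent set" in the full language, and to invoke the closure properties of $\Sigma$ at exactly the two places above (the Boolean homomorphism computation and the Lindenbaum extension). Accordingly the write-up can be kept short, citing \cite[Lemma~27]{DBLP:journals/jlp/Schroder07} for the detailed bookkeeping and indicating that the argument is unchanged in the non-iterative setting because it is purely propositional and makes no reference to the rank of formulae.
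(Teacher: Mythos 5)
Your proposal is correct and takes essentially the same approach as the paper, which simply cites \cite[Lemma~27]{DBLP:journals/jlp/Schroder07}; the underlying argument there is exactly your induction on $\phi$ via the Hintikka property of maximally consistent sets in the closed set $\Sigma$, combined with a Lindenbaum extension to handle the non-derivable case. The one point to tighten is that $\phi\sigma$ itself (and its outer Boolean subformulae) need not lie in $\Sigma$, so the bridge should read $\llbracket\phi\hat\sigma\rrbracket=\{\Phi\in C_\Sigma\mid\Phi\ldash\phi\sigma\}$ rather than literal membership $\phi\sigma\in\Phi$ (correspondingly, in the converse direction a maximally consistent $\Phi$ \emph{derives} every theorem rather than containing it).
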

\begin{definition}
  Let~$\Sigma$ be closed. A coalgebra $(C_\Sigma, \xi)$ is
  \emph{coherent} if for all $L\psi\in\Sigma$, $\Phi\in C_\Sigma$,
  \begin{equation*}
    \xi(\Phi) \in \llbracket L \rrbracket_{C_\Sigma}(\hat{\psi})
    \quad\text{iff}\quad L\psi \in \Phi.
  \end{equation*}
\end{definition}
\begin{lemma}[Truth
  lemma~\cite{DBLP:journals/jlp/Schroder07}]\label{truthLemma}
  Let $\Sigma$ be closed, and let $C = (C_\Sigma, \xi)$ be a coherent
  $T$-coalgebra and let $\phi \in \Sigma$. For all $\phi \in \Sigma$ we then
  have $\Phi \models_C \phi$ iff $\phi \in \Phi$.
\end{lemma}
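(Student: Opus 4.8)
The plan is to prove the Truth Lemma by induction on the structure of $\phi\in\Sigma$, using the fact that $\Sigma$ is closed so that all subformulae we encounter remain in $\Sigma$ and hence the induction hypothesis applies to them. The statement to establish is that for a coherent $T$-coalgebra $C=(C_\Sigma,\xi)$ and $\phi\in\Sigma$, we have $\Phi\models_C\phi$ iff $\phi\in\Phi$ for every $\Phi\in C_\Sigma$. Recall that $C_\Sigma$ is the set of maximally consistent subsets of $\Sigma$, and $\hat\phi=\{\Phi\in C_\Sigma\mid\phi\in\Phi\}$; note that the extension $\Sem{\phi}_C$ computed in the coalgebra is exactly what we want to show equals $\hat\phi$.

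First I would handle the Boolean cases. For $\phi=\bot$: $\bot\notin\Phi$ since $\Phi$ is consistent, and $\Phi\not\models_C\bot$ by definition, so both sides are false. For $\phi=\neg\psi$ (where, since $\Sigma$ is closed and $\psi$ is not itself a negation — or even if it is, via the standard argument): by maximal consistency of $\Phi$ within $\Sigma$, $\neg\psi\in\Phi$ iff $\psi\notin\Phi$, and by the semantic clause $\Phi\models_C\neg\psi$ iff $\Phi\not\models_C\psi$; the induction hypothesis applied to $\psi\in\Sigma$ closes the case. Here I would spell out the maximal-consistency fact: since $\Sigma$ is closed under negations of non-negations, either $\psi$ or $\neg\psi$ lies in any maximal consistent $\Phi$, and not both (consistency), so exactly one does — and when $\psi$ itself has the form $\neg\chi$, one uses that $\neg\neg\chi$ is provably equivalent to $\chi$ together with the congruence/propositional rules. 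The conjunction case $\phi=\psi_1\wedge\psi_2$ is similar: $\psi_1\wedge\psi_2\in\Phi$ iff $\psi_1\in\Phi$ and $\psi_2\in\Phi$ by maximal consistency (using propositional reasoning $(P)$), matching the semantic clause, and the induction hypothesis handles $\psi_1,\psi_2$.

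The modal case $\phi=L\psi$ (restricting to unary modalities as the paper does) is where coherence does the work, and it is essentially immediate rather than an obstacle. By the induction hypothesis applied to $\psi\in\Sigma$ (which is in $\Sigma$ since $\Sigma$ is closed under subformulae), we have $\Sem{\psi}_C=\hat\psi$. Then $\Phi\models_C L\psi$ iff $\xi(\Phi)\in\Sem{L}_{C_\Sigma}(\Sem{\psi}_C)=\Sem{L}_{C_\Sigma}(\hat\psi)$ by the satisfaction clause for modalities, and by coherence this holds iff $L\psi\in\Phi$. That closes the induction.

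The only genuinely delicate point is the negation case when $\psi$ is itself a negation, since the definition of ``closed'' only guarantees closure under negations of non-negations; here one argues that if $\psi=\neg\chi$ then $\chi\in\Sigma$ by subformula closure, and for maximal consistent $\Phi$ one has $\neg\neg\chi\in\Phi$ iff $\chi\in\Phi$ (by consistency, maximality within $\Sigma$, and the derivability of $\neg\neg\chi\leftrightarrow\chi$), so the semantic double-negation elimination matches. I expect no real obstacle: the lemma is the standard truth lemma for coalgebraic modal logic, and the only content beyond routine induction is the invocation of coherence in the modal step, which is exactly how coherence was designed. Since this is essentially \cite[Lemma~29 or similar]{DBLP:journals/jlp/Schroder07} transported verbatim to the 0-1-step / copointed setting — the Boolean structure and the coherence condition being unchanged — I would present the proof compactly, emphasizing that closedness of $\Sigma$ is precisely what keeps the induction hypothesis available at each step.
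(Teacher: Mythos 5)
Your proof is correct and follows the standard route: induction on $\phi$, Hintikka-style reasoning from maximal consistency in the Boolean cases, and the coherence condition closing the modal case once the induction hypothesis identifies $\Sem{\psi}_C$ with $\hat\psi$. The paper does not reprove this lemma but imports it from the cited rank-1 work, where the argument is exactly the one you give, so there is nothing to add.
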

\noindent Thus, model constructions reduce to showing the existence of
coherent coalgebra structures. The latter requires the following
lemma, which for later reuse we prove for possibly infinite~$\Sigma$:

% \begin{definition}
% 	A coalgebra $(X, \xi)$ with $X \subseteq \pow(\form(\Lambda))$ is
% 	\emph{coherent} if $$\xi(\Phi) \in \llbracket L \rrbracket(\hat{\psi})\text{ iff }L(\psi) \in \Phi$$
% \end{definition}

\begin{lemma} \label{ci} Let $V_\Sigma$ denote the set
  $\{a_\phi \mid \phi \in \Sigma\}$, and let
  $\Phi\subseteq\Prop(\Lambda(V_\Sigma) \cup V_\Sigma)$. Let
  $\sigma$ be the substitution given by $\sigma(a_\phi) = \phi$, and
  let $\hat{\sigma}$ be the $\pow C_\Sigma$-valuation given by
  $\hat{\sigma}(a_\phi) = \hat{\phi}$. If $\Phi \sigma$ is consistent,
  then $\Phi\hat{\sigma}$ is 0-1-step consistent.
\end{lemma}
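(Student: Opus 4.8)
The plan is to prove the contrapositive: from the assumption that $\Phi\hat\sigma$ is not 0-1-step consistent I would derive $\Phi\sigma\vdash_\Lang\bot$. Unfolding the definition of $\nondash$, there are finitely many $\chi_1,\dots,\chi_n\in\Phi$ with $\nondash\neg(\chi_1\hat\sigma\wedge\dots\wedge\chi_n\hat\sigma)$, so it suffices to show $\{\chi_1\sigma,\dots,\chi_n\sigma\}\vdash_\Lang\bot$. The first key step is to apply the subformula property (Proposition~\ref{sfp}) to the 0-1-step formula $\neg(\chi_1\hat\sigma\wedge\dots\wedge\chi_n\hat\sigma)$ over $\pow C_\Sigma$: its occurring sets all lie in $\mathfrak{A}=\{\hat\phi\mid a_\phi\text{ occurs in some }\chi_i\}$, so there are a finite set $\Phi_1$ of $\Prop(\mathfrak{A})$-instances of axioms and a finite set $\Phi_2$ of $\Prop(\mathfrak{A})$-formulae propositionally valid over $\pow C_\Sigma$ with $\Phi_1\cup\Phi_2\cup\{\chi_1\hat\sigma,\dots,\chi_n\hat\sigma\}\vdash_{PL}\bot$, where $\vdash_{PL}$ treats distinct interpreted atoms and distinct modalized subformulae as distinct propositional variables (instances of non-iterative rules, if the logic is presented with rules, are handled exactly like axiom instances).

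Next I would lift $\Phi_1$ and $\Phi_2$ back along $\hat\sigma$. Every member of $\mathfrak{A}$ equals $\hat\phi$ for some $a_\phi$ occurring in the $\chi_i$; fixing such a representative for each member of $\mathfrak{A}$, I replace, throughout each $\theta\in\Phi_1\cup\Phi_2$, every occurring set $\hat\phi$ by the variable $a_\phi$ --- in the substitutions used to form the axiom instances in $\Phi_1$, and directly in the formulae of $\Phi_2$ --- obtaining formulae $\theta^\dagger$ over $V_\Sigma$. Using that $\widehat{(\cdot)}$ commutes with the Boolean operations on $C_\Sigma$ (whose elements are maximally consistent sets), this is set up so that $\theta^\dagger\hat\sigma=\theta$; and $\theta^\dagger\sigma$ is $\Lang$-derivable, being a substitution instance of an axiom (hence derivable by rule $(Ax)$) for $\theta\in\Phi_1$, and a propositional formula with $\Sem{\theta^\dagger\hat\sigma}=\Sem{\theta}=C_\Sigma$ (hence derivable by Lemma~\ref{allDerivable}) for $\theta\in\Phi_2$.

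It then remains to transfer the propositional inconsistency. Writing $\Phi_1^\dagger$ and $\Phi_2^\dagger$ for the lifted sets, we have $(\Phi_1^\dagger\cup\Phi_2^\dagger\cup\{\chi_1,\dots,\chi_n\})\hat\sigma\vdash_{PL}\bot$. Abstracting each occurring modalized subformula by a fresh propositional variable, this matches the hypothesis of Lemma~\ref{crr}, with $\hat\sigma$ in the role of the ``identifying'' substitution and $\sigma$ in the other role; it yields $(\Phi_1^\dagger\cup\Phi_2^\dagger\cup\{\chi_i\}_i)\sigma\cup\Psi\vdash_{PL}\bot$, where $\Psi$ collects the formulae $p\sigma\leftrightarrow q\sigma$ for atoms $p,q$ with $p\hat\sigma=q\hat\sigma$. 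Each such formula is $\Lang$-derivable: if $p,q$ are variables $a_\phi,a_\psi$ with $\hat\phi=\hat\psi$, this is Lemma~\ref{allDerivable}; if $p,q$ are modalized subformulae $L\rho,L\rho'$, then $p\hat\sigma=q\hat\sigma$ forces $\Sem{\rho\hat\sigma}=\Sem{\rho'\hat\sigma}$, whence $\vdash_\Lang\rho\sigma\leftrightarrow\rho'\sigma$ by Lemma~\ref{allDerivable} and so $\vdash_\Lang L(\rho\sigma)\leftrightarrow L(\rho'\sigma)$ by the congruence rule $(C)$ --- and atoms of these two kinds are never identified by $\hat\sigma$. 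Since all formulae in $\Psi$, $\Phi_1^\dagger\sigma$ and $\Phi_2^\dagger\sigma$ are $\Lang$-derivable, rule $(P)$ gives $\vdash_\Lang\neg(\chi_1\sigma\wedge\dots\wedge\chi_n\sigma)$, so that $\Phi\sigma$ is inconsistent.

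I expect the main obstacle to be the last step: organizing the propositional abstraction so that modalized subformulae coming from the axiom instances in $\Phi_1$ and those coming from the $\chi_i$ are treated uniformly, and checking that every identification of atoms made by $\hat\sigma$ can be absorbed into an $\Lang$-derivable equivalence (this is where Lemma~\ref{allDerivable} and the congruence rule do the work). By comparison, the role of Proposition~\ref{sfp} --- ensuring that only ``nameable'' sets $\hat\phi$ occur, so that the pullback along $\hat\sigma$ is well-defined even when $\Sigma$, and hence $C_\Sigma$, is infinite --- and the appeal to rule $(Ax)$ are comparatively routine.
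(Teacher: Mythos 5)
Your proposal is correct and follows essentially the same route as the paper's proof: contraposition, Proposition~\ref{sfp} to confine the derivation to $\Prop(\mathfrak{A})$-instances over nameable sets $\hat\phi$, lifting these back along $\hat\sigma$ to formulae over $V_\Sigma$, and then Lemma~\ref{crr} together with Lemma~\ref{allDerivable} and the congruence rule to discharge the identifications made by $\hat\sigma$ and transfer the propositional inconsistency to $\Phi\sigma$. Your version is merely more explicit about the lifting and about which equivalences in $\Psi$ are derivable by which rule.
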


\begin{proof} By contraposition; so assume
  $\Phi \hat{\sigma} \vdash^{\zeroone}_\Sigma \bot$. By
  Proposition~\ref{sfp}, there is a derivation that uses only
  $\Prop(\mathfrak{A})$-instances of axioms and
  $\Prop(\mathfrak{A})$-formulae valid over $\pow C_\Sigma$, for
  $\mathfrak{A} = \{\hat{\phi} \mid \phi \in \form(\Lambda)\}$. We can
  write the set of these formulae as $\Theta\hat\sigma$ for a
  set~$\Theta\subseteq\Prop(\Lambda(V_\Sigma) \cup V_\Sigma)$. By
  the definition of 0-1-step derivations, it follows that
  $(\Phi \cup \Theta)\hat{\sigma} \vdash_{PL} \bot$. Now let $\Psi$
  denote the set
  $\{L \rho\leftrightarrow L\rho' \mid \hat{\rho} =
  \hat{\rho}'\}$. The formulae in~$\Psi$ are derivable in $\Lang$ by
  Lemma~\ref{allDerivable} and the congruence rule. Similarly, let
  $\Gamma = \{ \phi \leftrightarrow \phi' \mid \hat{\phi} =
  \hat{\phi}' \}$; the formulae in~$\Gamma$ are $\Lang$-derivable by
  Lemma~\ref{allDerivable}.  By Lemma~\ref{crr}, it follows that
  $(\Phi \cup \Theta)\sigma \cup \Psi \cup \Gamma \vdash_{PL} \bot$
  and therefore (again using Lemma~\ref{allDerivable})
  $\Phi\sigma \ldash \bot$.
\end{proof}

\begin{lemma}[Finite existence lemma]\label{lem:ex-finite}
  Let $\Axioms$ be 0-1-step complete, and let~$\Sigma$ be a finite
  closed set of formulae.  Then there exists a coherent proper
  $T$-coalgebra structure $\xi$ on $C_\Sigma$.
\end{lemma}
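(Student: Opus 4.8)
The plan is to construct $\xi$ pointwise: for each $\Phi\in C_\Sigma$ I must find an element $\xi(\Phi)\in TC_\Sigma$ that (i) is proper, i.e.\ $\varepsilon_{C_\Sigma}(\xi(\Phi))=\dot\Phi$ (the principal ultrafilter at $\Phi$), and (ii) is coherent, i.e.\ for every $L\psi\in\Sigma$ we have $\xi(\Phi)\in\Sem{L}_{C_\Sigma}(\hat\psi)$ iff $L\psi\in\Phi$. Fixing $\Phi$, the existence of such an element amounts to the satisfiability of a single $0$-$1$-step formula over $\pow C_\Sigma$. Concretely, let $\chi_\Phi$ be the conjunction of all $L\psi$ with $L\psi\in\Phi$, all $\neg L\psi$ with $L\psi\in\Sigma\setminus\Phi$ (reading each $\psi$ as the interpreted atom $\hat\psi$), together with the top-level interpreted atom $\{\Phi\}$, which forces properness via the lifting $\iota$ since $\{\Phi\}\in\varepsilon(t)$ exactly when $\varepsilon(t)$ is the principal ultrafilter at $\Phi$. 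Then a coherent proper $\xi(\Phi)$ exists iff $\Sem{\chi_\Phi}^{\zeroone}_{C_\Sigma}\neq\emptyset$, i.e.\ iff $\chi_\Phi$ is $0$-$1$-step satisfiable.

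So the task reduces to showing $\chi_\Phi$ is $0$-$1$-step consistent, because $0$-$1$-step completeness of $\Axioms$ then yields satisfiability, and choosing one witness for each $\Phi$ defines $\xi$. To establish consistency I would invoke Lemma~\ref{ci}: write $\chi_\Phi=\Theta_\Phi\hat\sigma$ where $\hat\sigma$ is the canonical $\pow C_\Sigma$-valuation $\hat\sigma(a_\psi)=\hat\psi$ and $\Theta_\Phi\subseteq\Prop(\Lambda(V_\Sigma)\cup V_\Sigma)$ is the corresponding ``uninterpreted'' formula — here I use a fresh variable $a_{\{\Phi\}}$, or rather encode the top-level atom using a variable $a_\top$ whose interpretation under $\hat\sigma$ is forced to be $\{\Phi\}$; cleaner is to note that $\{\Phi\}=\bigcap_{\phi\in\Phi}\hat\phi\cap\bigcap_{\phi\notin\Phi}(C_\Sigma\setminus\hat\phi)$ since $\Sigma$ is finite, so $\{\Phi\}$ is itself of the form $\hat\rho$ for a suitable $\rho\in\Prop(\Lambda(V_\Sigma)\cup V_\Sigma)$ (a conjunction of literals built from the $a_\phi$). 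By Lemma~\ref{ci} it then suffices to show that $\Theta_\Phi\sigma$ is $\Lang$-consistent, where $\sigma(a_\phi)=\phi$; but $\Theta_\Phi\sigma$ is (propositionally equivalent to) the conjunction $\bigwedge\Phi\wedge L(\text{stuff})$ — more precisely it is the conjunction of all $L\psi\in\Phi$, all $\neg L\psi$ for $L\psi\in\Sigma\setminus\Phi$, and all literals $\phi$ (for $\phi\in\Phi$) and $\neg\phi$ (for $\phi\in\Sigma\setminus\Phi$) coming from the decomposition of $\rho$. Since $\Phi$ is a maximally consistent subset of the closed set $\Sigma$, this conjunction is exactly (equivalent to) $\bigwedge\Phi$, which is consistent by definition of $C_\Sigma$.

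The main obstacle is the bookkeeping at the top (non-modal) level: one must be careful that the interpreted atom pinning down properness, namely $\{\Phi\}$, is genuinely expressible as $\hat\rho$ for $\rho$ built from the variables $a_\phi$, $\phi\in\Sigma$ — this is where finiteness of $\Sigma$ is essential, and where the properness clause of coherence (via $\iota$ and the weak copoint) meets the purely syntactic machinery of Lemma~\ref{ci}. Once that encoding is in place, the argument is a clean chain: decompose $\chi_\Phi$ as $\Theta_\Phi\hat\sigma$; apply Lemma~\ref{ci} to reduce $0$-$1$-step consistency of $\chi_\Phi$ to $\Lang$-consistency of $\Theta_\Phi\sigma\equiv\bigwedge\Phi$; conclude consistency from $\Phi\in C_\Sigma$; apply $0$-$1$-step completeness to get a witness $t_\Phi\in\Sem{\chi_\Phi}^{\zeroone}_{C_\Sigma}$; set $\xi(\Phi)=t_\Phi$. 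Properness and coherence hold by construction of $\chi_\Phi$, and naturality plays no role here since $C_\Sigma$ is fixed. \qed
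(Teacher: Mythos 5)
Your proposal is correct and follows essentially the same route as the paper: reduce pointwise to the $0$-$1$-step consistency of a characteristic conjunction $\chi_\Phi$ of modal literals plus top-level atoms pinning down $\varepsilon(\xi(\Phi))=\dot\Phi$, apply Lemma~\ref{ci} to reduce that to $\Lang$-consistency of $\bigwedge\Phi$, and invoke $0$-$1$-step completeness to obtain a witness. The only cosmetic difference is that the paper uses the top-level atoms $\hat\psi$ for $\psi\in\Phi$ directly (whose intersection is $\{\Phi\}$ by finiteness and maximality), whereas you use the singleton $\{\Phi\}$ and then observe that it is expressible as $\hat\rho$ for a conjunction of literals --- the same finiteness argument in slightly different clothing.
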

\begin{proof}
  Let $\Phi \in C_\Sigma$. We show that the requirements on
  $\xi(\Phi)$ form a 0-1-step consistent 0-1-step formula, implying
  existence of~$\xi(\Phi)$ by 0-1-step completeness. Take $V_\Sigma$,
  $\sigma$ and~$\hat\sigma$ as in Lemma~\ref{ci}. Let
  \begin{equation*}
    \chi=\textstyle\bigwedge_{L \psi \in \Phi} L a_{\psi}\land
    \bigwedge_{\neg L \psi \in \Phi}\neg L a_\psi\land \bigwedge_{\psi \in \Phi} a_\psi.
  \end{equation*}
  We need to show that $\chi\hat\sigma$ is 0-1-step consistent.  By
  Lemma~\ref{ci}, this follows from consistency of $\chi\sigma$, which
  in turn is implied by consistency of~$\Phi$. % for every
  % $t \in TX$ either
  % $t \not \models^{\zeroone}_{C_\Sigma} L(a_{\psi_1}, \ldots,
  % a_{\psi_n}) \hat{\sigma}$ for some
  % $L(\psi_1, \ldots, \psi_n) \in \Phi$ or
  % $t \not \models^{\zeroone}_{C_\Sigma} a_\psi \hat{\sigma}$ for
  % some
  % $\psi \in \Phi$, since
  % $\bigcap_{\psi\in\Phi} a_\psi\hat{\sigma} = \{\Phi\}$ with the
  % only
  % ultrafilter containing $\{\Phi\}$ being $\dot{\Phi}$. We therefore
  % have
  % $TC_\Sigma \models^{\zeroone}_{C_\Sigma} \chi \hat{\sigma}$, and
  % hence $\nondash \chi\hat\sigma$ by 0-1-step completeness. Where this
  % derivation relies on equalities $L\hat\psi_1=L\hat\psi_2$, we have
  % $\ldash\psi_1\leftrightarrow\psi_2$ by Lemma~\ref{allDerivable}, so
  % $\vdash_\Lang \chi \sigma$, contradicting consistency of $\phi$.
\end{proof}
\noindent The announced weak completeness result now follows:

\begin{theorem}[Weak completeness and bounded model
  property]\label{thm:bmp}
  Let $\Axioms$ be 0-1-step complete for the $\Lambda$-structure
  $\Struct$. Then $\Axioms$ is weakly complete over finite proper
  $T$-coalgebras; specifically, every consistent formula~$\phi$ is
  satisfiable in a finite proper $T$-coalgebra of size at
  most~$2^{|\phi|}$.
\end{theorem}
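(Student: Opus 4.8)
The plan is to derive weak completeness and the bounded model property from the machinery assembled in this section, following the template of the rank-1 case. First I would fix a consistent formula~$\phi$ and let $\Sigma$ be the smallest closed set of formulae containing~$\phi$; since closure under subformulae and under (single) negations at most doubles the count, $|\Sigma|\le 2|\phi|$, and hence $|C_\Sigma|\le 2^{|\Sigma|}$ — but a sharper count, recording only which subformulae of~$\phi$ a maximally consistent subset contains, gives $|C_\Sigma|\le 2^{|\phi|}$. (This is the only place where the precise bookkeeping matters; the argument is otherwise insensitive to the exact size.) Because $\phi$ is consistent, it is contained in some maximally consistent subset $\Phi_0\in C_\Sigma$, so $\Phi_0\in\hat\phi$ and in particular $C_\Sigma\ne\emptyset$.

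Next I would invoke the Finite Existence Lemma (Lemma~\ref{lem:ex-finite}): since $\Axioms$ is assumed 0-1-step complete for~$\Struct$ and $\Sigma$ is finite and closed, there is a coherent proper $T$-coalgebra structure $\xi$ on~$C_\Sigma$. Write $C=(C_\Sigma,\xi)$; this is the desired model. By the Truth Lemma (Lemma~\ref{truthLemma}), for every $\psi\in\Sigma$ and every $\Phi\in C_\Sigma$ we have $\Phi\models_C\psi$ iff $\psi\in\Phi$. Applying this to $\psi=\phi$ and $\Phi=\Phi_0$ yields $\Phi_0\models_C\phi$, so $\phi$ is satisfiable in the finite proper $T$-coalgebra~$C$, whose carrier has size at most~$2^{|\phi|}$. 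Weak completeness over finite proper $T$-coalgebras follows, since any consistent formula is thereby satisfiable; equivalently, any formula valid over all proper $T$-coalgebras (a fortiori over finite ones) must be derivable.

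There is essentially no obstacle here: all the real work has already been done, in Lemma~\ref{lem:ex-finite} (which reduces the existence of a coherent proper coalgebra structure to 0-1-step consistency of the characteristic formula~$\chi\hat\sigma$, handled via Lemma~\ref{ci}), in the Truth Lemma, and in Proposition~\ref{sfp} underneath. The only point requiring a moment's care is the size estimate: one must observe that a maximally consistent subset of~$\Sigma$ is determined by its intersection with the subformulae of~$\phi$ (the rest of $\Sigma$ consists of negations of those, which are then forced), so that $|C_\Sigma|\le 2^{|\phi|}$ rather than the cruder $2^{2|\phi|}$. Everything else is a direct chaining of the lemmas just cited.
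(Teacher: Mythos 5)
Your proof is correct and follows essentially the same route as the paper's: take the smallest closed set $\Sigma$ containing $\phi$, apply the finite existence lemma to get a coherent proper coalgebra on $C_\Sigma$, place $\phi$ in some $\Phi\in C_\Sigma$ by consistency, and conclude via the truth lemma. Your extra justification of the bound $|C_\Sigma|\le 2^{|\phi|}$ (a maximally consistent subset of $\Sigma$ is determined by its intersection with the subformulae of $\phi$) is a welcome elaboration of a point the paper only asserts.
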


\begin{proof}
  Let~$\Sigma$ be the smallest closed set containing~$\phi$. By the
  finite existence lemma (Lemma~\ref{lem:ex-finite}), there is a
  proper and coherent $T$-coalgebra $\xi$ on $C_\Sigma$; note
  $|C_\Sigma|\le2^{|\phi|}$. Since $\Sigma$ has only finitely many
  consistent subsets, the consistent set $\{\phi\}$ is contained in
  some $\Phi\in C_\Sigma$. By the truth lemma,
  $\Phi \models_{(C_\Sigma, \xi)} \phi$.
\end{proof}

\begin{remark}\label{rem:fmp-fap}
  Previous work on the connection between algebraic and coalgebraic
  semantics~\cite{DBLP:conf/fossacs/PattinsonS08} has led to results
  that in particular cover non-iterative frame conditions. The
  technical setup in the mentioned work features an underlying rank-1
  logic, equipped with standard coalgebraic semantics using plain
  functors, and imposes additional frame conditions as axioms, e.g.\
  non-iterative frame conditions. One of the results obtained
  \cite[Corollary~37]{DBLP:conf/fossacs/PattinsonS08} shows that a
  coalgebraic logic with non-iterative frame conditions is weakly
  complete over coalgebras satisfying the frame conditions, provided
  that the frame conditions mention only finitely many modalities. By
  Remark~\ref{rem:one-to-zeroone} and
  Lemma~\ref{lem:axiom-completeness}, these assumptions allow
  combining the given rank-1 logic and the additional frame conditions
  into a 0-1-step complete logic for the copointed functor defined by
  the axioms. The weak completeness result therefore follows also from
  our Theorem~\ref{thm:bmp}, which moreover applies also to sets of
  non-iterative frame conditions that mention infinitely many
  modalities; of course, 0-1-step completeness then needs to be proved
  without the help of Lemma~\ref{lem:axiom-completeness}. E.g.\ this
  will turn out to be possible for the canonical $\Lambda$-structure
  introduced next (Lemma~\ref{lem:can-zeroone-complete}). % All that
  % said, we emphasize again that the results of the present section
  % are
  % obtained by easy extension of previous results in rank-1
  % coalgebraic
  % logic~\cite{DBLP:journals/jlp/Schroder07}, and included primarily
  % to
  % complete the overall picture; the main technical contribution of
  % the
  % present work lies in the converse implication (to provide a
  % coalgebraic semantics for a given non-iterative logic), tackled
  % next.
\end{remark}

\section{The Canonical $\Lambda$-Structure}\label{sec:can-struct}

We now construct, \emph{for a given non-iterative logic
  $\Lang=(\Lambda,\Axioms)$ that we fix from now on}, a canonical
$\Lambda$-structure~$\Struct_\Lang$ based on a weakly copointed
functor~$M_\Lang$ w.r.t.\ which we show soundness and strong
completeness by means of a canonical model construction.  As usual,
the state space of the canonical model will be the set of maximally
consistent sets, denoted $C_\Lang$ (so $C_\Lang=C_{\form(\Lambda)}$ in
the notation of Section~\ref{sec:fmp}).

We construct the functor~$M_\Lang$ as follows. For a set~$X$,
$M_\Lang X$ is the set of maximally 0-1-step consistent subsets of
$\Prop(\Lambda(\pow X) \cup \pow X)$ (i.e.\ of the set of 0-1-step
formulae over $\pow X$). For a function $f\colon X \rightarrow Y$, we
define $M_\Lang f$ by
\begin{equation*}
M_\Lang f(\Phi) = \{ \phi \in \Prop(\Lambda(\pow Y) \cup \pow Y)
\mid \phi\sigma_f \in \Phi\}
\end{equation*}
where $\sigma_f$ is the $\pow X$-substitution on $\pow Y$ given by
$\sigma_f(A) = f^{-1}[A]$. We define a weak copoint
$\varepsilon\colon M_\Lang \rightarrow \ultra$ by
$\varepsilon_X(\Phi) = \Phi \cap \pow X $ for $\Phi \in M_\Lang X$,
and interpret $L \in \Lambda$ by
\begin{equation*}
  \llbracket L \rrbracket_X A = \{\Phi \in M_\Lang X
  \mid L A \in \Phi \}\qquad\text{for $A\subseteq X$.}
\end{equation*}
\noindent Of course, we intend an element of $M_\Lang X$ to satisfy
precisely the 0-1-step formulae that it contains; indeed, we have
\begin{lemma}[0-1-step truth lemma] \label{01stepTruthLemma} Let
  $\psi$ be a 0-1-step formula over $\pow X$. Then
  $\Phi \models^{\zeroone}_X \psi $ iff $ \psi \in \Phi$, for
  $\Phi \in M_\Lang X$.
\end{lemma}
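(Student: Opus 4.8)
The plan is a structural induction on the 0-1-step formula $\psi\in\Prop(\Lambda(\pow X)\cup\pow X)$, with $\Phi\in M_\Lang X$ fixed, exactly parallel to the ordinary truth lemma (Lemma~\ref{truthLemma}). Before running the induction I would record the two routine properties of a maximally 0-1-step consistent set $\Phi$ that it uses: (i) $\Phi$ is deductively closed, i.e.\ $\Phi\nondash\chi$ implies $\chi\in\Phi$ for every 0-1-step formula $\chi$ over $\pow X$ (so in particular $\Phi$ contains every $\Prop(\pow X)$-formula that is valid over $\pow X$, and is closed under propositional consequence over the atom set $\Lambda(\pow X)\cup\pow X$); and (ii) for every such $\chi$, exactly one of $\chi,\neg\chi$ lies in $\Phi$. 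Both are immediate consequences of maximality together with the observation that the rule schema deriving $\psi$ from $\phi_1,\dots,\phi_n$ whenever $\{\phi_1,\dots,\phi_n\}\vdash_{PL}\psi$ makes $\nondash$ (restricted to 0-1-step formulae over $\pow X$) a classical propositional consequence relation over those atoms.

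The base cases are where the design of the canonical $\Lambda$-structure $\Struct_\Lang$ pays off, since truth and membership were arranged to coincide on atoms. If $\psi=B$ for a top-level interpreted propositional atom $B\in\pow X$, then unwinding the definitions of 0-1-step satisfaction and of the auxiliary predicate lifting $\iota$ gives $\Phi\models^{\zeroone}_X B$ iff $\Phi\in\iota_X(B)$ iff $B\in\varepsilon_X(\Phi)=\Phi\cap\pow X$ iff $B\in\Phi$. If $\psi=LA$ with $L\in\Lambda$ — recall that by the standing convention an inner propositional subformula occurring under a modality has been identified with its extension, so the argument $A$ is literally a subset of $X$ — then $\Phi\models^{\zeroone}_X LA$ iff $\Phi\in\llbracket L\rrbracket_X(A)=\{\Psi\in M_\Lang X\mid LA\in\Psi\}$ iff $LA\in\Phi$. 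Finally $\Phi\not\models^{\zeroone}_X\bot$ and $\bot\notin\Phi$ by 0-1-step consistency of $\Phi$.

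The inductive steps are routine manipulations using the induction hypothesis and properties (i)--(ii): for $\psi=\neg\phi$ we get $\Phi\models^{\zeroone}_X\neg\phi$ iff $\Phi\not\models^{\zeroone}_X\phi$ iff $\phi\notin\Phi$ iff $\neg\phi\in\Phi$, and for $\psi=\phi\wedge\phi'$ we get $\Phi\models^{\zeroone}_X\phi\wedge\phi'$ iff $\phi\in\Phi$ and $\phi'\in\Phi$ iff $\phi\wedge\phi'\in\Phi$. This closes the induction in the unary case; higher arities add nothing beyond indexing the modal arguments.

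I do not expect a genuine obstacle here — the lemma holds essentially by construction of $\Struct_\Lang$. The only points requiring care are bookkeeping ones: making sure the clause for top-level atoms $B\in\pow X$ is the one routed through $\iota$ and the weak copoint $\varepsilon$ (whereas arguments of modalities have already been replaced by their extensions, so they are not recursed into), and checking that maximally 0-1-step consistent sets really do behave like ordinary maximally consistent sets with respect to $\nondash$, which is what legitimises properties (i) and (ii).
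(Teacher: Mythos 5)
Your proposal is correct and follows exactly the route of the paper's own (very terse) proof: induction over $\psi$, with the Boolean cases handled by the Hintikka properties of maximally 0-1-step consistent sets and the cases for modal operators and top-level atoms $B\in\pow X$ holding by construction of $\llbracket L\rrbracket_X$, $\varepsilon_X$, and $\iota_X$. Your explicit unwinding of the atom case through $\iota$ and the weak copoint, and your justification of the Hintikka properties from the propositional-reasoning rule of $\nondash$, are exactly the bookkeeping the paper leaves implicit.
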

\noindent Since a maximally consistent set in $M_\Lang X$ must in
particular contain all $\pow X$-instances of the axioms in~$\Axioms$,
it follows that $\Axioms$ is 0-1-step sound, and hence sound by Lemma
\ref{soundnessImplication}, for~$\Struct_\Lang$.

With a view to proving also 0-1-step completeness, we note a 0-1-step
version of the well-known Lindenbaum lemma:
\begin{lemma}[0-1-step Lindenbaum lemma]
  \label{lem:zeroone-lindenbaum} Every 0-1-step consistent set of
  0-1-step formulae over $\pow X$ is contained in a maximal such set.
\end{lemma}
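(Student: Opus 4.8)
The plan is to run the standard Lindenbaum–Zorn argument, the only nonstandard point being that the underlying consequence relation $\nondash$ here is not a plain propositional consequence but is propositional entailment relative to a fixed (possibly infinite) background theory, namely the set $\Axioms^{\pow X}\cup\{\phi\in\Prop(\pow X)\mid\Sem{\phi}=X\}$ of all $\Prop(\pow X)$-substitution instances of axioms together with all top-level propositionally valid formulae over $\pow X$. So first I would record the key \emph{finitarity} observation: by the very definition of $\nondash$ in Section~\ref{sec:zeroone-logic}, $\Phi\nondash\bot$ holds iff $\nondash(\phi_1\wedge\dots\wedge\phi_n)\to\bot$ for some finite $\phi_1,\dots,\phi_n\in\Phi$, and the latter is itself witnessed by a finite subset of the background theory entailing the negation propositionally. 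Hence 0-1-step consistency is a property of finite character: a set is 0-1-step consistent iff all its finite subsets are.

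Next I would verify the one genuine content step, that any 0-1-step consistent $\Phi$ can, for each 0-1-step formula $\psi$ over $\pow X$, be extended to a 0-1-step consistent set containing $\psi$ or $\neg\psi$. If neither $\Phi\cup\{\psi\}$ nor $\Phi\cup\{\neg\psi\}$ were 0-1-step consistent, then by finitarity there are finite $\Phi_0\subseteq\Phi$ with $\Phi_0\cup\{\psi\}\nondash\bot$ and $\Phi_0\cup\{\neg\psi\}\nondash\bot$; unwinding the definition of $\nondash$, these give a single finite set $\Theta$ of background-theory formulae with $\Theta\cup\Phi_0\vdash_{PL}\neg\psi$ and $\Theta\cup\Phi_0\vdash_{PL}\psi$, whence $\Theta\cup\Phi_0\vdash_{PL}\bot$ by ordinary propositional reasoning, contradicting 0-1-step consistency of $\Phi$. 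Then I would invoke Zorn's lemma on the poset of 0-1-step consistent supersets of $\Phi$ ordered by inclusion: the union of a chain of 0-1-step consistent sets is 0-1-step consistent (again by finitarity, since any $\bot$-derivation uses only finitely many formulae, all lying in one member of the chain), so a maximal element $\Phi^\star$ exists. By the extension step, $\Phi^\star$ must contain $\psi$ or $\neg\psi$ for every 0-1-step formula $\psi$ over $\pow X$, so it is maximally 0-1-step consistent and contains $\Phi$.

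There is essentially no hard part here; the only thing to be a little careful about is that the background theory is fixed once $X$ is fixed and does not grow as we add formulae to $\Phi$, so that the finitarity of $\nondash$ is genuine and the chain-union and extension arguments go through verbatim. Alternatively, one can bypass Zorn and simply enumerate all 0-1-step formulae over $\pow X$ (well-ordered in any fashion) and close $\Phi$ up stepwise using the extension step at successor stages and unions at limit stages; the finite-character property again guarantees consistency is preserved at limits. Either presentation is routine, which is why the paper states this as a lemma without proof.
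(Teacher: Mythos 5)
Your proof is correct and is exactly the standard Lindenbaum--Zorn argument that the paper implicitly relies on: the paper states this lemma without proof, treating it as routine, and the only point needing care is the one you identify, namely that $\nondash$ is by definition finitary (consistency reduces to finite subsets entailing $\bot$ propositionally over the fixed background theory of substituted axioms and valid $\Prop(\pow X)$-formulae), so chain unions and the $\psi$-or-$\neg\psi$ extension step go through. Nothing further is needed.
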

% \begin{proof}
% 	% We note that the set of consistent sets is partially ordered by set
% 	% inclusion. Let ${\{\Phi_i \mid i \in I \}}$ be a a chain of consistent
% 	% subsets, totally ordered by set inclusion.  Then $\bigcup_{i\in I}\Phi_i$
% 	% is consistent.  The statement now follows from Zorn's lemma. Analogously
% 	% for 0-1-step consistent sets.
%   By Zorn's lemma, noting that unions of ascending chains of 0-1-step
%   consistent sets are again 0-1-step consistent.
% \end{proof}

\noindent From the 0-1-step truth lemma and the 0-1-step Lindenbaum
lemma, 0-1-step completeness is immediate:
\begin{lemma}\label{lem:can-zeroone-complete}
  The logic~$\Lang$ is 0-1-step complete for~$\Struct_\Lang$.
\end{lemma}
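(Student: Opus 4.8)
The plan is to derive this directly from the two facts just established: the 0-1-step truth lemma (Lemma~\ref{01stepTruthLemma}) and the 0-1-step Lindenbaum lemma (Lemma~\ref{lem:zeroone-lindenbaum}). Recall that 0-1-step completeness for $\Struct_\Lang$ means: every 0-1-step consistent 0-1-step formula $\psi$ over $\pow X$ is satisfiable, i.e.\ $\Sem{\psi}^{\zeroone}_X\neq\emptyset$. So fix a set~$X$ and a 0-1-step consistent formula $\psi\in\Prop(\Lambda(\pow X)\cup\pow X)$.

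First I would apply the 0-1-step Lindenbaum lemma to extend $\{\psi\}$ to a maximal 0-1-step consistent set $\Phi\subseteq\Prop(\Lambda(\pow X)\cup\pow X)$. By construction $\Phi$ is exactly an element of $M_\Lang X$ (the carrier of the canonical functor at~$X$ is the set of maximally 0-1-step consistent subsets of the 0-1-step formulae over $\pow X$), and $\psi\in\Phi$. Then I invoke the 0-1-step truth lemma: since $\psi\in\Phi$ and $\Phi\in M_\Lang X$, we get $\Phi\models^{\zeroone}_X\psi$, that is, $\Phi\in\Sem{\psi}^{\zeroone}_X$. Hence $\Sem{\psi}^{\zeroone}_X\neq\emptyset$, so $\psi$ is satisfiable, which is precisely 0-1-step completeness of~$\Lang$ for~$\Struct_\Lang$. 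Equivalently, in the contrapositive formulation: if $M_\Lang X\models^{\zeroone}_X\psi$ then every $\Phi\in M_\Lang X$ satisfies $\psi$, so by the truth lemma $\psi\in\Phi$ for every such~$\Phi$; were $\psi$ not 0-1-step derivable, $\{\neg\psi\}$ would be 0-1-step consistent, hence contained in some maximal $\Phi$ by Lindenbaum, giving $\neg\psi\in\Phi$ and thus $\psi\notin\Phi$ (maximal consistency), a contradiction; so $\nondash\psi$.

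There is essentially no obstacle here: the lemma is a routine corollary, with all the real work having already been done in proving the 0-1-step truth lemma and the 0-1-step Lindenbaum lemma. If anything needs care, it is only the bookkeeping that the ``satisfiable'' and ``derivable'' formulations of 0-1-step completeness match up — this is the standard equivalence recorded in Section~\ref{sec:zeroone-logic} (a formula with $TX\models^{\zeroone}_X\psi$ is 0-1-step derivable iff every 0-1-step consistent formula is satisfiable), so it requires no new argument. I would present the ``satisfiability'' version as it is the shortest: extend by Lindenbaum, read off membership, apply the truth lemma.
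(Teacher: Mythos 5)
Your proof is correct and is exactly the paper's argument: the paper's own proof just says the result is immediate from the 0-1-step Lindenbaum lemma and the 0-1-step truth lemma, which is precisely the two-step derivation you spell out. No further comment needed.
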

\noindent By Theorem~\ref{thm:bmp}, this implies weak completeness and
the finite (in fact, bounded) model property:
\begin{corollary}\label{cor:can-weak-compl}
  The logic~$\Lang$ is weakly complete over finite proper
  $M_\Lang$-coalgebras.
\end{corollary}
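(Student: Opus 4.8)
The plan is to obtain the statement as a direct instantiation of Theorem~\ref{thm:bmp}. That theorem takes a $\Lambda$-structure $\Struct$ based on a weakly copointed functor~$T$ together with a 0-1-step complete set $\Axioms$ of axioms, and concludes weak completeness of $\Axioms$ over finite proper $T$-coalgebras (indeed with the bounded model property). I would apply it with $\Struct = \Struct_\Lang$ and $T = M_\Lang$. The one hypothesis to discharge is that $\Axioms$ is 0-1-step complete for $\Struct_\Lang$; but this is precisely Lemma~\ref{lem:can-zeroone-complete}, which itself follows from the 0-1-step truth lemma (Lemma~\ref{01stepTruthLemma}) together with the 0-1-step Lindenbaum lemma (Lemma~\ref{lem:zeroone-lindenbaum}).

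Spelling this out: given a consistent $\Lambda$-formula~$\phi$, Theorem~\ref{thm:bmp} --- via the finite existence lemma (Lemma~\ref{lem:ex-finite}) applied to the smallest closed set $\Sigma$ containing~$\phi$ --- yields a coherent proper $M_\Lang$-coalgebra structure on $C_\Sigma$, of size at most $2^{|\phi|}$, and the truth lemma (Lemma~\ref{truthLemma}) ensures that some maximally consistent $\Phi \in C_\Sigma$ containing~$\phi$ satisfies~$\phi$ there. Since this coalgebra is by construction finite and proper, this is exactly weak completeness over finite proper $M_\Lang$-coalgebras, the bounded model property coming along as a bonus. I do not expect a genuine obstacle here: the only thing to check is that the canonical $\Lambda$-structure slots into the hypotheses of Theorem~\ref{thm:bmp}, which is immediate once Lemma~\ref{lem:can-zeroone-complete} is in hand, and all the heavy lifting (the subformula property, the finite existence lemma, the truth lemma) has already been carried out in Sections~\ref{sec:zeroone-logic} and~\ref{sec:fmp}.
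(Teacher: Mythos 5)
Your proposal is correct and is exactly the paper's argument: the corollary is obtained by instantiating Theorem~\ref{thm:bmp} at the canonical $\Lambda$-structure~$\Struct_\Lang$, with the 0-1-step completeness hypothesis discharged by Lemma~\ref{lem:can-zeroone-complete}. The additional unwinding via the finite existence lemma and the truth lemma merely restates the proof of Theorem~\ref{thm:bmp} and is not needed beyond citing it.
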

\noindent Our main result, established in the next section, will show
that~$\Lang$ is in fact \emph{strongly} complete over proper
$M_\Lang$-coalgebras (of course, one can then no longer restrict to
finite coalgebras). As indicated in the introduction, the canonical
$\Lambda$-structure is essentially neighbourhood semantics. We proceed
to elaborate details.

Recall from Example~\ref{expl:logics}.\ref{nft} that the
$\Lambda$-neighbourhood functor~$\neighb_\Lambda$ is defined as
$\neighb_\Lambda=\textstyle\prod_{L\in \Lambda\text{ $n$-ary}}
\contrapow \circ ((\contrapow^\Op)^n)$. Recall that $\neighb_\Lang$
induces a weakly copointed functor $\neighb_\Lambda\times\ultra$. Take
$\neighb_\Lang$ to be the weakly copointed subfunctor of
$\neighb_\Lambda\times\ultra$ defined by the the axioms~$\Axioms$,
i.e.\ 
\begin{equation*}
  \neighb_\Lang = (\neighb_\Lambda\times\ultra)_\Axioms
\end{equation*}
in notation introduced in Section~\ref{sec:zeroone-logic}.  It is
straightforward to see that the proper $\neighb_\Lang$-coalgebras are
precisely the $\Lambda$-neighbourhood frames satisfying the frame
conditions~$\Axioms$. The functors $\neighb_\Lang$ and $M_\Lang$ are
naturally isomorphic via the transformation
$\theta\colon M_\Lang\to \neighb_\Lang$ given by
\begin{equation*}
  \theta_X(\Phi)_L=(\{A\subseteq X\mid LA\in\Phi\},\{A\subseteq X\mid
  A\in\Phi\}),
\end{equation*}
which is also compatible with the predicate liftings. We can thus
translate Corollary~\ref{cor:can-weak-compl} into the language of
neighbourhood semantics:
\begin{corollary}\label{cor:nbhd-weak-compl}
  The logic~$\Lang=(\Lambda,\Axioms)$ is weakly complete over the
  class of finite neighbourhood frames that satisfy the axioms
  in~$\Axioms$ as frame conditions.
\end{corollary}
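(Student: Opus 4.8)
The plan is to transport the weak completeness statement of Corollary~\ref{cor:can-weak-compl} along the natural isomorphism $\theta\colon M_\Lang\to\neighb_\Lang$ exhibited above. The general principle I would isolate is this: if $\theta\colon\Struct\to\Struct'$ is a natural isomorphism between $\Lambda$-structures based on weakly copointed functors $T$ and $T'$ such that $\theta$ is compatible with the predicate liftings (i.e.\ $\theta_X$ intertwines $\Sem{L}_X$ and $\Sem{L}'_X$) and with the weak copoints ($\varepsilon'_X\circ\theta_X=\varepsilon_X$), then the assignment $(X,\xi)\mapsto(X,\theta_X\circ\xi)$ is a bijection between $T$-coalgebra structures and $T'$-coalgebra structures on each set~$X$, with inverse $(X,\zeta)\mapsto(X,\theta_X^{-1}\circ\zeta)$; it trivially preserves finiteness of the state space; it preserves \emph{and reflects} properness, since $\varepsilon'_X\circ(\theta_X\circ\xi)=\varepsilon_X\circ\xi$, so one side equals $\eta_X$ iff the other does; and it preserves satisfaction, i.e.\ for every state~$x$ and every $\Lambda$-formula~$\phi$ we have $x\models_{(X,\xi)}\phi$ iff $x\models_{(X,\theta_X\circ\xi)}\phi$. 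The last point is a straightforward induction on~$\phi$: the Boolean cases are immediate, and for a modalized formula one uses the inductive hypothesis (to see that the extensions of the modal arguments coincide in the two coalgebras) together with compatibility of~$\theta$ with the liftings.

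I would then apply this to $\theta\colon M_\Lang\to\neighb_\Lang$, which by the construction above is a natural isomorphism compatible with the liftings and (by a one-line check, as $\varepsilon_X(\Phi)=\Phi\cap\pow X$ on both sides) with the weak copoints. Given an $\Lang$-consistent formula~$\phi$, Corollary~\ref{cor:can-weak-compl} provides a state~$x$ in a finite proper $M_\Lang$-coalgebra $C=(X,\xi)$ with $x\models_C\phi$; then $(X,\theta_X\circ\xi)$ is a finite proper $\neighb_\Lang$-coalgebra in which~$x$ still satisfies~$\phi$. Finally I invoke the fact noted just before the statement that the proper $\neighb_\Lang$-coalgebras are exactly the $\Lambda$-neighbourhood frames satisfying the axioms in~$\Axioms$ as frame conditions; hence $(X,\theta_X\circ\xi)$ is such a finite frame, and~$\phi$ is satisfied in it, establishing weak completeness over this class.

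The only point requiring any care --- and thus the main, though routine, obstacle --- is verifying the compatibility of~$\theta$ with the predicate liftings and the weak copoints precisely enough to run the satisfaction-preservation induction, and confirming that $\theta$ is genuinely a componentwise bijection so that properness is \emph{reflected} and not merely preserved; this is what lets us land inside the class of neighbourhood frames rather than merely relate to it. Both facts were essentially built into the definition of~$\theta$, so once they are spelled out the corollary follows immediately.
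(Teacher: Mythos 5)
Your proposal is correct and follows exactly the paper's route: the paper likewise obtains this corollary by transporting Corollary~\ref{cor:can-weak-compl} along the natural isomorphism $\theta\colon M_\Lang\to\neighb_\Lang$ (checking compatibility with predicate liftings and weak copoints, hence preservation of properness, finiteness, and satisfaction) and then identifying proper $\neighb_\Lang$-coalgebras with the $\Lambda$-neighbourhood frames satisfying $\Axioms$ as frame conditions. The details you spell out match those the paper defers to the appendix proof of Corollary~\ref{corollary}.
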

\noindent That is, one instance of the coalgebraic weak completeness
theorem (Theorem~\ref{thm:bmp}) is weak completeness of non-iterative
modal logics over their neighbourhood semantics as originally proved
by Lewis~\cite{DBLP:journals/jphil/Lewis74}.

\begin{remark}
  The weak completeness result in the above-mentioned previous work on
  algebraic-coalgebraic
  semantics~\cite[Corollary~37]{DBLP:conf/fossacs/PattinsonS08} (see
  Remark~\ref{rem:fmp-fap}) similarly puts weak neighbourhood
  completeness of non-iterative logics in a coalgebraic context: Given
  a rank-1 logic~$\Lang$, the canonical $\Lambda$-structure for the
  given rank-1 logic satisfies the conditions
  of~\cite[Corollary~37]{DBLP:conf/fossacs/PattinsonS08}, in
  particular is \emph{one-step complete} (the simpler version of
  0-1-step completeness that applies to rank-1
  logics)~\cite{DBLP:journals/logcom/SchroderP10}, and is isomorphic
  to the subfunctor of the neighbourhood functor defined by the given
  rank-1 axioms; \cite[Corollary~37]{DBLP:conf/fossacs/PattinsonS08}
  then guarantees that weak completeness is retained in any extension
  of~$\Lang$ with non-iterative axioms mentioning only finitely many
  modalities. By comparison, Corollary~\ref{cor:nbhd-weak-compl} above
  removes the restriction to finitely many modalities.
\end{remark}

\begin{tremark}[Strong 0-1-step
  completeness] \label{rem:strong-01-step-completeness} The strong
  completeness proof for rank-1 canonical
  structures~\cite{DBLP:journals/logcom/SchroderP10} (which implies
  the known result that every rank-1 logic is strongly complete over
  its neighbourhood semantics~\cite{Surendonk97}) can be factored
  through establishing \emph{strong one-step completeness}, i.e.\
  showing that the one-step logic (the simpler version of the 0-1-step
  logic that suffices in the rank-1 case) of a canonical structure is
  strongly
  complete~\cite[Remark~55]{DBLP:journals/logcom/SchroderP10}. Similarly,
  the 0-1-step logic of the canonical
  $\Lambda$-structure~$\Struct_\Lang$ defined above is strongly
  complete; that is, for every set~$X$, every consistent set of
  0-1-step formulae over~$\pow X$ is satisfiable
  over~$\Struct_\Lang$. Indeed, this is immediate from the 0-1-step
  truth lemma (Lemma~\ref{01stepTruthLemma}) and the 0-1-step
  Lindenbaum lemma (Lemma~\ref{lem:zeroone-lindenbaum}). On the other
  hand, the 0-1-step logic of the copointed part of the canonical
  $\Lambda$-structure, or indeed of any copointed functor, clearly
  fails to be strongly complete: Let $\alpha$ be a non-principal
  ultrafilter on a set~$X$; then $\alpha$ can be seen as a set of
  0-1-step formulae over~$\pow X$, and as such is consistent;
  but~$\alpha$ is clearly not satisfiable over any copointed
  functor. Strong completeness of the 0-1-step logic is the moral
  reason we include weakly copointed functors in the technical
  development even though, as indicated in
  Remark~\ref{rem:weak-copoints}, we could in principle short-circuit
  them.
\end{tremark}

\section{Strong Completeness}\label{sec:str-compl}
We proceed to prove our main result, strong completeness of
non-iterative modal logics over their canonical structure, to which
the known strong completeness over neighbourhood
semantics~\cite{Surendonk97} is a corollary. The centrepiece of the
technical development is an existence lemma; we set out to prepare its
proof. As usual, one has

\begin{lemma}[Lindenbaum Lemma]\label{lem:lindenbaum} Every consistent
  set of $\Lambda$-formulae is contained in a maximally consistent
  set.
\end{lemma}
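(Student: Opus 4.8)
The plan is to run the standard Zorn's lemma argument; the only point requiring (minimal) care is that $\Lang$-consistency is a property of finite character, which is immediate from the finitary nature of the proof system. I would not expect any genuine obstacle here.

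First I would record the finite-character observation: a set $\Phi$ of $\Lambda$-formulae is consistent iff every finite subset of~$\Phi$ is consistent. The nontrivial direction is the contrapositive: if $\Phi\ldash\bot$, then by definition of $\vdash_\Lang$ we have $\ldash(\phi_1\wedge\dots\wedge\phi_n)\to\bot$ for some $\phi_1,\dots,\phi_n\in\Phi$, i.e.\ $\{\phi_1,\dots,\phi_n\}\ldash\bot$, so already a finite subset of~$\Phi$ is inconsistent. (Under the alternative presentation with non-iterative rules, the same holds since derivations are still finite objects using only finitely many premisses.)

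Now fix a consistent set $\Phi$ and consider the poset $\mathcal{P}$ of all consistent sets of $\Lambda$-formulae containing~$\Phi$, ordered by inclusion; it is nonempty since $\Phi\in\mathcal{P}$. Given a chain $\mathcal{C}\subseteq\mathcal{P}$, its union $\bigcup\mathcal{C}$ clearly contains~$\Phi$, and it is consistent: any finite subset of $\bigcup\mathcal{C}$ is contained in a single member of~$\mathcal{C}$ (as $\mathcal{C}$ is a chain), hence consistent, so $\bigcup\mathcal{C}$ is consistent by the finite-character observation. Thus $\bigcup\mathcal{C}\in\mathcal{P}$ is an upper bound for~$\mathcal{C}$, and Zorn's lemma yields a maximal element $\Phi^\star\in\mathcal{P}$. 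By construction $\Phi\subseteq\Phi^\star$, and $\Phi^\star$ is consistent and maximal among consistent sets (a consistent proper superset would again lie in~$\mathcal{P}$, contradicting maximality); hence $\Phi^\star$ is a maximally consistent set containing~$\Phi$, as required. Equivalently, one may simply invoke the Teichmüller--Tukey lemma, since the family of consistent sets is of finite character.
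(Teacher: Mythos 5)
Your proof is correct and is exactly the standard Zorn's-lemma argument the paper has in mind: the paper states this lemma with the phrase ``As usual, one has'' and gives no proof, since finite character of consistency is immediate from the definition of $\Phi\vdash_\Lang\psi$ via finite conjunctions. Nothing to add.
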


\noindent The existence lemma requires us to show 0-1-step consistency
of a set of 0-1-step formulae specifying coherence and properness. We
start with the following observation, which is fairly immediate by
Lemma~\ref{ci}:

\begin{lemma} \label{consistency1} Let $\Phi \in C_\Lang$ be a maximally
  consistent set. Then the set
  \begin{equation*}
    \{ L\hat{\phi} \mid L\phi \in \Phi \}
    \;\cup\; \{ \neg L\hat{\phi} \mid \neg L\phi \in \Phi \} \;\cup\; \{
    \hat{\phi} \mid \phi \in \Phi \}
  \end{equation*}
  of 0-1-step formulae over $\pow C_\Lang$ is 0-1-step consistent.
\end{lemma}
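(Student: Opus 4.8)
The plan is to recognize the displayed set as a $\hat\sigma$-instance of a purely syntactic set of $0$-$1$-step formulae whose $\sigma$-instance is essentially~$\Phi$ itself, and then invoke Lemma~\ref{ci}. Concretely, I would instantiate Lemma~\ref{ci} with $\Sigma=\form(\Lambda)$, so that $C_\Sigma=C_\Lang$, taking the variable set $V_\Sigma=\{a_\phi\mid\phi\in\form(\Lambda)\}$, the substitution $\sigma(a_\phi)=\phi$, and the $\pow C_\Lang$-valuation $\hat\sigma(a_\phi)=\hat\phi$ exactly as in the statement of that lemma.

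Next I would define
\[
\Psi=\{La_\phi\mid L\phi\in\Phi\}\cup\{\neg La_\phi\mid \neg L\phi\in\Phi\}\cup\{a_\phi\mid \phi\in\Phi\}\subseteq\Prop(\Lambda(V_\Sigma)\cup V_\Sigma),
\]
and check two routine facts. First, applying $\hat\sigma$ to $\Psi$ reproduces exactly the displayed set: $La_\phi$ becomes $L\hat\phi$, $\neg La_\phi$ becomes $\neg L\hat\phi$, and the top-level atom $a_\phi$ becomes the interpreted propositional atom $\hat\phi\in\pow C_\Lang$ (here it matters that the $a_\phi$ in the third component sit at the uppermost level, so they are replaced by the interpreted atoms $\hat\phi$ rather than being evaluated further). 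Second, applying $\sigma$ to $\Psi$ yields $\{L\phi\mid L\phi\in\Phi\}\cup\{\neg L\phi\mid\neg L\phi\in\Phi\}\cup\{\phi\mid\phi\in\Phi\}$, which is just $\Phi$ (the last component already exhausts $\Phi$, and the other two are subsets of it); this is immediate since $\sigma$ only substitutes for the variables $a_\phi$ and leaves the modalities untouched. Hence $\Psi\sigma=\Phi$ is consistent by hypothesis, and Lemma~\ref{ci} then gives that $\Psi\hat\sigma$ is $0$-$1$-step consistent, which is the claim. No finiteness is required: $\Phi$, and hence $\Psi$, may be infinite, but Lemma~\ref{ci} (via Proposition~\ref{sfp}) applies to arbitrary sets, and both consistency and $0$-$1$-step consistency are finitary.

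I do not expect a real obstacle here beyond keeping the syntactic layers straight; the one point worth emphasising is that $L\hat\phi$ (a modal atom applied to a subset of $C_\Lang$) and $\widehat{L\phi}$ (the interpreted atom arising from the $\Lambda$-formula $L\phi\in\Phi$) are a priori distinct $0$-$1$-step formulae, so the content of the lemma is precisely that demanding all the $L\hat\phi$ and $\neg L\hat\phi$ for the relevant~$\phi$ \emph{together with} the full family $\{\hat\phi\mid\phi\in\Phi\}$ is jointly $0$-$1$-step consistent. This combined consistency is exactly what the subsequent existence lemma will hand to $0$-$1$-step completeness in order to produce a coherent and proper coalgebra structure on $C_\Lang$.
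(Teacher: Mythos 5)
Your proof is correct and is essentially identical to the paper's: both define $\Psi=\{La_\phi\mid L\phi\in\Phi\}\cup\{\neg La_\phi\mid\neg L\phi\in\Phi\}\cup\{a_\phi\mid\phi\in\Phi\}$ with $\sigma(a_\phi)=\phi$ and $\hat\sigma(a_\phi)=\hat\phi$, observe that $\Psi\sigma=\Phi$ is consistent, and conclude via Lemma~\ref{ci}. The additional remarks on syntactic layers and the irrelevance of finiteness are accurate but not needed.
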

\noindent The key step is then to extend the last component of the
union above from expressible subsets of~$C_\Lang$ to arbitrary
subsets:
\begin{lemma} \label{consistency2} Let $\Phi \in C_\Lang$ be a maximally
  consistent set. Then the set
  \begin{equation*} \{
    L\hat{\phi} \mid L\phi \in \Phi \} \;\cup\;  \{ \neg L\hat{\phi} \mid \neg L\phi \in \Phi \} \;\cup\;
    \dot{\Phi}
  \end{equation*}
  of 0-1-step formulae over $\pow C_\Lang$ is 0-1-step consistent.
\end{lemma}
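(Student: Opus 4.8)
The plan is to compare the set in the statement with the set already known to be $0$-$1$-step consistent from Lemma~\ref{consistency1}. The difference is only in the last component: Lemma~\ref{consistency1} uses $\{\hat\phi\mid\phi\in\Phi\}$, i.e.\ the \emph{expressible} subsets $\hat\phi=\{\Psi\in C_\Lang\mid\phi\in\Psi\}$ of $C_\Lang$, whereas here we must use the full principal ultrafilter $\dot\Phi=\{A\subseteq C_\Lang\mid\Phi\in A\}$ on $C_\Lang$. So the task is to pass from ``$\Phi$ is in the extension of every formula in $\Phi$'' to ``$\Phi$ is in every set it is an element of''. Since $\Phi\in\hat\phi$ exactly when $\phi\in\Phi$, every $\hat\phi$ occurring in the first set is indeed a member of $\dot\Phi$, so $\dot\Phi$ strictly \emph{extends} the original last component; hence $0$-$1$-step consistency is not automatic and something must be argued.

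The key move is to use the subformula property (Proposition~\ref{sfp}): if the displayed set were $0$-$1$-step inconsistent, there would be a derivation of $\bot$ from finitely many of its members, and by Proposition~\ref{sfp} this derivation uses only instances of axioms and propositionally valid formulae whose interpreted atoms are drawn from the finitely many sets $\mathfrak A\subseteq\pow C_\Lang$ actually appearing. Only finitely many sets $A\in\dot\Phi$ can thus be involved, say $A_1,\dots,A_k$; all of them contain $\Phi$, so $\Phi\in A_1\cap\dots\cap A_k$, and in particular this intersection is nonempty. The plan is then to find a single set $\hat\phi$ with $\phi\in\Phi$ that is contained in each of the relevant $A_i$ \emph{as far as the derivation can tell} — more precisely, to exhibit a $\Prop(C_\Lang)$-substitution replacing each $A_i$ by some $\hat{\phi_i}$ with $\phi_i\in\Phi$, compatibly with all the Boolean (in)equalities among the $A_i$ that the derivation relies on, so that a derivation of $\bot$ from the displayed set is turned into a derivation of $\bot$ from a set of the shape handled by Lemma~\ref{consistency1}. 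Concretely: for distinct atoms in $\mathfrak A$ we may need $\hat{\phi_i}\subseteq A_i$ when $\Phi\in A_i$ and $\hat{\phi_i}$ disjoint from $A_i$ otherwise; taking $\phi_i$ to be a suitable formula in $\Phi$ (e.g.\ $\top$, or a conjunct capturing membership) and using that $C_\Lang$, being the set of maximally consistent sets, separates points via formulae, one arranges that the substituted derivation still yields a contradiction — now from $\{L\hat\phi\mid L\phi\in\Phi\}\cup\{\neg L\hat\phi\mid\neg L\phi\in\Phi\}\cup\{\hat\phi\mid\phi\in\Phi\}$, contradicting Lemma~\ref{consistency1}.

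I expect the main obstacle to be making the substitution argument precise while respecting the constraints that (i) modal subformulae $L A_i$ appearing in the derivation must be mapped to $L\hat{\phi_i}$ consistently (so that the congruence-style identifications used implicitly by the $0$-$1$-step logic are preserved — this is exactly the role of Lemma~\ref{crr} and Lemma~\ref{substLemma} in the proof of Lemma~\ref{ci}), and (ii) the replacement must not disturb the Boolean identities among top-level atoms that make the valid $\Prop(\mathfrak A)$-formulae valid. The cleanest route is probably to invoke Lemma~\ref{boolEq}.(\ref{item:ba-solve}): set up the finite system of Boolean equations recording all identities $LA\doteq LA'$ and $A\doteq A'$ (and $A\doteq\hat\phi$) that the derivation uses, observe it is solved by the ``true'' valuation sending each relevant atom $A_i$ to a formula of $\Phi$ witnessing $\Phi\in A_i$ (using maximal consistency of $\Phi$ to get such a witness inside $\Phi$), and conclude solvability by a $\Prop(\{\hat\phi\mid\phi\in\Phi\})$-substitution; feeding this back through Lemma~\ref{substLemma} as in the proof of Lemma~\ref{ci} collapses the putative inconsistency onto Lemma~\ref{consistency1}. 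Once that bookkeeping is done, the result follows.
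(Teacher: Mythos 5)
Your overall strategy---treat a putative $0$-$1$-step derivation of $\bot$ as a system of Boolean equations, eliminate the inexpressible atoms $A\in\dot\Phi$ in favour of Boolean combinations of the expressible ones $\hat\phi$, and then collapse onto Lemma~\ref{consistency1}---is indeed the strategy of the paper. But there is a genuine gap at the decisive step. You assert that the equation system ``is solved by the `true' valuation sending each relevant atom $A_i$ to a formula of $\Phi$ witnessing $\Phi\in A_i$''. It is not: the valuation that actually solves the system sends each atom to the set $A_i$ itself, and an arbitrary $A_i\in\dot\Phi$ need not be replaceable, compatibly with the equations, by any $\hat{\phi_i}$ with $\phi_i\in\Phi$ --- that replaceability is essentially the content of the lemma, not something one can observe. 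Lemma~\ref{boolEq}.(\ref{item:ba-solve}) does give you \emph{some} solving substitution valued in $\Prop(\{\hat\phi\mid\phi\in\form(\Lambda)\})$, but for the final contradiction with Lemma~\ref{consistency1} you need more: each substituted value must still have $\Phi$ in its extension, since otherwise the substituted assumption is not derivable from $\{\hat\phi\mid\phi\in\Phi\}$ and the derivation of $\bot$ you obtain is not from a set covered by Lemma~\ref{consistency1}. Nothing in your plan secures this containment.

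The paper secures it with Lemma~\ref{boolEq}.(\ref{item:monot-solve}), the monotonicity clause $x\kappa\subseteq\Sem{x\sigma\tau}$, which holds only when a \emph{single} variable is eliminated. This forces the proof to proceed by induction on the number of atoms of $\dot\Phi$ occurring in the putative derivation, eliminating one atom $a$ per step: part~(\ref{item:ba-solve}) produces a solution for $a$ in terms of the remaining atoms, part~(\ref{item:monot-solve}) guarantees $\Phi\in\Sem{a\sigma'\sigma}$, and then, writing $a\sigma'\sigma$ in CNF, every clause contains a literal true at $\Phi$ and hence already available among the remaining assumptions; the inconsistency thus propagates down to the base case $\Gamma=\emptyset$, contradicting Lemma~\ref{consistency1}. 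Your all-at-once elimination, invoking only part~(\ref{item:ba-solve}), cannot deliver the needed containment of $\Phi$ in the substituted values, so the reduction does not go through as written. Two smaller points: every $A\in\dot\Phi$ contains $\Phi$ by definition, so your ``disjoint otherwise'' case never arises; and finiteness of the set of offending atoms follows simply from finiteness of derivations, so Proposition~\ref{sfp} is not needed at that point (though invoking it is harmless).
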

\noindent Recall here that
$\dot\Phi=\{A\subseteq C_\Lang\mid \Phi\in A\}$ is the principal
ultrafilter generated by~$\Phi$, and note
$\dot\Phi\supseteq\{ \hat{\phi} \mid \phi \in \Phi \}$.  The proof
makes central use of Lemma~\ref{boolEq}.(\ref{item:ba-solve})
and~(\ref{item:monot-solve}) in a step-wise elimination of atoms
in~$\dot\Phi\setminus\{ \hat{\phi} \mid \phi \in \Phi \}$ from
0-1-step derivations. With Lemma~\ref{consistency2} in place, the
existence lemma follows straightforwardly:

\begin{lemma}[Existence lemma]\label{existenceLemma} There exists a
  coherent proper $M_{\Lang}$-coalgebra on $C_{\Lang}$.  \end{lemma}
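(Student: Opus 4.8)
The plan is to define the coalgebra structure $\xi\colon C_\Lang\to M_\Lang C_\Lang$ pointwise: for each $\Phi\in C_\Lang$ we must produce a maximally $0$-$1$-step consistent set $\xi(\Phi)$ of $0$-$1$-step formulae over $\pow C_\Lang$ that, first, witnesses coherence --- i.e.\ $L\hat\psi\in\xi(\Phi)$ iff $L\psi\in\Phi$ for all $L\psi$ --- and, second, witnesses properness --- i.e.\ $\varepsilon_{C_\Lang}(\xi(\Phi))=\xi(\Phi)\cap\pow C_\Lang=\dot\Phi$, the principal ultrafilter at $\Phi$. The first step is to observe that both requirements are captured by demanding that $\xi(\Phi)$ contain the set
\begin{equation*}
  \{ L\hat\phi \mid L\phi\in\Phi\}\;\cup\;\{\neg L\hat\phi\mid \neg L\phi\in\Phi\}\;\cup\;\dot\Phi
\end{equation*}
of $0$-$1$-step formulae over $\pow C_\Lang$ --- for coherence, note that since $\Phi$ is maximally consistent, $L\psi\notin\Phi$ forces $\neg L\psi\in\Phi$, so the first two components pin down $L\hat\psi$-membership exactly; for properness, $\dot\Phi$ is already a maximal (ultra)filter on $\pow C_\Lang$, so containing it forces $\xi(\Phi)\cap\pow C_\Lang=\dot\Phi$ on the nose.

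The second step is to invoke Lemma~\ref{consistency2}, which states precisely that this set is $0$-$1$-step consistent. Then, by the $0$-$1$-step Lindenbaum lemma (Lemma~\ref{lem:zeroone-lindenbaum}), it extends to a maximally $0$-$1$-step consistent set; we take this to be $\xi(\Phi)\in M_\Lang C_\Lang$. The third step is to check coherence and properness of the resulting coalgebra $(C_\Lang,\xi)$. Properness is immediate from the paragraph above together with the definition of $\varepsilon$. For coherence, we use the $0$-$1$-step truth lemma (Lemma~\ref{01stepTruthLemma}) applied to $\xi(\Phi)\in M_\Lang C_\Lang$, together with the definition $\llbracket L\rrbracket_{C_\Lang}A=\{\Psi\in M_\Lang C_\Lang\mid LA\in\Psi\}$: we have $\xi(\Phi)\in\llbracket L\rrbracket_{C_\Lang}(\hat\psi)$ iff $L\hat\psi\in\xi(\Phi)$, and by the chosen contents of $\xi(\Phi)$ this holds iff $L\psi\in\Phi$, which is exactly the coherence condition.

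Since Lemma~\ref{consistency2} is stated earlier and may be assumed, the present lemma is essentially a packaging step, and I do not expect any real obstacle here; the force of the argument has been front-loaded into Lemma~\ref{consistency2}, whose proof carries out the delicate step-wise elimination of the ``inexpressible'' atoms in $\dot\Phi\setminus\{\hat\phi\mid\phi\in\Phi\}$ from $0$-$1$-step derivations using the Boolean solution theorem Lemma~\ref{boolEq}, including the monotonicity refinement~(\ref{item:monot-solve}). The only minor care needed in the present proof is to make sure that ``containing $\dot\Phi$'' together with maximal $0$-$1$-step consistency really does collapse $\varepsilon$ to the unit $\eta$ --- but this is exactly the statement that $\dot\Phi$ is an ultrafilter, hence maximal among filters, so a maximally consistent superset cannot add further top-level propositional atoms without contradiction.
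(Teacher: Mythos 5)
Your proposal is correct and follows essentially the same route as the paper's own proof: extend $\{L\hat\phi\mid L\phi\in\Phi\}\cup\{\neg L\hat\phi\mid\neg L\phi\in\Phi\}\cup\dot\Phi$ to a maximally $0$-$1$-step consistent set via Lemma~\ref{consistency2} and the $0$-$1$-step Lindenbaum lemma, then read off coherence from the definition of the predicate liftings and properness from the fact that a consistent set containing $\dot\Phi$ cannot also contain any $A$ with $\Phi\notin A$ (since $C_\Lang\setminus A\in\dot\Phi$ and $\neg(A\wedge(C_\Lang\setminus A))$ is a propositional validity over $\pow C_\Lang$, hence $0$-$1$-step derivable). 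No gaps.
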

\noindent Using the Lindenbaum lemma~\ref{lem:lindenbaum} and the
truth lemma (Lemma~\ref{truthLemma}) in the standard fashion, we then
obtain our main result, strong completeness over the canonical
coalgebraic semantics:
\begin{theorem}[Coalgebraic strong
  completeness] \label{StrongCompletenes} The logic $\Lang$ is
  strongly complete over proper $M_\Lang$-coalgebras, and hence over
  coalgebras for the copointed part (Lemma and
  Definition~\ref{lem:copointed}) of~$M_\Lang$.
\end{theorem}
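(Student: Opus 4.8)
The plan is to run the standard canonical-model argument, with the
  existence lemma doing all the heavy lifting. Let $\Psi$ be a
  consistent set of $\Lambda$-formulae. First I would apply the
  Lindenbaum lemma (Lemma~\ref{lem:lindenbaum}) to extend $\Psi$ to a
  maximally consistent set $\Phi_0 \in C_\Lang$. Next, by the
  existence lemma (Lemma~\ref{existenceLemma}), there is a coherent
  proper $M_\Lang$-coalgebra structure $\xi$ on $C_\Lang$; put
  $C = (C_\Lang, \xi)$. Since $\form(\Lambda)$ is trivially a closed
  set of formulae, the truth lemma (Lemma~\ref{truthLemma}) applies
  with $\Sigma = \form(\Lambda)$ (so that $C_\Sigma = C_\Lang$) and
  yields $\Phi \models_C \phi$ iff $\phi \in \Phi$, for all
  $\Phi \in C_\Lang$ and all $\Lambda$-formulae $\phi$. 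In particular
  $\Phi_0 \models_C \phi$ for every $\phi \in \Psi \subseteq \Phi_0$,
  so $\Psi$ is satisfiable over the proper $M_\Lang$-coalgebra $C$.

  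For the second assertion, I would invoke Lemma and
  Definition~\ref{lem:copointed}: the proper coalgebra $C$ factors
  through the inclusion of the copointed part of $M_\Lang$, inducing a
  coalgebra $C_c$ for that copointed functor on the same carrier, with
  predicate liftings obtained by restriction, and every state
  satisfies the same $\Lambda$-formulae in $C$ as in $C_c$. Hence the
  same $\Phi_0$ already witnesses satisfiability of $\Psi$ over a
  coalgebra for the copointed part.

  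I do not expect a genuine obstacle at the level of this theorem: the
  difficulty has been absorbed into the existence lemma, whose proof
  reduces --- via the fact that coherence and properness at a fixed
  $\Phi \in C_\Lang$ can be packaged as a single 0-1-step formula
  over $\pow C_\Lang$, together with 0-1-step completeness
  (Lemma~\ref{lem:can-zeroone-complete}) --- to the 0-1-step
  consistency claim of Lemma~\ref{consistency2}. The one delicate
  point sits there, namely the passage from Lemma~\ref{consistency1}
  (coherence plus properness on the \emph{expressible} sets
  $\hat{\phi}$) to Lemma~\ref{consistency2} (properness with respect
  to the entire principal ultrafilter $\dot\Phi$), which is handled by
  eliminating the extra atoms of
  $\dot\Phi \setminus \{\hat{\phi} \mid \phi \in \Phi\}$ from a
  hypothetical 0-1-step refutation one at a time, using the
  monotone-solvability clause
  Lemma~\ref{boolEq}.(\ref{item:monot-solve}). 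Granting
  Lemma~\ref{existenceLemma}, the theorem is the usual
  Lindenbaum-plus-truth-lemma assembly.
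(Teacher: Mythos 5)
Your proposal is correct and follows essentially the same route as the paper: Lindenbaum lemma, then the existence lemma to obtain a coherent proper coalgebra on $C_\Lang$, then the truth lemma, with the copointed-part assertion discharged via Lemma and Definition~\ref{lem:copointed}. Your remarks on where the real difficulty lies (the existence lemma, and in particular the passage from Lemma~\ref{consistency1} to Lemma~\ref{consistency2} via Lemma~\ref{boolEq}) accurately reflect the paper's development.
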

\noindent By the equivalence between the canonical structure and
neighbourhood semantics as outlined in Section~\ref{sec:can-struct},
this result implies Surendonk's strong completeness result for
neighbourhood semantics~\cite{Surendonk97}:
\begin{corollary}[Strong completeness over neighbourhood
  semantics] \label{corollary} Every non-iterative logic
  $\Lang=(\Lambda,\Axioms)$ is (sound and) strongly complete over its
  neighbourhood semantics, i.e.\ over the class of neighbourhood
  frames that satisfy the axioms in~$\Axioms$ as frame conditions.
\end{corollary}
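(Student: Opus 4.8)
The plan is to obtain the corollary by transferring the coalgebraic strong completeness theorem (Theorem~\ref{StrongCompletenes}) along the natural isomorphism $\theta\colon M_\Lang\to\neighb_\Lang$ recorded in Section~\ref{sec:can-struct}. Two ingredients are needed. First, the proper $\neighb_\Lang$-coalgebras are exactly the $\Lambda$-neighbourhood frames that validate the axioms in~$\Axioms$ as frame conditions; this is immediate from the definition $\neighb_\Lang=(\neighb_\Lambda\times\ultra)_\Axioms$ together with the general description of the proper $T_{\Axioms'}$-coalgebras of a subfunctor $T_{\Axioms'}$ given in Section~\ref{sec:zeroone-logic} (they are precisely the proper $T$-coalgebras satisfying the axioms in~$\Axioms'$ as frame conditions), specialized to $T=\neighb_\Lambda\times\ultra$, whose proper coalgebras are the $\Lambda$-neighbourhood frames by Example~\ref{expl:logics}.\ref{nft}. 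Second, $\theta$ is an isomorphism of $\Lambda$-structures: it commutes with transition maps and carries the predicate liftings $\Sem{L}_X A=\{\Phi\mid LA\in\Phi\}$ on the $M_\Lang$ side to the neighbourhood liftings on the $\neighb_\Lang$ side. Consequently, for any proper $M_\Lang$-coalgebra $C$ and its image $C'=\theta[C]$, a state satisfies the same $\Lambda$-formulae in $C$ as in $C'$; and since $\theta$ is an isomorphism, every proper $\neighb_\Lang$-coalgebra arises this way, so the two classes of models induce the same notion of satisfiability.

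Given this, the argument is short. For strong completeness, let $\Phi$ be a consistent set of $\Lambda$-formulae. By Theorem~\ref{StrongCompletenes}, $\Phi$ is satisfied at some state of a proper $M_\Lang$-coalgebra; pushing that model forward along~$\theta$ yields a proper $\neighb_\Lang$-coalgebra, i.e.\ by the first ingredient a $\Lambda$-neighbourhood frame satisfying~$\Axioms$ as frame conditions, with a state satisfying all of~$\Phi$. For soundness, recall from the discussion following Lemma~\ref{01stepTruthLemma} that $\Axioms$ is $0$-$1$-step sound for $\Struct_\Lang$ (a maximally $0$-$1$-step consistent set contains all $\pow X$-instances of the axioms); Lemma~\ref{soundnessImplication} then gives that every proper $M_\Lang$-coalgebra validates each axiom as a frame condition, and transferring along~$\theta$ once more yields the same over all $\Lambda$-neighbourhood frames satisfying~$\Axioms$, which is exactly soundness over neighbourhood semantics.

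There is no genuine obstacle here — the statement is a corollary — but two points want a word of care rather than a calculation. One is the standing restriction to unary modalities in Sections~\ref{sec:zeroone-logic}--\ref{sec:str-compl}, whereas the corollary is stated for arbitrary similarity types; as already remarked, removing the restriction is pure bookkeeping (tuples of subsets in place of single subsets, liftings and neighbourhood systems indexed by arities), so the construction of $M_\Lang$, the $0$-$1$-step truth and Lindenbaum lemmas, the existence lemma, Theorem~\ref{StrongCompletenes}, and the componentwise definition of~$\theta$ all carry over to the general case unchanged. The other is the compatibility of~$\theta$ with the predicate liftings, which is the only thing one has to spell out explicitly; it is an unwinding of $\theta_X(\Phi)_L=(\{A\mid LA\in\Phi\},\{A\mid A\in\Phi\})$ against the two definitions of $\Sem{L}$ and is routine.
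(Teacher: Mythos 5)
Your proposal is correct and follows essentially the same route as the paper's own proof: transfer Theorem~\ref{StrongCompletenes} along the natural isomorphism $\theta\colon M_\Lang\to\neighb_\Lang$, using that $\theta$ is compatible with the predicate liftings and that proper $\neighb_\Lang$-coalgebras are exactly the $\Lambda$-neighbourhood frames satisfying~$\Axioms$ as frame conditions. The paper's appendix additionally spells out the details you flag as routine (the explicit inverse $\theta^{-1}(N,\alpha)=\{\psi\mid(N,\alpha)\models\psi\}$, the subfunctor property of $\neighb_\Lang$, and the lifting compatibility), but these are exactly the verifications you identify, so nothing essential is missing.
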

\begin{remark}
  Surendonk's proof~\cite{Surendonk97} shows the existence of a
  suitable superalgebra~$C$ of the powerset algebra of the canonical
  model, going via the first-order model theory of modal algebras,
  specifically via compactness of first-order logic, and demonstrates
  that a suitable neighbourhood structure on the canonical model can
  be inherited from~$C$. Contrastingly, our proof works directly on
  the canonical model, and relies mostly on basic facts on solutions
  of equations in Boolean algebras that are developed from
  Lemma~\ref{boolEq}.
\end{remark}
\section{Application to Deontic Logic}\label{sec:deontics}

Deontic logic is concerned with modalities of obligation, such as
$O\phi$ `$\phi$ is obligatory' and $O(\phi|\psi)$
`given~$\psi$,~$\phi$ is obligatory' (conditional obligation). It is
faced with with specific challenges; e.g., conditional obligations are
defeasible, and it is therefore nontrivial to come with principles of
\emph{factual detachment}, i.e.\ of deriving actual from conditional
obligations, and moreover one needs to avoid the \emph{deontic
  explosion} that would be caused by unrestricted normality of the
obligation modality: If one had an axiom
$(Oa\land Ob)\to O(a\land b)$, then a single dilemma
($Oa\land O\neg a$) would cause impossible obligations ($O\bot$),
making everything obligatory if additionally monotonicity is
imposed. Recent developments in deontic logic often are driven mostly
axiomatically, so that the only available semantics is neighbourhood
semantics.

As an example, we treat axioms for factual detachment proposed by
Straßer~\cite{DBLP:journals/japll/Strasser11}. The full logical
framework uses principles of adaptive logic to govern the actual
factual detachment mechanism; here, we concentrate on the underlying
deontic logics called the \emph{base logics} of the framework. The
logic distinguishes specific types of obligation respectively called
\emph{instrumental} and \emph{proper} (we refer
to~\cite{DBLP:journals/japll/Strasser11} for their philosophical
definition), and has modalities $O(- \mid -)$ (binary conditional
obligation), $O^{i}$ (unary instrumental obligation), $O^{p}$ (unary
proper obligation), and $\bullet^{i}O(- \mid -)$ ,
$\bullet^{p}O(- \mid -)$; the latter two binary modalities serve to
block factual detachment of instrumental and proper obligations from
conditional obligations, respectively. Corresponding dual permission
modalities are denoted by replacing~$O$ with~$P$. Various
axiomatizations are developed as extensions of Goble's logic
\textbf{CPDM}, which is aimed at avoiding the deontic explosion and is
axiomatized in rank~1~\cite{Goble04}. In the online
appendix~\cite{Strasser-Appendix}
to~\cite{DBLP:journals/japll/Strasser11}, it is shown that two such
logics \textbf{CDPM.2d$^+$} and \textbf{CDPM.2e$^+$} are weakly
complete w.r.t.\ neighbourhood semantics when nesting of modalities is
excluded. These logics are non-iterative; they include congruence
rules and various rank-1 axioms that we refrain from listing in full,
and properly non-iterative axioms
% \begin{flalign*}
% 	&\neg O(\bot \mid A) & \tag{CP}\\
% 		&P(A \wedge B \mid C) \rightarrow ((O(A \mid C) \mid O(B \mid C)) \rightarrow O(A \wedge B \mid C)) \tag{CPAND}\\
% 		&(O(B \mid A) \wedge P(B \wedge C \mid A)) \rightarrow O(B \mid A \wedge C) \tag{WRM}\\
% 		&O(\top \mid \top) \tag{CN}\\
% 		&O(B \mid A) \rightarrow O(A \mid A) \tag{QR}\\
% 		&O(A \mid B \wedge C) \rightarrow O(B \rightarrow A \mid C) \tag{S}\\
% 		\intertext{together with the rule}
% 		&\text{ if } \vdash B \rightarrow C \text{ then }\vdash P(B \mid A) \rightarrow (O(B \mid A) \rightarrow O(C \mid A)) \tag{RCPM}
% \end{flalign*}
% while the logic $\textbf{CDPM.2d}^{+}$ is obtained from
% \textbf{CDPM.2c} by dropping $(QR)$ and introducing the additional
% axioms
\begin{align*}
		&(O(a \mid b) \wedge b \wedge \neg \bullet^{p}O(a \mid b)) \rightarrow O^p a\tag{FDp}\\
		&(O(a \mid b) \wedge b \wedge \neg \bullet^{i}O(a \mid b)) \rightarrow O^i a\tag{FDi}\\
		&(O(a \mid b) \wedge \neg a \wedge b) \rightarrow \bullet^iO(a\mid b)\tag{fV}\\
		& ((P(\neg a \mid b \wedge c) \vee O(\neg a \mid b \wedge c))\tag{Ep}\\
		&\qquad\wedge b \wedge c \wedge P(b \wedge c \mid b) \wedge O(a \mid b)) \rightarrow \bullet^pO(a \mid b)\\
		& ((P(\neg a \mid b \wedge c) \vee O(\neg a \mid b \wedge c)) \tag{oV-Ei}\\
		&\qquad\wedge b \wedge c \wedge O(a \mid b))\rightarrow \bullet^iO(a \mid b)
\end{align*}
where we have converted (Ep) and (oV-Ei) from rules to axioms
(Remark~\ref{rem:rules}). E.g.\ (FDp) says that we can detach a proper
obligation $O^pa$ from a conditional $O(a\mid b)$ if this is not
blocked and $b$ is actually the case, and $(fV)$ say that detaching an
instrumental obligation $O^ia$ from a conditional obligation
$O(a\mid b)$ is blocked if the obligation is factually violated
($\neg a\land b$). By Theorem~\ref{StrongCompletenes}, the fully modal
versions (with nested modalities) of both \textbf{CDPM.2d$^+$} and
\textbf{CDPM.2e$^+$} are strongly complete w.r.t.\ their canonical
coalgebraic semantics (and, by Corollary~\ref{corollary} or cited
previous results~\cite{Surendonk97}, w.r.t.\ their neighbourhood
semantics).

\section{Conclusion and Future Work}

We have shown that every non-iterative modal logic is strongly
complete over a canonical coalgebraic semantics, thus in particular
providing a coalgebraic perspective on the known result that
non-iterative modal logics are strongly complete over neighbourhood
semantics~\cite{Surendonk97}. A fine point in the coalgebraic
semantics is that conceptually, the proof needs to use weakly
copointed functors, equipped with a natural transformation into the
ultrafilter functor instead of the identity functor like copointed
functors, to incorporate non-iterative frame conditions, instead of
copointed functors as one would expect. That is, the natural
generalization of the construction for the rank-1
case~\cite{DBLP:journals/logcom/SchroderP10}, which uses maximally
consistent sets in the so-called 0-1-step logic, produces only a
weakly copointed functor. Ex post, however, our main result then does
imply completeness w.r.t.\ a copointed subfunctor.  We have
illustrated these results on deontic logics allowing factual
detachment~\cite{DBLP:journals/japll/Strasser11}, obtaining that these
logics are strongly complete over their canonical coalgebraic
semantics. It will be interesting to connect our results to
coalgebraic ultrafilter extensions~\cite{DBLP:conf/calco/KupkeKP05}
and the coalgebraic Goldblatt-Thomason
theorem~\cite{DBLP:conf/calco/KurzR07}.

\bibliographystyle{aiml20}
%\nocite{*}
\bibliography{aiml20}

\iffull

%%Appendix.
%% Remove the \Appendix command if an
%% appendix is not required.
\newpage
\Appendix

\section{Omitted Proofs}\label{app:proofs}

\begin{proof*}{Proof of Lemma~\ref{lem:copointed}.}
\noindent We have to show that $Tf(t)\in T_cY$ for $f\colon X\to Y$
and $t\in T_cX$. By naturality of~$\varepsilon$, this amounts to
showing that $\ultra f$ preserves principal ultrafilters. But this is just
naturality of the unit~$\eta$ of the ultrafilter monad. The remaining
claims are then clear.
\end{proof*}

\begin{proof*}{Proof of Lemma~\ref{soundnessImplication}.} Let
  $\sigma$ be an $\form(\Lambda)$-substitution. Let
  $C = (X, \xi)$ be a proper $T$-coalgebras and
  let $\hat{\sigma}$ be the $\pow X$-valuation with
  $\hat{\sigma}(a) = \llbracket \sigma(a) \rrbracket_C$ for all
  $a \in V$. By definition of 0-1-step soundness we have
  $TX \models ^{\zeroone}_X \psi\hat{\sigma}$. We have to show that
  $\psi\sigma$ is valid.  We prove the stronger claim that for  we have
  \begin{equation*}
    \{x \in X \mid \xi(x) \in \llbracket \phi\hat{\sigma}
    \rrbracket^{\zeroone}_X\} = \llbracket \phi\sigma \rrbracket_C
  \end{equation*}
  by induction over $\phi\in\Prop(\Lambda(\Prop(V))\cup V)$. The
  Boolean cases are trivial. The case for modal operators is just by
  expanding definitions: We have
  $\llbracket (L\phi)\hat{\sigma} \rrbracket^{\zeroone}_X= \llbracket
  L \rrbracket (\llbracket \phi\hat{\sigma} \rrbracket)=\llbracket L
  \rrbracket (\llbracket \phi \sigma \rrbracket_C)$, where the last
  step is by induction, and
  $\xi(x) \in \llbracket L \rrbracket (\llbracket \phi \sigma
  \rrbracket_C)$ iff $x\models_C (L\phi)\sigma$.

  For the case $\phi = a \in V$, we have to show
  $\{ x \in X \mid \xi(x) \in \iota(\hat{\sigma}(a))\} =
  \llbracket\sigma(a) \rrbracket_C$. So let $x \in X$. Then
  \begin{align*}
    & \xi(x) \in \iota(\hat{\sigma}(a)) \\
    &\iff \hat{\sigma}(a)\in\varepsilon(\xi(x)) &&\by{definition}\\
    &\iff\hat{\sigma}(a)\in \dot x &&\by{$\xi$ proper}\\
    &\iff x \in \hat{\sigma}(a)\\
    &\iff x \in \llbracket \sigma(a) \rrbracket_C
  \end{align*}
\end{proof*}

% \begin{proof*}{Proof of Lemma~\ref{truthLemma} (Truth Lemma).}
% 	induction over $\phi$, where the step for modal operators is by coherence.
% \end{proof*}

\begin{proof*}{Proof of Lemma~\ref{lem:axiom-completeness}}
  Let $\psi$ be a 0-1-step formula over~$\pow X$ such that
  $T_{\Axioms'}X\models^{\zeroone}_X\psi$ One shows analogously to
  \cite[Proposition 23]{DBLP:journals/jlp/Schroder07} that the
  0-1-step logic has the finite (in fact, exponential) model property;
  we can thus assume that~$X$ is finite. Since~$\Axioms'$ mentions
  only finitely many modality, this implies that there are, up to
  propositional equivalence, only finitely many different
  $\Prop(\pow X)$-instances of the axioms in~$\Axioms'$; let $\phi$
  denote the conjunction of these finitely many instances. Then
  $TX\models^{\zeroone}_X\phi\to\psi$. By 0-1-step completeness of
  $\Axioms$, it follows that $\nondash\phi\to\psi$, and hence
  $\nondash[\Lang']\psi$ for $\Lang'=(\Lambda,\Axioms\cup\Axioms')$,
  as required.
\end{proof*}

%\section{Proof of Strong Completeness}
%Proof is cited
\iffalse
\begin{proof*}{Proof of Lemma~\ref{allDerivable}.} This claim
	is equivalent to the claim that $\phi\sigma$ is an $\Lang$-consistent
	formula iff $\phi\hat{\sigma}$ is 0-1-step satisfiable. We prove the
	stronger claim $\widehat{\phi\sigma} = \llbracket \phi\hat{\sigma}
	\rrbracket $ by induction over $\phi$. The case $\phi = \bot$ is trivial.
	For the base case $a \in V$ we have $\widehat{\phi\sigma}
	=\widehat{\sigma(a)} = \hat{\sigma}(a)  = \llbracket \phi\hat{\sigma}
	\rrbracket $.  For the case $\phi = \psi_1 \wedge \psi_2$ we assume that
	the claim holds for $\psi_1$ and $\psi_2$. Then $\llbracket
	\psi_1\hat{\sigma} \rrbracket \cap \llbracket \psi_2\hat{\sigma} \rrbracket
	= \llbracket \phi\hat{\sigma} \rrbracket = \hat{\phi\sigma}$ by the
Hintikka property of maximally consistent sets. Similar reasoning can be
applied to the case $\phi = \neg \psi$.  \end{proof*}
\fi

\begin{proof*}{Proof of Lemma~\ref{weakSubstLemma}}
  Let $\kappa$ be the $U$-substitution such that $\tau = \sigma\kappa$
  and assume $\Phi\sigma \vdash_{PL} \psi\sigma$. Then by the
  substitution lemma of propositional logic it follows that
  $\Phi\sigma\kappa \vdash_{PL} \psi\sigma\kappa$
\end{proof*}

\begin{proof*}{Proof of Lemma~\ref{crr}.} For each equivalence class
  $[a]_\sigma$ of the equivalence relation $\sim_\sigma$ on~$V$ given
  by $a\sim_\sigma b$ iff $\sigma(a) = \sigma(b)$, fix a
  representative $v([a]_\sigma)$, and let $\tau'$ be the
  $W$-substitution defined by $\tau'(a) = \tau(v([a]_\sigma))$. Then
  $\Phi\sigma\vdash_{PL} \psi\sigma$ implies
  $\Phi\tau' \vdash_{PL} \psi\tau'$ by Lemma
  \ref{weakSubstLemma}. Lastly, $\Phi\tau \cup \Psi$ entails
  $\Phi\tau'$ and $\{\psi\tau'\} \cup \Psi $ entails $\psi\tau$.
\end{proof*}

\begin{proof*}{Proof of Lemma~\ref{01stepTruthLemma}.}
  Induction over $\psi$,
  where the cases for Boolean operators are by the Hintikka property of
  maximally consistent sets. The cases for modal operators and formulae of
  the form $\psi \in \pow X$ are by construction.
\end{proof*}

\begin{proof*}{Proof of Lemma~\ref{lem:can-zeroone-complete}}
  We have to show that every one-step consistent formula is
  satisfiable. This is immediate from the 0-1-step Lindenbaum
  lemma~\ref{lem:zeroone-lindenbaum} and the 0-1-step truth
  lemma~\ref{01stepTruthLemma}.
\end{proof*}

\begin{proof*}{Proof of Lemma~\ref{consistency1}.}
  Take $V = \{a_\phi \mid \phi \in \form(\Lambda)\}$, let $\sigma$ be
  the $\form(\Lambda)$-substitution given by $\sigma(a_\phi) = \phi $,
  and let $\hat{\sigma}$ be the $\pow C_\Lang$-valuation given by
  $\hat{\sigma}(a_\phi) = \hat{\phi}$. Lastly, put
  \begin{equation*}
    \Psi =  \{ La_\phi \mid L\phi \in \Phi \} \cup  \{ \neg
    La_\phi \mid \neg L\phi
    \in \Phi \}  \cup \{ a_\phi \mid \phi \in \Phi \}.
  \end{equation*}
  \noindent The claim states that $\Psi\hat{\sigma}$ is is 0-1-step
  consistent. By Lemma~\ref{ci}, this follows from the fact that
  $\Psi\sigma = \Phi$ is consistent.
\end{proof*}
      
\begin{proof*}{Proof of Lemma~\ref{consistency2}.}
  Let
  $V = \{a_\phi \mid \phi \in \form(\Lambda)\} \cup \{a_A \mid A \in
  \pow C_\Lang\}$, and let~$\tau$ be the $\pow C_\Lang$-valuation
  given by $\tau(a_\phi) = \hat{\phi}$ for $\phi \in \form(\Lambda)$
  and $\tau(a_A) = A$ for $A \in \pow C_\Lang$. Put
  \begin{equation*}
    \Psi = \{ La_\phi \mid L\phi \in \Phi \} \cup 
    \{ \neg L a_\phi \mid \neg L\phi \in \Phi \} \cup \{ a_\phi \mid
    \phi \in \Phi \}
  \end{equation*}
  By Lemma \ref{consistency1}, $\Psi\tau$ is 0-1-step consistent. We
  have to show that $\Psi\tau\cup\dot\Phi$ is 0-1-step
  consistent. Assume the contrary, i.e.\
  $\Psi\tau\cup\dot\Phi\nondash\bot$.  Then we have a finite subset
  $\Psi_0\subseteq\Psi$, a finite set
  $\Gamma\subseteq\{a_A\mid A\in\dot\Phi\}$, a finite set~$\Theta_1$
  of axioms, and a finite set $\Theta_2\subseteq\Prop(V)$, which we
  can assume to be instantiated by $\kappa\tau$ for a
  $V$-substitution~$\kappa$ (every subset of~$C_\Lang$ has a name
  in~$V$, and we can disjointly rename variables in axioms to ensure
  that we can use the same substitution~$\kappa$ throughout), such
  that the formulae in $\Theta_2\kappa\tau$ are propositionally valid
  over~$\pow X$, and
  \begin{equation*}
    (\Psi_0 \cup \Gamma \cup (\Theta_1\cup\Theta_2)\kappa)\tau \vdash_{PL} \bot.
  \end{equation*}
  % We can assume moreover that
  % $\Psi_0\tau \cap \Gamma\tau = \emptyset$.
  We proceed by induction over $\vert \Gamma \vert$. For
  $\vert \Gamma \vert = 0$ we obtain $\Psi\tau \nondash \bot$,
  contradicting Lemma \ref{consistency1}. Now let
  $\vert \Gamma \vert = n>0$. By Lemma \ref{substLemma}, we have
  $(\Psi \cup \Gamma \cup (\Theta_1\cup\Theta_2)\kappa)\tau' \vdash_{PL} \bot$ for
  any $\Prop(\pow C_\Lang)$-substitution $\tau'$ that solves the
  following system of Boolean equations:
  \begin{itemize}
    
  \item For all subformulae $L \rho, L\rho'$ in
    $(\Psi_0 \cup \Gamma \cup (\Theta_1\cup\Theta_2)\kappa)$ such that
    $L \rho\tau=L\rho'\tau$ in $\Lambda(\pow X)$, we must have
    $L \rho\tau'=L\rho'\tau'$ in $\Lambda(\pow X)$. This amounts to an
    equation $\rho = \rho'$.

  \item For all occurrences of subformulae of the form $\rho$ and
    $\rho'$ in
    $(\Psi_0 \cup \Gamma \cup (\Theta_1\cup\Theta_2)\kappa)$ that do
    not lie beneath a modal operator and such that
    $\rho\tau=\rho'\tau$, we must have $\rho\tau'=\rho'\tau'$ in
    $\pow X$. This amounts to an equation $\rho = \rho'$.

  \end{itemize}
  Now pick $a\in \Gamma$. Define the $\pow C_\Lang$-valuation~$\sigma$
  to be the restriction of~$\tau$ to
  $\Gamma\cup\{a_\phi\mid\phi\in\form(\Lambda)\}$. By
  Lemma~\ref{boolEq}.\ref{item:ba-solve}, since $\tau$ solves the
  above system of equations, there is a
  $\Prop(V_0\cup\Gamma \setminus \{ a\})$-substitution $\sigma'$ on
  $\{a\}$, where $V_0$ is the set of variables occurring in $\Psi$,
  such that the above conditions hold for $\tau' = \sigma'\sigma$. By
  Lemma \ref{boolEq}.\ref{item:monot-solve}, since $\Phi \in \tau(a)$
  it follows that $\Phi \in \llbracket a \sigma'\sigma
  \rrbracket$. It then follows that $a\sigma'\sigma$ is
  $\Lang$-derivable from $(\Psi \cup \Gamma\setminus \{a\})\sigma$:
  Assume without loss of generality that $a\sigma'\sigma$ is in CNF;
  then every clause of $a\sigma'\sigma$ has to contain a literal
  whose interpretation contains $\Phi$, and every such literal is
  contained in $(\Psi\cup\Gamma\setminus\{a\})\sigma$. We thus have
  $(\Psi \cup \Gamma \setminus \{a\})\sigma \nondash \bot$,
  contradicting the induction hypothesis.
\end{proof*}

\begin{proof*}{Proof of Lemma~\ref{existenceLemma} (Existence lemma).}
  We define a coherent proper coalgebra structure
  $\xi:C_\Lang\to M_\Lang C_\Lang$ as follows. Let $\Phi\in
  C_\Lang$. By Lemma~\ref{consistency2} and the 0-1-step Lindenbaum
  lemma (Lemma~\ref{lem:zeroone-lindenbaum}),
  there is $\Psi\in M_\Lang C_\Lang$ such that
  \begin{equation*}
    \Psi \supseteq \{ L\hat{\phi} \mid L\phi \in \Phi \}  \cup \{
    \neg L\hat{\phi} \mid \neg L\phi \in \Phi \}  \cup \; \dot{\Phi}
  \end{equation*}
  We put $\xi(\Phi)=\Psi$. It is then clear that~$\xi$ is coherent; it
  remains to show that~$\xi$ is proper, i.e.\ that $A\in\Psi$ iff
  $\Phi\in A$, for $A\subseteq C_\Lang$: `If' holds by
  construction. For `only if', assume $\Phi\notin A\in\Psi$. But then
  $\Phi\in C_\Lang\setminus A$, so $C_\Lang\setminus A\in\Psi$,
  contradicting 0-1-step consistency of~$\Psi$.
\end{proof*}

\begin{proof*}{Proof of Theorem~\ref{StrongCompletenes}.}
  The existence lemma shows that there is a canonical coalgebra
  $C = (C_\Lang, \xi)$ in which, by the truth lemma, every maximally
  consistent set is satisfiable. By the Lindenbaum lemma every
  consistent set is contained in such a maximally consistent set and
  therefore also satisfiable.
\end{proof*}

\begin{proof*}{Details for the proof of Corollary~\ref{corollary}.}
  The weak copoint on $\neighb_\Lambda\times\ultra$ is just the second
  projection, while predicate liftings are defined like in
  Example~\ref{expl:logics}.\ref{nft}, on the first projection. The
  precise definition of $\neighb_\Lang$ is
  \begin{align*}
    \neighb_\Lang X = \{(N,\alpha)\in \neighb_\Lang X\times\ultra X\mid
                      (N,\alpha)\models^{\zeroone}_X\psi\sigma&\text{ for all $\psi\in\Axioms$}\\
    &\text{and all $\pow X$-substitutions~$\sigma$}\}.
  \end{align*}
  To see that $\neighb_\Lang$ is a subfunctor of
  $\neighb_\Lambda\times\ultra$, let $f\colon X\to Y$, and let
  $(N,\alpha)\in\neighb_\Lang X$; we have to show
  $(\neighb_\Lambda f(N),\ultra f(\alpha))\in\neighb_\Lang Y$. This
  follows from the fact that
  \begin{equation*}
    (\neighb_\Lambda f(N),\ultra f(\alpha))\models^{\zeroone}_Y\phi \text{ iff } (N,\alpha)\models^{\zeroone}_X\phi\sigma_f,
  \end{equation*}
  for all $\phi\in\Prop(\Lambda(\pow Y)\cup\pow Y)$, where $\sigma_f$
  is the $\pow X$-substitution given by $\sigma_f(B)=f^{-1}[B]$ for
  $B\subseteq Y$. This is shown by induction on~$\phi$, with trivial
  Boolean cases and using naturality of predicate liftings in the
  modal cases; the base case $B\subseteq Y$ is just by definition of
  $\ultra f$. % LHS iff B in Uf(alpha) iff f^{-1}[B] in alpha iff RHS.
  It is clear that $\Lang$ is sound for the $\Lambda$-structure based
  on~$N_\Lang$ obtained by restricting the original predicate
  liftings.

  Naturality of the transformation~$\theta$ is clear. The inverse
  transformation $\theta^{-1}$ is defined on
  $(N,\alpha)\in\neighb_\Lang X$ by
  \begin{equation*}
    \theta^{-1}(N,\alpha) = \{\psi\in\Prop(\Lambda(\pow X)\cup\pow X)\mid (N,\alpha)\models\psi\}.
  \end{equation*}
  Note that $\theta^{-1}(N,\alpha)$ is satisfied by $(N,\alpha)$,
  hence 0-1-step consistent by soundness; it is then clear that
  $\theta^{-1}(N,\alpha)$ is maximally consistent, i.e.\ in
  $M_\Lang X$. One easily checks that $\theta$ and $\theta^{-1}$ are
  really mutually inverse.

  It is clear that~$\theta$ commutes with the respective predicate
  liftings for $N_\Lang$ and $M_\Lang$; it follows that every proper
  $M_\Lang$-coalgebra $(X,\xi)$ satisfies the same $\Lambda$-formulae
  as the induced proper $N_\Lang$-coalgebra $(X,\theta\circ\xi)$, and
  similarly in the converse direction; thus, $\Lang$ is sound and
  strongly complete over proper $N_\Lang$-coalgebras. Since properness
  of an $N_\Lang$-coalgebra $(X,\xi)$ implies that the second
  component $\alpha$ of $\xi(x)=(N,\alpha)$ is uniquely determined (as
  $\dot x$) for every $x\in X$, proper $N_\Lang$-coalgebras are just
  $\Lambda$-neighbourhood frames satisfying~$\Axioms$.
\end{proof*}
\fi

\end{document}